\newtheorem{thm}{Theorem}[section]
\newtheorem{rem}{Remark}[section]
\newtheorem{prp}{Proposition}[section]
\newtheorem{lem}{Lemma}[section]
\def\;{{\hspace{0.3ex};\hspace{0.5ex}}}
\def\,{{\hspace{0,3ex},\hspace{0.5ex}}}
\def\({{\hspace{1.2ex}(}}
\def\R{{\mathbb{R}}}
\def\Z{{\mathbb{Z}}}
\def\N{{\mathbb{N}}}
\def\C{{\mathbb{C}}}
\def\ph{{\varphi}}
\def\im{\mathop{\bf Im}\nolimits}
\def\re{\mathop{\bf Re}\nolimits}
\def\Tr{\mathop{\rm Tr}\nolimits}
\def\im{\mathop{\bf Im}\nolimits}
\def\R{\mathbb{R}}
\def\C{\mathbb{C}}
\def\im{\mathop{\rm Im}}
\def\re{\mathop{\rm Re}}
\def\QED{\mbox{\rule[0pt]{1.5ex}{1.5ex}}}
\def\endproof{\hspace*{\fill}~\QED\par\endtrivlist\unskip}
\def\Label#1{\label{#1}\ [\ #1\ ]\ }
\def\Label{\label}
\title{Quantum hypothesis testing for quantum Gaussian states: Quantum analogues of chi-square, t and F tests}
\author{Wataru Kumagai\footnote{kumagai@ims.is.tohoku.ac.jp}~~~~~Masahito Hayashi\footnote{hayashi@math.is.tohoku.ac.jp}}
\date{$^{*,\dagger}$Graduate School of Information Sciences, Tohoku University, Japan \\$^{\dagger}$Centre for Quantum Technologies, National University of Singapore, Singapore}
\begin{document}

\maketitle

\begin{abstract}
We treat quantum counterparts of testing problems whose optimal tests are given by $\chi^2$, $t$ and $F$ tests.
These quantum counterparts are formulated as quantum hypothesis testing problems concerning quantum Gaussian states families,
and contain disturbance parameters, which have group symmetry.
Quantum Hunt-Stein Theorem removes a part of these disturbance parameters, but 
other types of difficulty still remain.
In order to remove them,
combining quantum Hunt-Stein theorem and other reduction methods, 
we establish a general reduction theorem that reduces a complicated quantum hypothesis testing problem to a fundamental quantum hypothesis testing problem.
Using these methods, 
we derive quantum counterparts of $\chi^2$, $t$ and $F$ tests as optimal tests in the respective settings.
\end{abstract}

\section{Introduction}

In recent years, movement for the achievement of the quantum information processing technology has been activated. 
Since it is necessary to prepare a quantum state and to manipulate it accurately
for the quantum information processing, 
we need a proper method to decide whether the realized quantum state is the intended state.
For such a situation, we have to prepare two hypotheses.
One is the null hypothesis $H_0$, which corresponds to the undesired case.
The other hypothesis is called the alternative hypothesis $H_1$, which corresponds to the desired.
Then, both hypotheses $H_0$ and $H_1$ are incompatible with each other.
In order to guarantee the desired property, it is sufficient to show that
the null hypothesis $H_0$ is not true.
This type formulation is called quantum hypothesis testing,
which is the quantum counterpart of statistical hypothesis testing.

In this setting, when the hypothesis consists of one element,
it is called a simple hypothesis and can be easily treated even in the quantum case.
In particular, when both hypotheses $H_0$ and $H_1$ are simple, 
a construction method of an optimal test is guaranteed by the Neyman-Pearson lemma (\cite{Her}, \cite{Hol2}), 
and the asymptotic performance such as Stein's lemma (\cite{N-O},\cite{Hay6}),  Chernoff bound (\cite{N-S}, \cite{Aud.et al.}, \cite{Aud.et al.2}) and Hoeffding bound (\cite{Nag}, \cite{Hay4}, \cite{Aud.et al.2}) are studied. 
Such a case has been studied extensively, including the quantum case.
On the other hand, when the hypothesis consists of plural elements,
it is called a composite hypothesis.
In realistic cases, it is usual that either $H_0$ or $H_1$ is composite.
In the classical hypothesis testing, composite hypothesis cases are well studied.
Especially, testing problems for the Gaussian distribution frequently appear with various composite hypotheses. 
When there is disturbance in the basic hypothesis testing problems for the Gaussian distributions, optimal tests are given by $\chi^2$, $t$ and $F$ tests if the unbiasedness is offered for tests (\cite{Leh2}). 
These three kinds of testings are mostly applied in the classical hypothesis testing.
However, compared to classical hypothesis testing, the composite hypothesis case is not well studied in quantum hypothesis testing.
There exist only a few studies for testing entanglement among quantum hypothesis testing with a composite hypothesis (\cite{Hay1}, \cite{HMT}). 
In particular, no quantum counterparts of $\chi^2$, $t$ and $F$ tests have been studied.

In this paper, 
we focus on the quantum Gaussian state (\cite{Hol1}) that is a Gaussian mixture of quantum coherent states in a bosonic system.
Quantum Gaussian states $\{\rho_{\theta,N}\}_{\theta\in\C,N\in\R_{>0}}$ have two parameters called the mean parameter $\theta\in\C$ and the number parameter $N\in\R_{>0}$ corresponding to the mean parameter and the variance parameter of Gaussian distributions. 
This kind of states frequently appear in quantum optical systems.
For testing these states, we propose quantum counterparts of $\chi^2$, $t$ and $F$ tests for quantum Gaussian states.
The importance of Gaussian distributions relies upon the (classical) central limit theorem. 
That is, the central limit theorem guarantees that
$\chi^2$, $t$ and $F$ tests can be applied to the i.i.d.(independent and identical distributed) case
even though the distribution family is not the Gaussian distribution family.
Similarly, when many quantum states are independently and identically given, 
the quantum states are approximated by a quantum Gaussian state 
due to the quantum central limit theorem \cite{Petz1,Guta1,Guta2,Guta3,Hay2,Hay7,B-G}. 
Therefore, it can be expected that quantum Gaussian sates can play the same role of Gaussian distributions.
Hence, our quantum counterparts can be applied to many realistic quantum cases.

The main difficulty is treatment of disturbance parameters
because our quantum hypothesis testing problems contain them. 
For their proper treatment, we sometimes adopt the min-max criteria for the disturbance parameters.
That is, we optimize the worst value of the error probability concerning the disturbance parameters.
Since our problem is hypothesis testing,
our target is deriving Uniformly Most Powerful (UMP) tests in the respective problems.
Under this framework, 
we focus on unitary group representation on the disturbance parameters.
Although the group symmetry method has been much succeeded in state estimation\cite{Hay7,Bagan,Hol,Hol1,M-P,Aspachs,Hay8},
it can be applied to quantum hypothesis testing in a few cases\cite{Hay1,HMT,Hiai}.
Under the respective symmetries, the UMP min-max test can be obtained among invariant tests.
This argument in a more general setting is justified by quantum Hunt-Stein theorem\cite{Hol}.


However, there exists other type of difficulty except for disturbance parameters.
Hence, combining quantum Hunt-Stein theorem with other reduction methods, 
we establish a general reduction theorem that reduces a complicated quantum hypothesis testing problem to a fundamental quantum hypothesis testing problem.
This general theorem enables us to translate 
our quantum hypothesis testing problems to 
fundamental testing problems related to $\chi^2$, $t$ and $F$ tests. 
Another type of difficulty still remains even after the application of the above methods, 
but can be resolved by employing known facts in classical statistics.
Overall, we treat 8 quantum hypothesis testing problems 
for quantum Gaussian states families as given in Table 1.
Their treatment is summarized in Fig. \ref{f0}.

\begin{table}[htb]
 \begin{center}
  \begin{tabular}{|c||c|c||c|c|} \hline
  \multicolumn{1}{|c||}{Testing problem}
      & \multicolumn{2}{|c||}{Gaussian distribution } & \multicolumn{2}{|c|}{Quantum Gaussian state}\\ \hline \hline
  & $N:$ known & $N:$ unknown & $N:$ known & $N:$ unknown \\ \hline
  $|\theta|\le R_0~vs.~|\theta|> R_0$ & N-P lem. + MLR & $t$ test ($R_0=0$) & \textbf{5.1} & \textbf{5.2} ($R_0=0$) \\ \hline
  $\theta=\eta~vs.~\theta\ne\eta$  & N-P lem. + MLR & $t$ test & \textbf{6.1} & \textbf{6.2} \\ 
\hline \hline
  & $\theta:$ known & $\theta:$ unknown & $\theta:$ known & $\theta:$ unknown \\ \hline
  $N\le N_0~vs.~N> N_0$  & $\chi^2$ test & $\chi^2$ test & \textbf{7.1} & \textbf{7.2} \\ \hline
  $M=N~vs.~M\ne N$   & $F$ test & $F$ test & \textbf{8.1} & \textbf{8.2} \\ \hline
  \end{tabular}
 \end{center}
\caption{This table shows well-known results on classical hypothesis testing for Gaussian distributions 
and summary in this paper on quantum hypothesis testing for quantum Gaussian states. 
N-P lem. and MLR means Neyman-Pearson's lemma and monotone likelihood ratio, respectively.}
\end{table}

\begin{figure}[htbp]
\begin{center}
\scalebox{1.0}{\includegraphics[scale=0.8]{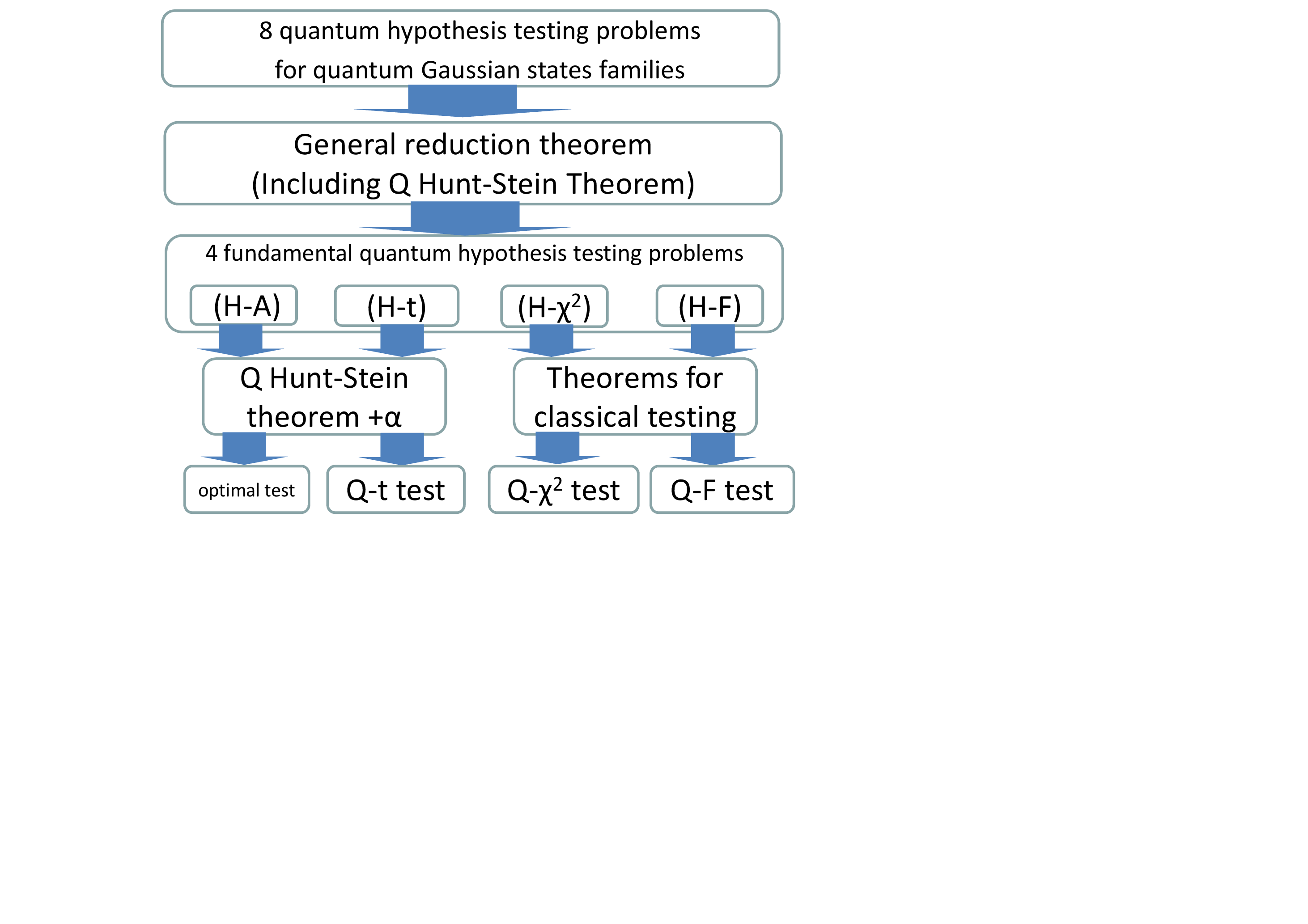}}
\end{center}
\caption{
Our strategy for 8 quantum hypothesis testing problems 
for quantum Gaussian states families.}
\Label{f0}
\end{figure}%

This paper is organized as follows. 
In section 2, we review fundamental knowledges concerning the bosonic system and the quantum Gaussian states.
In particular, several useful unitary operators are explained.
In section 3, the framework of quantum hypothesis testing are stated and we restate quantum Hunt-Stein theorem 
in a min-max sense with the non-compact case. Thereafter, we prepare some reduction methods on quantum hypothesis testing.
In section 4, basic facts in classical hypothesis testing are reviewed.
Using them, we derive optimal test for fundamental testing problems related to $\chi^2$ and $F$ tests. 

In sections 5 to 8, using quantum Hunt-Stein theorem and our reduction theorem, 
we derive the optimal tests concerning quantum Gaussian states $\rho_{\theta,N}$
in the respective settings, as is summarized in Table 1.
Especially, quantum counterpart of $t$ test is derived in section 5.
These optimal tests are constructed by quantum versions of $\chi^2$, $t$ and $F$ tests,
which are based on the number measurement.
In section 9, we numerically compare the performance for our optimal tests 
and tests based on the combination of the classical optimal test and 
the heterodyne measurement, which is the optimal measurement in the sense of state estimation.
This comparison clarifies the advantage of our optimal tests. 
In section 10, we treat the relation of our quantum $\chi^2$ test with quantum state estimation.
In section 11, we give some concluding remarks.

\section{Bosonic system and quantum Gaussian state}

In this section, we introduce quantum Gaussian states in the bosonic system and some operations on the system. 
A single-mode bosonic system is mathematically represented by $\mathcal{H}=L^2(\R)$,
which is spanned by sets $\{|k\rangle\}_{k\in\Z_{\ge 0}}$ of the $k$-th Hermitian functions $|k\rangle$.
A most typical example of a single-mode bosonic system is the one-mode photonic system,
in which, 
the vector $|k\rangle$ is called the $k$-photon number state because it regarded as the state corresponding to $k$ photons.
Then, 
$M_{N}=\{|k\rangle\langle k|\}_{k\in\Z_{\ge 0}}$ forms a POVM, which is called the \textbf{number measurement}.  
When the vector $|\xi)\in L^2(\R)~(\xi\in\C)$ is defined by
\[
|\xi):=e^{-\frac{|\xi|^2}{2}}\sum_{k=0}^{\infty}\frac{\xi^k}{\sqrt{k!}}|k\rangle,
\]
the state $|\xi)(\xi|$ is called the coherent state because it corresponds to the coherent light in the quantum optical system. Then, the \textbf{quantum Gaussian state} 
is defined as a Gaussian mixture of coherent states in the following way:
\[
\rho_{\theta, N}=\frac{1}{\pi N}\int_{\C}|\xi)(\xi|e^{-\frac{|\theta-\xi|^2}{N}}d\xi,
\]
The mean parameter $\theta$ corresponds to the mean parameter of the Gaussian distribution,
and the number parameter $N$ does to the variance parameter of the Gaussian distribution.

In the bosonic system $L^2(\R)$, we focus on the momentum operator $Q$ and the position operator $P$, 
which are defined by 
\[
(Qf)(x):=xf(x),~~ (Pf)(x):=-i\frac{df}{dx}(x).
\]
Then, 
the \textbf{mean shift operator}
\begin{eqnarray}\Label{mean shift operator}
W_\theta:=\exp i(-\sqrt{2}\textbf{Re}\theta P + \sqrt{2}\textbf{Im}\theta Q)~~~(\theta\in\C)
\end{eqnarray}
plays the same role as 
the constant addition for a random variable distributed to a Gaussian distribution. 
That is, it satisfies
\[
W_{\theta'}\rho_{\theta, N}W_{\theta'}^{\ast}=\rho_{\theta+\theta', N}
\]
for any $\theta, \theta'\in\C$. We prepare a lemma about the mean shift operator.

\begin{lem}\Label{inv1}
For a bounded self-adjoint operator $T$ on $L^2(\R)^{\otimes n}$, $T$ satisfies $(W_{\xi}\otimes I^{\otimes (n-1)})T(W_{\xi}\otimes I^{\otimes (n-1)})^{\ast}=T$ for any $\xi\in\C$ if and only if $T$ is represented as $T=I\otimes T'$ where $T'$ is a bounded self-adjoint operator on $L^2(\R)^{\otimes (n-1)}$.
\end{lem}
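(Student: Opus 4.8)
The plan is to prove the two implications separately, with the forward (``if'') direction being a one-line computation and the reverse (``only if'') direction resting on the irreducibility of the family $\{W_\xi\}_{\xi\in\C}$ acting on a single mode $L^2(\R)$. For the ``if'' direction, suppose $T=I\otimes T'$. Since each $W_\xi$ is unitary we have $W_\xi I W_\xi^{\ast}=I$, and therefore
\[
(W_\xi\otimes I^{\otimes(n-1)})(I\otimes T')(W_\xi\otimes I^{\otimes(n-1)})^{\ast}
=(W_\xi I W_\xi^{\ast})\otimes T'=I\otimes T'=T,
\]
so the invariance holds for every $\xi$, and $T$ is self-adjoint precisely when $T'$ is.

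For the ``only if'' direction I would first reduce the $n$-mode statement to a single-mode irreducibility fact by slicing. Writing $\mathcal{H}=L^2(\R)$ for the first factor and $\mathcal{K}=L^2(\R)^{\otimes(n-1)}$ for the remaining ones, for fixed $u,v\in\mathcal{K}$ define a bounded operator $T_{u,v}$ on $\mathcal{H}$ by $\langle\phi,T_{u,v}\psi\rangle:=\langle\phi\otimes u,T(\psi\otimes v)\rangle$; boundedness with $\|T_{u,v}\|\le\|T\|\,\|u\|\,\|v\|$ is immediate from boundedness of $T$. Expanding the hypothesis $(W_\xi\otimes I^{\otimes(n-1)})T(W_\xi\otimes I^{\otimes(n-1)})^{\ast}=T$ in matrix elements and using $(W_\xi\otimes I^{\otimes(n-1)})^{\ast}=W_\xi^{\ast}\otimes I^{\otimes(n-1)}$ yields $W_\xi T_{u,v}W_\xi^{\ast}=T_{u,v}$ for every $\xi\in\C$; that is, each slice $T_{u,v}$ lies in the commutant of $\{W_\xi\}_{\xi\in\C}$.

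The heart of the argument is then to show that this commutant consists only of scalars. By the defining formula \eqref{mean shift operator}, letting $\xi$ run over the imaginary axis produces all modulation operators $e^{isQ}$, and letting $\xi$ run over the real axis produces all translation operators $e^{-itP}$. An operator commuting with every $e^{isQ}$ commutes with the algebra of multiplication operators on $L^2(\R)$, which is maximal abelian, and so must itself be multiplication by some $m\in L^\infty(\R)$; commuting in addition with every translation $e^{-itP}$ forces $m$ to be translation invariant, hence constant almost everywhere. Therefore $T_{u,v}=c(u,v)\,I$ for a scalar $c(u,v)$.

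Finally I would reassemble the slices. The map $(u,v)\mapsto c(u,v)$ is sesquilinear and satisfies $|c(u,v)|\le\|T\|\,\|u\|\,\|v\|$, so it is represented as $c(u,v)=\langle u,T'v\rangle$ for a unique bounded operator $T'$ on $\mathcal{K}$. Unwinding the definitions gives $\langle\phi\otimes u,T(\psi\otimes v)\rangle=\langle\phi,\psi\rangle\langle u,T'v\rangle$ for all $\phi,\psi,u,v$, which is exactly $T=I\otimes T'$, and self-adjointness of $T$ transfers to $T'$. I expect the single-mode irreducibility step to be the main obstacle: one must treat the unbounded generators $P,Q$ with care and genuinely invoke that the multiplication operators form a maximal abelian subalgebra of $B(L^2(\R))$, rather than merely noting that the slice commutes with $P$ and $Q$ at a formal level.
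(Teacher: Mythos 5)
Your proof is correct and takes essentially the same route as the paper, whose entire argument is the one-line observation that $\{W_{\xi}\}_{\xi\in\C}$ is irreducible on $L^2(\R)$ so that Schur's lemma applies: your slicing-and-reassembly step is exactly a self-contained proof of the tensor-product form of Schur's lemma, and your maximal-abelian argument (modulations $e^{isQ}$ generate the multiplication algebra, translations $e^{-itP}$ force the multiplier to be constant) is a proof of the irreducibility the paper merely asserts. There are no gaps, and your decision to work with the unitary groups rather than the unbounded generators $P,Q$ is precisely the care a rigorous version of the paper's citation requires.
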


Since the representation $\{W_{\theta}\}_{\theta\in\C}$ of $\C$ on $L^2(\R)$ is irreducible, the above lemma is followed by the Schur's lemma.

Using the number operator $\hat{N}:=\displaystyle\sum_{k=0}^{\infty}k|k\rangle\langle k|$,
we define the \textbf{phase shift operator}
\begin{eqnarray}\Label{phase shift operator}
S_{e^{it}}:=\mathrm{exp}(it\hat{N})~~~(e^{it}\in S^1:=\{a\in\C||a|=1\}),
\end{eqnarray}
which satisfies
\[
e^{it\hat{N}}\rho_{\theta, N}e^{-it\hat{N}}=\rho_{e^{it}\theta, N}
\]
for any $t\in\R, \theta\in\C$. 
The phase shift operation plays the same role as the rotation for a random variable distributed to a two-dimensional Gaussian distribution. We prepare a lemma about the phase shift operator.

\begin{lem}\Label{inv2}
For a bounded self-adjoint operator $T$ on $L^2(\R)^{\otimes n}$, $T$ satisfies
\begin{equation}\Label{S-invariant}
(S_{t}\otimes I^{\otimes (n-1)})T(S_{t}\otimes I^{\otimes (n-1)})^{\ast}=T
\end{equation}
for any $t\in\R$ if and only if $T$ is represented as
\begin{equation}\Label{diagonal}
T=\displaystyle\sum_{k=0}^{\infty}|k\rangle\langle k|\otimes T_k
\end{equation}
where each $T_k$ is a bounded self-adjoint operator on $L^2(\R)^{\otimes (n-1)}$.
\end{lem}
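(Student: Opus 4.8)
The plan is to analyze $T$ through its matrix elements in the number basis $\{|k\rangle\}_{k\in\Z_{\ge 0}}$ of the first tensor factor, exploiting that $S_t=\exp(it\hat{N})$ is diagonal there with eigenvalues $e^{itk}$. Note that, in contrast to Lemma \ref{inv1}, Schur's lemma is unavailable here because the representation $t\mapsto S_t$ is abelian and highly reducible; the decisive structural feature is instead that $\hat{N}$ has a non-degenerate discrete spectrum whose eigenvalues $k$ separate the characters $t\mapsto e^{itk}$.

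For the easy direction, I would suppose $T$ has the form \eqref{diagonal} and compute termwise. Since $S_t|k\rangle\langle k|S_t^{\ast}=e^{itk}|k\rangle\langle k|e^{-itk}=|k\rangle\langle k|$, applying $S_t\otimes I^{\otimes(n-1)}$ and its adjoint to each summand leaves it unchanged, so $(S_t\otimes I^{\otimes(n-1)})T(S_t\otimes I^{\otimes(n-1)})^{\ast}=\sum_{k}|k\rangle\langle k|\otimes T_k=T$, which is \eqref{S-invariant}.

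For the converse I would fix $j,k\in\Z_{\ge 0}$ and arbitrary $\phi,\psi\in L^2(\R)^{\otimes(n-1)}$, and use $S_t^{\ast}|j\rangle=e^{-itj}|j\rangle$ together with the invariance rewritten as $T=(S_t\otimes I^{\otimes(n-1)})T(S_t\otimes I^{\otimes(n-1)})^{\ast}$ to obtain $\langle j\otimes\phi|T|k\otimes\psi\rangle=e^{it(j-k)}\langle j\otimes\phi|T|k\otimes\psi\rangle$ for every $t\in\R$. When $j\ne k$, choosing $t$ with $e^{it(j-k)}\ne 1$ forces this matrix element to vanish for all $\phi,\psi$, so only the diagonal blocks $j=k$ survive. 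I would then define $T_k$ by the bounded sesquilinear form $\langle\phi|T_k|\psi\rangle:=\langle k\otimes\phi|T|k\otimes\psi\rangle$, which yields a bounded operator with $\|T_k\|\le\|T\|$ via Riesz representation, and self-adjointness of $T$ gives $\langle\phi|T_k|\psi\rangle=\overline{\langle\psi|T_k|\phi\rangle}$, so each $T_k$ is self-adjoint.

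The main obstacle is the rigorous bookkeeping for the infinite-dimensional first factor: I must argue that the vanishing of the off-diagonal matrix elements actually produces the identity \eqref{diagonal} between bounded operators, with convergence in the strong (or at least weak) operator topology, rather than a merely formal block decomposition. I would handle this by observing that the finite sums $P_m:=\sum_{k\le m}|k\rangle\langle k|\otimes I^{\otimes(n-1)}$ are the spectral projections of $\hat{N}\otimes I^{\otimes(n-1)}$ increasing strongly to the identity, that the off-diagonal vanishing makes $T$ commute with each $P_m$, and that the compression $P_m T P_m$ equals $\sum_{k\le m}|k\rangle\langle k|\otimes T_k$; passing to the limit $m\to\infty$ then delivers the stated decomposition.
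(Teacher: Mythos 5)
Your proof is correct and takes essentially the same route as the paper's: the paper expands $T$ in the product number basis, deduces $e^{it_1(k_1-l_1)}t_{k_1,\ldots,l_n}=t_{k_1,\ldots,l_n}$ from the invariance so that all off-diagonal coefficients $k_1\ne l_1$ vanish, and reads off the blocks $T_k$ --- which is exactly your phase argument specialized to basis vectors. The only difference is that you make explicit the functional-analytic bookkeeping (defining $T_k$ via a bounded sesquilinear form with $\|T_k\|\le\|T\|$, and obtaining \eqref{diagonal} as a strong limit through the spectral projections $P_m$) that the paper leaves implicit in its formal coefficient expansion.
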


\begin{proof}
Let $T$ satisfy the equation (\ref{S-invariant}). When we denote $T$ by $\sum t_{k_1,\cdot\cdot\cdot,k_n, l_1,\cdot\cdot\cdot,l_n}|k_1,\cdot\cdot\cdot,k_n\rangle\langle l_1,\cdot\cdot\cdot,l_n|$, 
\[
e^{it_1(k_1-l_1)}t_{k_1,\cdot\cdot\cdot,k_n, l_1,\cdot\cdot\cdot,l_n}=t_{k_1,\cdot\cdot\cdot,k_n, l_1,\cdot\cdot\cdot,l_n}
\]
holds for $t_1\in\R,k_1,\cdot\cdot\cdot k_n,l_1,\cdot\cdot\cdot,l_n\in Z_{\ge 0}$ by the equation (\ref{S-invariant}). Hence we get $t_{k_1,\cdot\cdot\cdot,k_n, l_1,\cdot\cdot\cdot,l_n}=0$ for $k_1\ne l_1$. When we define as 
\[
T_k:=\displaystyle\sum_{k_2,\cdot\cdot\cdot,k_n,l_2,\cdot\cdot\cdot,l_n\in\Z_{\ge0}}t_{k,k_2,\cdot\cdot\cdot,k_n, k,l_2,\cdot\cdot\cdot,l_n}|k_2,\cdot\cdot\cdot,k_n\rangle\langle l_2,\cdot\cdot\cdot,l_n|,
\]
 $T$ is represented as (\ref{diagonal}), and each $T_k$ is clearly a bounded self-adjoint operator.

Conversely, when $T$ is represented as (\ref{diagonal}), then $T$ clearly satisfies (\ref{S-invariant}).
\end{proof}

In the $n$-mode bosonic system,
we denote 
the momentum operator and the position operator on the $j$-th bosonic system $\mathcal{H}_j=L^2(\R)~(j=1,\cdot\cdot\cdot, n)$ by $Q_j$ and $P_j$.
When the interaction Hamiltonian is given as $H_{j,j+1}:=i(a_{j+1}^{\ast}a_j-a_j^{\ast}a_{j+1})$
with $a_j:=\frac{1}{\sqrt{2}}(Q_j+iP_j)$, 
the coherent state $|\alpha_j)\otimes|\alpha_{j+1})$ on $\mathcal{H}_j\otimes\mathcal{H}_{j+1}$ is transformed to 
\[
\mathrm{exp}(itH_{j,j+1})|\alpha_j)\otimes|\alpha_{j+1})=|\alpha_j\mathrm{cos}t+\alpha_{j+1}\mathrm{sin}t)\otimes|-\alpha_j\mathrm{sin}t+\alpha_{j+1}\mathrm{cos}t).
\]
Hence, we get
\[
\mathrm{exp}(itH_{j,j+1})(\rho_{\theta_j,N}\otimes\rho_{\theta_{j+1},N})\mathrm{exp}(-itH_{j,j+1})=\rho_{\theta_j\mathrm{cos}t+\theta_{j+1}\mathrm{sin}t,N}\otimes\rho_{-\theta_j\mathrm{sin}t+\theta_{j+1}\mathrm{cos}t,N}
\]
for a quantum Gaussian state $\rho_{\theta_j,N}\otimes\rho_{\theta_{j+1},N}$. By choosing suitable interaction time $t_{1,2}, \cdot\cdot\cdot, t_{n-1,n}$, the unitary operator $U_n:=\exp(it_{1,2}H_{1,2})\cdot\cdot\cdot\exp(it_{n-1,n}H_{n-1,n})$ satisfies
\begin{eqnarray}\Label{concentrating}
U_n \rho_{\theta, N}^{\otimes n}U_n^{\ast}=\rho_{\sqrt{n}\theta, N}\otimes \rho_{0, N}^{\otimes(n-1)}
\end{eqnarray}
for any $\theta\in\C, N\in\R_{> 0}$. 
We call the above unitary operator $U_n$ satisfying the condition (\ref{concentrating}) the \textbf{concentrating operator}. In the same way, we can construct the unitary operators $U'_{m,n}$ and ${U''}_{m,n}$ satisfying
\begin{eqnarray}
U'_{m,n}(\rho_{\theta,N}^{\otimes m}\otimes\rho_{\eta,N}^{\otimes
n}){U'}_{m,n}^{\ast}
&=&\rho_{c_1(m\theta+n\eta),N}\otimes\rho_{c_0(\theta-\eta),N}\otimes\rho_{0,N}^{\otimes
(m+n-2)}\Label{U2},\\
U''_{m,n}(\rho_{\theta,N}^{\otimes m}\otimes\rho_{\eta,N}^{\otimes
n}){U''}_{m,n}^{\ast}
&=&
\rho_{\sqrt{m}\theta,M}\otimes\rho_{\sqrt{n}\eta,N}\otimes\rho^{\otimes(m-1)}_{0,M}\otimes\rho^{\otimes(n-1)}_{0,N}\Label{U3}
\end{eqnarray}
for any $\theta,\eta\in\C$ where $c_0:=\sqrt{m}\mathrm{cos}(\mathrm{tan}^{-1}(\sqrt{\frac{m}{n}}))$, $c_1:=\frac{1}{\sqrt{n}}\mathrm{cos}(\mathrm{tan}^{-1}(\sqrt{\frac{m}{n}}))$.

\begin{lem}\Label{prob}
When the number measurement $M_N$ is performed for the system with the quantum Gaussian state $\rho_{\theta, N}$,
the measured value $k$ is obtained with the probability
\[
P_{\theta, N}(k):=\langle k|\rho_{\theta, N}|k\rangle = \frac{1}{N+1}\left(\frac{N}{N+1}\right)^k e^{-\frac{|\theta|^2}{N+1}}L_k\left(-\frac{|\theta|^2}{N(N+1)}\right)
\]
where $L_k(x)=\displaystyle\sum_{j=0}^k\binom{k}{j}\frac{(-x)^j}{j!}$ is the $k$-th Laguerre polynomial.
\end{lem}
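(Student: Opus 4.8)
The plan is to insert the definitions of $\rho_{\theta,N}$ and of the coherent states directly into the matrix element $\langle k|\rho_{\theta,N}|k\rangle$ and reduce everything to a single Gaussian-type integral over $\C$. First I would record that $\langle k|\xi)=e^{-|\xi|^2/2}\xi^k/\sqrt{k!}$, so that $|\langle k|\xi)|^2=e^{-|\xi|^2}|\xi|^{2k}/k!$, which gives
\[
\langle k|\rho_{\theta,N}|k\rangle=\frac{1}{\pi N k!}\int_{\C}|\xi|^{2k}\exp\!\Big(-|\xi|^2-\frac{|\theta-\xi|^2}{N}\Big)\,d\xi .
\]

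Next I would complete the square in the exponent. Writing $\lambda:=(N+1)/N$, the quadratic form $-|\xi|^2-|\theta-\xi|^2/N$ equals $-\lambda|\xi-\theta/(N+1)|^2-|\theta|^2/(N+1)$, the constant term producing exactly the factor $e^{-|\theta|^2/(N+1)}$ appearing in the claim. Thus the problem reduces to evaluating
\[
I:=\int_{\C}|\xi|^{2k}\exp\!\Big(-\lambda\big|\xi-c\big|^2\Big)\,d\xi,\qquad c:=\frac{\theta}{N+1}.
\]

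The core of the proof is to show $I=\pi k!\,\lambda^{-(k+1)}L_k(-\lambda|c|^2)$. I would do this by passing to polar coordinates $\xi=re^{i\phi}$: the angular integral turns the cross term into a modified Bessel function, $\int_0^{2\pi}e^{2\lambda r|c|\cos\phi}\,d\phi=2\pi I_0(2\lambda r|c|)$, and then the standard table integral $\int_0^{\infty}r^{2k+1}e^{-\lambda r^2}I_0(\beta r)\,dr=\tfrac{k!}{2\lambda^{k+1}}e^{\beta^2/4\lambda}L_k(-\beta^2/4\lambda)$ with $\beta=2\lambda|c|$ yields $I$ after the two exponential factors cancel. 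Alternatively, and more self-containedly, I would expand $e^{-\lambda|\xi-c|^2}$ as a double power series in $\xi,\bar\xi$; only the rotationally symmetric terms survive the angular integration, leaving $\sum_m \frac{(k+m)!}{(m!)^2}(\lambda|c|^2)^m$, which I would identify with $k!\,e^{\lambda|c|^2}L_k(-\lambda|c|^2)$ via the hypergeometric identity ${}_1F_1(k+1;1;z)=e^{z}{}_1F_1(-k;1;-z)=e^{z}L_k(-z)$ (Kummer's transformation).

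Finally I would substitute $\lambda=(N+1)/N$ and $|c|^2=|\theta|^2/(N+1)^2$ into $I$ and combine with the prefactor: $\lambda^{-(k+1)}/N$ simplifies to $\frac{1}{N+1}\big(N/(N+1)\big)^k$, while $-\lambda|c|^2=-|\theta|^2/(N(N+1))$, giving the stated expression. The main obstacle is the evaluation of $I$: recognizing that the infinite series (or the Bessel integral) collapses to the finite-degree Laguerre polynomial, which hinges on the exact cancellation of the two $e^{\pm\lambda|c|^2}$ factors; everything else is bookkeeping of the Gaussian normalizations.
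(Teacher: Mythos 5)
Your proposal is correct and follows essentially the same route as the paper's proof: the coherent-state overlap, the completion of the square producing the factor $e^{-|\theta|^2/(N+1)}$, and the polar-coordinate evaluation of the shifted Gaussian moment via a Bessel-function table integral --- your formula for $\int_0^{\infty}r^{2k+1}e^{-\lambda r^2}I_0(\beta r)dr$ is exactly the paper's integral with $J_0(-2i|c|r)$ (cited to Gradshteyn--Ryzhik 6.631), since $J_0(-2ix)=I_0(2x)$, the only cosmetic difference being that the paper first rescales $\xi$ to a standard Gaussian while you carry the parameter $\lambda=(N+1)/N$ throughout. Your alternative self-contained evaluation by power-series expansion and Kummer's transformation ${}_1F_1(k+1;1;z)=e^{z}L_k(-z)$ is a sound replacement for the table citation, but it does not change the structure of the argument.
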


\begin{proof}
By using
\begin{align*}
& P_{\theta, N}(k)= \langle k|\rho_{\theta, N}|k\rangle
=\frac{1}{\pi k! N}\int_{\C}|\xi|^{2k} e^{-|\xi|^2} e^{-\frac{|\theta-\xi|^2}{N}}d\xi\\
=&\frac{e^{-\frac{|\theta|^2}{N+1}}} {\pi k! N} \int_{\C} |\xi|^{2k}
e^{-(1+\frac{1}{N})|\xi-\frac{\theta}{N+1}|^2}{d\xi} \\
=&\frac{1}{N+1}\left(\frac{N}{N+1}\right)^k e^{-\frac{|\theta|^2}{N+1}} \frac{1}{\pi k!} \int_{\C}
\Big|\xi+\frac{\theta}{\sqrt{N(N+1)}}\Big|^{2k} e^{-|\xi|^2}{d\xi}\\
\end{align*}
and
\begin{align}
&  \int_{\C}|\xi+c|^{2k}e^{-|\xi|^2}d\xi
=  \int_{\C}|\xi|^{2k}e^{-|\xi-|c||^2}d\xi\nonumber\\
=& \int_0^{\infty}\int_0^{2\pi}r^{2k} e^{-|re^{i\phi}-|c||^2}d\phi dr 
=  e^{-|c|^2}\int_0^{\infty}r^{2k+1}e^{-r^2}\int_0^{2\pi}e^{2|c|r cos\phi}d\phi dr\nonumber\\
=& e^{-|c|^2}\int_0^{\infty}r^{2k+1}e^{r^2}2\pi J_0(-2i|c|r) dr
=  \pi k! L_k(-|c|^2),
\Label{formula}
\end{align}
where $J_0(z)=\frac{1}{2\pi}\int_0^{2\pi}e^{iz cos\phi}d\phi$ is Bessel function of the first kind, the equation
\[
P_{\theta, N}(k)= \frac{1}{N+1}\left(\frac{N}{N+1}\right)^k e^{-\frac{|\theta|^2}{N+1}}L_k\left(-\frac{|\theta|^2}{N(N+1)}\right)
\]
holds. The fourth equation in (\ref{formula}) is derived by \cite[6.631,p738]{Table}.

\end{proof}

The probability $P_{\theta, N}$ in Lemma \ref{prob} can be calculated as follows in the specific situation.
When $\theta=0$, the distribution $P_{0, N}$ is the geometric distribution with the mean $N$,
i.e., 
\[
P_{0, N}(k)=\frac{1}{N+1}\left(\frac{N}{N+1}\right)^k
\]
holds for $k\in\Z_{\ge 0}$.
Similarly, when $N=0$, the distribution $P_{\theta, 0}$
is the Poisson distribution with the mean $|\theta|$, i.e., 
\[
P_{\theta, 0}(k)=\frac{|\theta|^k}{k!}e^{-|\theta|}
\]
holds for $k\in\Z_{\ge 0}$.

The mean and the variance of $P_{\theta, N}$ are represented as follows.

\begin{lem}\Label{moment}\cite{L}
The mean and the variance of the probability measure $P_{\theta, N}$ in Lemma \ref{prob} are $|\theta|^2+N$ and $N(N+1)+|\theta|^2 (2N+1)$ respectively.
\end{lem}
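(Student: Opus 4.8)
The quantities sought are the first and second moments of the number distribution, namely $\sum_k k\, P_{\theta,N}(k) = \Tr[\hat{N}\rho_{\theta,N}]$ and $\sum_k k^2 P_{\theta,N}(k) = \Tr[\hat{N}^2\rho_{\theta,N}]$, with the variance equal to the second minus the square of the first. Rather than summing the Laguerre series of Lemma \ref{prob} directly, the plan is to exploit the mixture structure of $\rho_{\theta,N}$. From the definition of the coherent state one has $|\langle k|\xi)|^2 = e^{-|\xi|^2}|\xi|^{2k}/k!$, so the number measurement applied to a single $|\xi)(\xi|$ yields a Poisson distribution of mean $|\xi|^2$. Since $\rho_{\theta,N}=\frac{1}{\pi N}\int_\C |\xi)(\xi| e^{-|\theta-\xi|^2/N}\,d\xi$, the distribution $P_{\theta,N}$ is a Poisson law whose parameter $|\xi|^2$ is itself random, with $\xi$ drawn from the complex Gaussian density $\frac{1}{\pi N}e^{-|\theta-\xi|^2/N}$. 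This lets me obtain both moments through the laws of total expectation and total variance.

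For the mean, total expectation gives $E[K]=E_\xi[|\xi|^2]$. Writing $\xi=\theta+\eta$ with $\eta$ a centered complex Gaussian of density $\frac{1}{\pi N}e^{-|\eta|^2/N}$, I get $E_\xi[|\xi|^2]=|\theta|^2+2\re(\bar\theta\,E[\eta])+E[|\eta|^2]=|\theta|^2+N$, using $E[\eta]=0$ and $E[|\eta|^2]=N$; this already reproduces the geometric-distribution mean $N$ at $\theta=0$. For the variance, the law of total variance together with $\mathrm{Var}(\mathrm{Poisson}(\lambda))=\lambda$ yields $\mathrm{Var}(K)=E_\xi[|\xi|^2]+\mathrm{Var}_\xi(|\xi|^2)=(|\theta|^2+N)+\big(E_\xi[|\xi|^4]-(|\theta|^2+N)^2\big)$, so the only remaining ingredient is the fourth moment $E_\xi[|\xi|^4]$.

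The bulk of the work, and the place where factor errors are easiest to make, is this Gaussian fourth-moment computation. Using the centered complex Gaussian moments $E[\eta]=0$, $E[\eta^2]=0$, $E[|\eta|^2]=N$ and $E[|\eta|^4]=2N^2$, I expand $|\xi|^4=(|\theta|^2+2\re(\bar\theta\eta)+|\eta|^2)^2$ and take expectations term by term: the odd terms vanish, $E\big[(2\re(\bar\theta\eta))^2\big]=2|\theta|^2 N$, and the rest give $E_\xi[|\xi|^4]=|\theta|^4+4|\theta|^2 N+2N^2$. Substituting, $\mathrm{Var}_\xi(|\xi|^2)=N^2+2|\theta|^2 N$, whence $\mathrm{Var}(K)=(|\theta|^2+N)+(N^2+2|\theta|^2 N)=N(N+1)+|\theta|^2(2N+1)$, as claimed. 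As an equivalent route one could instead compute $\Tr[\hat{N}\rho_{\theta,N}]$ and $\Tr[\hat{N}^2\rho_{\theta,N}]$ from $a|\xi)=\xi|\xi)$ via the normal-ordered identities $(\xi|\hat{N}|\xi)=|\xi|^2$ and $(\xi|\hat{N}^2|\xi)=|\xi|^4+|\xi|^2$, or sum the Laguerre series using the generating function $\sum_k L_k(x)t^k=(1-t)^{-1}e^{-xt/(1-t)}$; but the mixture argument keeps the Gaussian integrals to a minimum and isolates the single nontrivial step.
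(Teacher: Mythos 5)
Your proof is correct, but it is genuinely different from what the paper does: the paper offers no proof of Lemma \ref{moment} at all, simply citing Gurland--Chen--Hernandez \cite{L} (and pointing the reader there for higher moments), where the distribution is analyzed through its Laguerre-polynomial form. Your route instead exploits the mixture structure directly: since $|\langle k|\xi)|^2=e^{-|\xi|^2}|\xi|^{2k}/k!$, the law $P_{\theta,N}$ is a Poisson law with random parameter $|\xi|^2$, $\xi$ complex Gaussian centered at $\theta$ with $E[|\xi-\theta|^2]=N$, and the laws of total expectation and total variance reduce everything to the second and fourth Gaussian moments. I checked the computations: $E[|\eta|^2]=N$ and $E[|\eta|^4]=2N^2$ are the correct moments for the density $\frac{1}{\pi N}e^{-|\eta|^2/N}$ (equivalently, $|\eta|^2$ is exponential with mean $N$), $E\bigl[(2\re(\bar\theta\eta))^2\bigr]=4\cdot|\theta|^2\frac{N}{2}=2|\theta|^2N$ since $\re(\bar\theta\eta)$ is a real Gaussian of variance $|\theta|^2N/2$, the cubic cross term vanishes by symmetry, and the assembly $\mathrm{Var}(K)=(|\theta|^2+N)+(N^2+2|\theta|^2N)=N(N+1)+|\theta|^2(2N+1)$ matches the statement; the interchange of the sum over $k$ with the Gaussian integral is justified by nonnegativity (Tonelli), and the total-variance identity is licit because all moments involved are finite. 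What your approach buys is a short, self-contained argument in the spirit of the paper's own proof of Lemma \ref{prob} (direct Gaussian integration), avoiding Laguerre series entirely; what the cited reference buys is machinery for all higher moments at once --- though your route extends there too, since the $r$-th factorial moment of a Poisson mixture is just $E_\xi[|\xi|^{2r}]$, again an elementary Gaussian moment. One could tighten the exposition by stating explicitly that $P_{\theta,N}(k)=E_\xi[e^{-|\xi|^2}|\xi|^{2k}/k!]$ follows from sandwiching the defining integral of $\rho_{\theta,N}$ between $\langle k|$ and $|k\rangle$, but nothing is missing in substance.
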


See \cite{L} for the higher moment of $P_{\theta, N}$. 


\section{Quantum hypothesis testing}

In this section, we describe the formulation for the quantum hypothesis testing.
In particular, we focus on the case when the composite hypotheses are given with the disturbance parameter.
In order to treat such a situation with symmetry,
we provide quantum Hunt-Stein theorem in the context of quantum hypothesis testing.
In this paper,
the dimension of the quantum system $\mathcal{H}$ of interest is not necessary finite,
and is assumed to be at most countable.

\subsection{Basic formulation for quantum hypothesis testing}\Label{HT}

~

In quantum hypothesis testing, 
in order to describe our hypotheses, the null hypothesis $H_0$ and the alternative hypothesis $H_1$,
we introduce two disjoint sets $\mathcal{S}_0$ and $\mathcal{S}_1$ of quantum states so that
the unknown state $\rho$ belongs to the union set $\mathcal{S}:=\mathcal{S}_0 \cup \mathcal{S}_1$.
Then, our problem and our two hypotheses $H_0$ and $H_1$ are described in the following way:
\[
H_0:\rho\in\mathcal{S}_0~~\text{vs.}~~H_1:\rho\in\mathcal{S}_1.
\]
What we should do in quantum hypothesis testing is to determine 
whether $\rho$ belongs to $\mathcal{S}_0$ or $\mathcal{S}_1$ 
by applying a two-valued POVM $\{T_0, T_1\}$ to the quantum system with the unknown state. 
In this method, we support the hypothesis $H_i$ when the outcome is $i$.
Since an arbitrary two-valued POVM $\{T_0, T_1\}$ is represented by an operator $0\le T\le I$ 
($I$ is the identity operator) as $T_0=I-T,T_1=T$, 
an operator $0\le T\le I$ is called a \textbf{test (operator)} in hypothesis testing. 
There are two kinds of erroneous in the above decision way: 
to accept $H_1$ although $H_0$ is true and 
to accept $H_0$ although $H_1$ is true, 
which are called type I error and type II error respectively. 
Type I error and type II error probabilities are expressed as 
\begin{equation}
\alpha_{T}(\rho):=\Tr \rho T~~(\rho\in\mathcal{S}_0), ~~~~~ 
\beta_{T}(\rho):=1-\Tr \rho T~~(\rho\in\mathcal{S}_1).
\end{equation}
Then, the function
\[
\begin{array}{cl}
\gamma_{T}(\rho):=1-\beta_{T}(\rho)=\mathrm{Tr}\rho T&~~~(\rho\in\mathcal{S}_1)
\end{array}
\]
is called the power of the test $T$.

A test with lower error probabilities is better, 
but type I and type II error probabilities can not be minimized simultaneously. 
Since there often exists a trade-off relation between type I error probability and type II error probability, accordingly, 
we take a permissible error constant $\alpha\in (0,1)$ for the first error probability, 
which is called the {\it (significance) level}.
Hence, we treat tests $T$ with level $\alpha$, 
i.e., $\Tr T \rho \le \alpha$ for all states $\rho \in \mathcal{S}_0$. 
Then, we denote the set of tests $T$ with level $\alpha$ by ${\cal T}_{\alpha}$,
i.e.,  
\begin{align}
{\cal T}_{\alpha}
:=\{T| 0\le T \le 1, ~ \Tr T \rho \le \alpha, \forall \rho \in \mathcal{S}_0 \} .
\end{align}
A test $T$ with level $\alpha$ is called a UMP test (Uniformly Most Powerful test)
when its type II error probability is the minimum among tests with level $\alpha$,
i.e., 
$ \beta_{T}(\rho) \le  \beta_{T'}(\rho) $ 
for all states $\rho \in \mathcal{S}_1$ and  
for all tests $T'\in {\cal T}_\alpha$.
A problem to derive a UMP test is often treated in quantum hypothesis testing,
however a UMP test may not exist when the null hypothesis $H_1$ is composite. 
Then, we need to modify this formulation.

A family of quantum states on a hypothesis testing problem often has parameters that are unrelated to the hypotheses. 
We call the unrelated parameters the disturbance parameters. 
For example, let consider the following hypothesis testing problem of the number parameter $N$ 
for a family of quantum Gaussian states $\{\rho_{\theta,N}\}_{\theta\in \C,N\in\R_{> 0}}$:
\[
H_0:N\le N_0~vs.~H_0:N> N_0,
\]
where $N_0$ is a positive constant. 
In this case, the disturbance parameter is the mean parameter $\theta \in \C$, 
which is unrelated to the number parameter $N$. 

This situation can be formulated in the following way.
It is assumed that our parameterized family is given as $\{\rho_{\theta,\xi}\}_{\theta\in \Theta,\xi \in \Xi}$,
in which, the parameter $\xi\in \Xi$ is the disturbance parameter 
and the parameter $\theta\in \Theta$ is related to our hypotheses.
In order to formulate our problem, we assume that
the parameter space $\Theta$ is given as the union of two disjoint subsets $\Theta_0$ and $\Theta_1$.
When $\mathcal{S}_i=\{\rho_{\theta,\xi} \}_{\theta\in \Theta_i,\xi\in \Xi}$ for $i=0,1$,
our problem and our two hypotheses $H_0$ and $H_1$ are described in the following way:
\[
H_0:\theta \in \Theta_0 ~vs.~ H_1:\theta \in\Theta_1
\hbox{ with }
\{\rho_{\theta,\xi}\}_{\theta\in \Theta,\xi \in \Xi}.
\]

The min-max criterion for the disturbance parameter $\xi\in \Xi$ is 
provided by the notion that 
it is better for a test to have smaller maximum value of 
the type II error probability among all disturbance parameters $\xi\in \Xi$.
Then, the optimal test with level $\alpha$
is given as the test $T_0$ with level $\alpha$ satisfying the following equation:
\[
\sup_{\xi \in \Xi}\beta_{T_0}(\rho_{\theta,\xi})
=\inf_{T\in {\cal T}_{\alpha} }\sup_{\xi \in \Xi}\beta_{T}(\rho_{\theta,\xi})~~~(\forall \theta \in \Theta_1).
\]
We call the above optimal test a \textbf{UMP min-max test} with level $\alpha$. 
Our main task is to derive a UMP test and a UMP min-max test 
for various hypothesis testing problems of the quantum Gaussian states in the following sections.


However, it is often difficult to find a UMP min-max test or a UMP test.
In this case, we impose the additional condition on tests. 
A test $T$ with level $\alpha$ is called an \textbf{unbiased test} 
if the test satisfies $\beta_{T}(\rho_{\theta,\xi})\le 1-\alpha$ for 
$\theta \in \Theta_1$ and $\xi \in \Xi$.
Then, we sometimes seek the optimal test under the unbiasedness for tests. 

\begin{rem}
In the above formulation, the parametrization map $(\theta,\xi) \mapsto \rho_{\theta,\xi}$
is not necessarily injective.
There may exist a point $\theta_0\in \Theta$ such that
different $\xi\in\Xi$ provides the same state $\rho_{\theta,\xi}$.
Such a point $\theta_0\in \Theta$ is called a singular point,
Even if there exists a singular point, the above formulation works properly. 
\end{rem}

\subsection{Invariance of UMP min-max test}

~

For a simple derivation of the optimal test, we sometimes focus on
a unitary (projective) representation $V$ of a group $G$ on the Hilbert space $\mathcal{H}$. The unitary (projective) representation $V$ is called {\it covariant} concerning the disturbance parameter space $\Xi$,
when there is an action of group $G$ to the disturbance parameter space $\Xi$ such that
\begin{equation}\Label{cov}
V_g\rho_{\theta,\xi}V_g^*
=
\rho_{\theta,g\cdot \xi}, \quad \forall \theta \in \Theta, ~\forall \xi \in\Xi,~\forall g\in G. 
\end{equation}

Now, we impose the invariance for tests under the above covariance.
A test $T$ is called $G$ \textit{invariant test} concerning a (projective) representation $V$
if $V_g T V_g^*=T$ holds for any $g\in G$. 
A \textit{UMP invariant test} is defined by 
an invariant test with the minimum type II error in the class of tests with level $\alpha$. 
That is, an invariant test $T$ with level $\alpha$ is called 
a UMP invariant test with level $\alpha$
when $ \beta_T(\rho) \le  \beta_{T'}( \rho) $ holds
for all states $\rho \in \mathcal{S}_1$ and  
for all invariant tests $T'\in {\cal T}_{\alpha}$.
Then, it is nothing but the optimal test in invariant tests. 

It is often easy to optimize the invariant test in virtue of the invariance and accordingly derive 
the UMP invariant test. 
Quantum Hunt-Stein theorem guarantees that
a UMP min-max test of level $\alpha$ is given as a UMP invariant test.
Quantum Hunt-Stein theorem for a compact group was given by Holevo\cite{Hol,Hol1} and quantum Hunt-Stein theorem for a non-compact case was shown by Bogomolov\cite{Bog} and Ozawa\cite{Oza} 
but not stated in the context of the quantum hypothesis testing. In the following, we restate quantum Hunt-Stein theorem as 
a theorem concerning the following hypothesis testing with the min-max criterion:
\begin{eqnarray}\label{C}
H_0:\theta \in\Theta_0~vs.~H_1:\theta \in\Theta_1
\hbox{ with }
\{\rho_{\theta,\xi}\}_{\theta\in \Theta,\xi \in \Xi}
\end{eqnarray}
for a family of quantum states $\{\rho_{\theta,\xi}\}_{\theta\in\C,\xi \in \Xi}$ on $\mathcal{H}$ with the disturbance parameter $\xi$.

At first, 
we consider the case that a compact group acts on a family of quantum states 
and derive the relation between a UMP min-max test and a UMP invariant test.

\begin{thm}$(\textbf{Quantum Hunt-Stein theorem for a compact group})\Label{cpt.H-S}$~
When a (projective) representation of a compact group $G$ 
satisfies 
the covariant condition $(\ref{cov})$ concerning the disturbance parameter space $\Xi$,
the following equations hold.
\begin{align*}
\inf_{\tilde{T} \in {\cal T}_{\alpha,V}}\sup_{\xi \in \Xi} \beta_{\tilde{T}}(\rho_{\theta,\xi})
&=\inf_{T\in {\cal T}_{\alpha}} \sup_{\xi \in \Xi}\beta_{T}(\rho_{\theta,\xi}) 
\quad(\forall \theta \in\Theta_1), \\
\inf_{\tilde{T} \in {\cal T}_{\alpha,u,V}}\sup_{\xi \in \Xi} \beta_{\tilde{T}}(\rho_{\theta,\xi})
&=\inf_{T\in {\cal T}_{\alpha,u}} \sup_{\xi \in \Xi}\beta_{T}(\rho_{\theta,\xi}) 
\quad(\forall \theta \in\Theta_1),
\end{align*}
where 
${\cal T}_{\alpha,V}$, 
${\cal T}_{\alpha,u}$, and
${\cal T}_{\alpha,u,V}$ 
are
the set of tests of level $\alpha$
that are invariant concerning the (projective) representation $V$,
the set of unbiased tests of level $\alpha$,
and
the set of unbiased tests of level $\alpha$
that are invariant concerning the (projective) representation $V$.
\end{thm}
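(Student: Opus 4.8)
The plan is to adapt the classical Hunt--Stein averaging device to the operator setting. Since $G$ is compact it carries a normalized Haar measure $dg$ with $\int_G dg=1$, and I would use it to symmetrize an arbitrary test. Concretely, for $T\in{\cal T}_\alpha$ I would set
\[
\tilde T:=\int_G V_g^* T V_g\,dg,
\]
understood as a weak operator integral, that is, defined through $\Tr\rho\tilde T:=\int_G\Tr\bigl((V_g\rho V_g^*)T\bigr)\,dg$ for every trace-class $\rho$. The inclusion ${\cal T}_{\alpha,V}\subseteq{\cal T}_\alpha$ makes the inequality $\ge$ in each claimed identity immediate, so the whole content is the reverse inequality, which follows once I check that $\tilde T\in{\cal T}_{\alpha,V}$ and that $\sup_{\xi}\beta_{\tilde T}(\rho_{\theta,\xi})\le\sup_{\xi}\beta_T(\rho_{\theta,\xi})$ for $\theta\in\Theta_1$.

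I would verify this in four short steps. First, $\tilde T$ is a test: for every vector $\psi$ one has $\langle\psi,\tilde T\psi\rangle=\int_G\langle\psi,V_g^*TV_g\psi\rangle\,dg$, and since the integrand lies in $[0,\langle\psi,\psi\rangle]$ so does the average, giving $0\le\tilde T\le I$. Second, invariance: the conjugation $X\mapsto V_gXV_g^*$ composes according to the group law (the projective cocycle is a scalar of modulus one and cancels in such a two-sided conjugation), so the substitution $g\mapsto gh^{-1}$ and the two-sided invariance of Haar measure on the compact, hence unimodular, group $G$ give $V_h\tilde TV_h^*=\tilde T$; thus $\tilde T\in{\cal T}_{\alpha,V}$. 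Third, level: the covariance $(\ref{cov})$ yields, for $\theta\in\Theta_0$,
\[
\Tr\rho_{\theta,\xi}\tilde T=\int_G\Tr\bigl((V_g\rho_{\theta,\xi}V_g^*)T\bigr)\,dg=\int_G\Tr\bigl(\rho_{\theta,g\cdot\xi}T\bigr)\,dg\le\alpha,
\]
because each $\rho_{\theta,g\cdot\xi}\in\mathcal{S}_0$ and $T$ has level $\alpha$. Fourth, the power comparison: for $\theta\in\Theta_1$ the same identity together with $\int_G dg=1$ gives, for each fixed $\xi$,
\[
\beta_{\tilde T}(\rho_{\theta,\xi})=\int_G\beta_T(\rho_{\theta,g\cdot\xi})\,dg\le\sup_{\xi'\in\Xi}\beta_T(\rho_{\theta,\xi'}),
\]
and taking the supremum over $\xi$ and then the infimum over $T\in{\cal T}_\alpha$ proves the first identity. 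For the second identity I would add only that averaging preserves unbiasedness: if $\beta_T(\rho_{\theta,\xi})\le1-\alpha$ for all $\theta\in\Theta_1,\xi\in\Xi$, then $\beta_{\tilde T}(\rho_{\theta,\xi})=\int_G\beta_T(\rho_{\theta,g\cdot\xi})\,dg\le1-\alpha$, so $\tilde T\in{\cal T}_{\alpha,u,V}$, and the same chain of inequalities applies verbatim.

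The main obstacle I expect is not this algebra but the analytic justification of the integral defining $\tilde T$, since $\mathcal{H}$ may be infinite dimensional. I would assume the representation $V$ is strongly continuous, so that $g\mapsto V_g\rho V_g^*$ is continuous in trace norm and $g\mapsto\Tr\bigl((V_g\rho V_g^*)T\bigr)$ is continuous and bounded in modulus by $\|\rho\|_1$; compactness of $G$ then makes this function integrable, and $\rho\mapsto\int_G\Tr\bigl((V_g\rho V_g^*)T\bigr)\,dg$ is a bounded linear functional on the trace-class operators, hence represented by a unique bounded operator $\tilde T$, whose order bounds were already obtained pointwise above. The delicate points are therefore the continuity and measurability underlying this weak integral and the careful handling of the projective cocycle; both are standard for a compact group, so I do not anticipate essential difficulty beyond bookkeeping.
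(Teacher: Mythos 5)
Your proposal is correct and takes essentially the same route as the paper's own proof: you average an arbitrary level-$\alpha$ test over the normalized Haar measure of the compact group, verify that the averaged operator is an invariant (and, where required, unbiased) test of level $\alpha$, and use the covariance condition $(\ref{cov})$ to bound its worst-case type II error by that of the original test, exactly as the paper does with its averaging map $\mathcal{L}(T)=\int_G g\cdot T\,d\nu(g)$. Your extra care with the weak-operator integral and the cancellation of the projective cocycle merely makes explicit analytic details that the paper leaves implicit.
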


Theorem \ref{cpt.H-S} yields the following proposition which insists that an invariant UMP test is also a UMP min-max test.

\begin{prp}\Label{cpt.inv.min-max}
Assume that a unitary (projective) representation of a compact group $G$ satisfies 
the covariant condition $(\ref{cov})$ concerning the disturbance parameter space $\Xi$.
If there exists a UMP (unbiased) invariant test $T_0$ with level $\alpha$ for hypothesis testing $(\ref{C})$,
$T_0$ is also a UMP (unbiased) min-max test with level $\alpha$ for 
the disturbance parameter $\xi \in \Xi$.
\end{prp}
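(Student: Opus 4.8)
The plan is to deduce the min-max optimality of $T_0$ directly from its optimality within the class of invariant tests, using Theorem \ref{cpt.H-S} to bridge the gap between invariant tests and arbitrary tests. First I would fix an arbitrary $\theta \in \Theta_1$ and unfold the definition of a UMP invariant test: by hypothesis $T_0 \in {\cal T}_{\alpha,V}$, and $\beta_{T_0}(\rho_{\theta,\xi}) \le \beta_{T'}(\rho_{\theta,\xi})$ holds for every invariant test $T' \in {\cal T}_{\alpha,V}$ and every $\xi \in \Xi$.

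The key elementary step is to promote this pointwise (in $\xi$) inequality to an inequality of the worst-case error probabilities. For a fixed invariant $T'$, each $\xi$ satisfies $\beta_{T_0}(\rho_{\theta,\xi}) \le \beta_{T'}(\rho_{\theta,\xi}) \le \sup_{\xi' \in \Xi}\beta_{T'}(\rho_{\theta,\xi'})$; taking the supremum over $\xi$ on the left-hand side then gives $\sup_{\xi\in\Xi}\beta_{T_0}(\rho_{\theta,\xi}) \le \sup_{\xi\in\Xi}\beta_{T'}(\rho_{\theta,\xi})$. Since this holds for all $T' \in {\cal T}_{\alpha,V}$ and $T_0$ itself lies in ${\cal T}_{\alpha,V}$, the infimum over invariant tests is attained at $T_0$, i.e.
\[
\sup_{\xi\in\Xi}\beta_{T_0}(\rho_{\theta,\xi}) = \inf_{T' \in {\cal T}_{\alpha,V}}\sup_{\xi\in\Xi}\beta_{T'}(\rho_{\theta,\xi}).
\]

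It now remains only to invoke Theorem \ref{cpt.H-S}, whose first equation identifies the infimum over invariant tests with the infimum over all level-$\alpha$ tests:
\[
\inf_{T' \in {\cal T}_{\alpha,V}}\sup_{\xi\in\Xi}\beta_{T'}(\rho_{\theta,\xi}) = \inf_{T \in {\cal T}_{\alpha}}\sup_{\xi\in\Xi}\beta_{T}(\rho_{\theta,\xi}).
\]
Chaining the two displayed equalities yields exactly the defining relation of a UMP min-max test with level $\alpha$, and since $\theta \in \Theta_1$ was arbitrary this settles the UMP case. For the unbiased case I would run the same argument verbatim, replacing ${\cal T}_{\alpha,V}$ by ${\cal T}_{\alpha,u,V}$ and ${\cal T}_{\alpha}$ by ${\cal T}_{\alpha,u}$, and appealing to the second equation of Theorem \ref{cpt.H-S} in place of the first.

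I expect no serious obstacle here: all the analytic weight of the statement is carried by Theorem \ref{cpt.H-S}, which we are entitled to assume. The only point requiring care is the sup-inequality step, where one must observe that pointwise dominance in $\xi$ passes to the worst-case values and that $T_0$, being itself invariant, realizes the infimum over the invariant class rather than merely bounding it.
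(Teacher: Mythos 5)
Your proposal is correct and is essentially the argument the paper intends: the paper states Proposition \ref{cpt.inv.min-max} as a direct consequence of Theorem \ref{cpt.H-S}, and your chain (pointwise dominance over $\xi$ passes to suprema, $T_0$ attains the infimum over ${\cal T}_{\alpha,V}$ since it belongs to that class, and Theorem \ref{cpt.H-S} equates this with the infimum over ${\cal T}_{\alpha}$, with ${\cal T}_{\alpha,u,V}$ and ${\cal T}_{\alpha,u}$ in the unbiased case) fills in exactly the omitted routine steps. No gaps.
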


Next, we state quantum Hunt-Stein theorem for a non-compact case,
which requires the amenability for a group $G$. The definition and several properties of an amenable group are written in \cite{G}. The following proposition is known.

\begin{prp}\Label{AIPM}\cite{B-P}~
For a locally compact Hausdorff group $G$, 
there exists an asymptotic probability measure on $G$ if and only if $G$ is amenable.
\end{prp}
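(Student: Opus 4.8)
The statement is a standard characterization of amenability, so the plan is to prove both implications directly from the definition of an invariant mean. Write $L^\infty(G)$ for the bounded measurable functions on $G$, and recall that $G$ is amenable precisely when there is a left-invariant mean, i.e. a state $m$ on $L^\infty(G)$ satisfying $m(L_g\phi)=m(\phi)$ for all $g\in G$ and all $\phi\in L^\infty(G)$, where $(L_g\phi)(h):=\phi(g^{-1}h)$. An asymptotic probability measure is a net $(\mu_\alpha)$ of probability measures on $G$ with $\|L_g\mu_\alpha-\mu_\alpha\|\to 0$ in total variation for every $g\in G$. For the necessity direction, suppose such a net exists. Each $\mu_\alpha$ defines a state $\phi\mapsto\int_G\phi\,d\mu_\alpha$ on $L^\infty(G)$, hence a point of the weak-$*$ compact set of means; by Banach--Alaoglu this net has a weak-$*$ cluster point $m$, again a mean. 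For fixed $g$ the bound $|m(L_g\phi)-m(\phi)|\le\|\phi\|_\infty\,\|L_g\mu_\alpha-\mu_\alpha\|\to 0$ along the converging subnet forces $m(L_g\phi)=m(\phi)$, so $m$ is left-invariant and $G$ is amenable.

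For the sufficiency direction I would realize an invariant mean $m$ as a weak-$*$ limit of genuine probability densities and then upgrade weak invariance to norm invariance. Fix a left Haar measure on $G$ (available by local compactness) and let $P\subset L^1(G)$ be the convex set of probability densities; identifying $f\in P$ with the functional $\phi\mapsto\int\phi f$, the set $P$ is weak-$*$ dense in the means, so $m$ lies in its weak-$*$ closure. For a finite set $F=\{g_1,\dots,g_k\}\subset G$, consider the linear map $\Phi_F\colon L^1(G)\to L^1(G)^k$, $f\mapsto(L_{g_i}f-f)_{i=1}^k$. Left-invariance of $m$ places $0$ in the weak closure of the convex set $\Phi_F(P)$, and since weak and norm closures of a convex set coincide (Mazur's lemma), $0$ also lies in its norm closure. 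Thus for each $\varepsilon>0$ there is $f_{F,\varepsilon}\in P$ with $\|L_g f_{F,\varepsilon}-f_{F,\varepsilon}\|_1<\varepsilon$ for all $g\in F$. The measures $d\mu_{F,\varepsilon}:=f_{F,\varepsilon}\,dg$, indexed by the pairs $(F,\varepsilon)$ directed by inclusion of $F$ and by decreasing $\varepsilon$, form the required asymptotic probability measure.

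The main obstacle is the sufficiency direction, and within it the passage from the \emph{weak} invariance carried by the abstract mean $m$ to the \emph{norm} (total-variation) asymptotic invariance of honest measures: a single invariant mean need not be represented by any invariant density, and one must exploit the coincidence of the weak and norm closures of the convex set $\Phi_F(P)$ — the Day--Namioka averaging trick — to extract the net. The reverse implication, by contrast, is a soft weak-$*$ compactness argument. Since the result is classical and available in the cited literature, in the paper itself it is natural simply to invoke \cite{B-P}.
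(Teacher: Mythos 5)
The paper offers no proof of this proposition at all: it is quoted from Bondar--Milnes \cite{B-P} and used as a black box, so your argument can only be measured against the classical literature. Your two-sided proof is the standard one (a weak-$*$ cluster point of means for necessity; Day's convexity-plus-Mazur argument, i.e.\ Reiter's property, for sufficiency) and it is essentially sound as a proof of the classical characterization. Two mismatches with the paper's actual definition deserve attention, though. First, the paper defines an asymptotic probability measure by \emph{setwise} asymptotic invariance, $|\nu_n(g\cdot B)-\nu_n(B)|\to 0$ for each fixed Borel set $B$, not by total-variation convergence. For sufficiency this is harmless and even favorable: your Day construction gives $\|L_g f_{F,\varepsilon}-f_{F,\varepsilon}\|_1<\varepsilon$, and $|\nu(gB)-\nu(B)|\le\|L_g\nu-\nu\|_{TV}$, so you prove something stronger than required. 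But in the necessity direction you assume the stronger TV hypothesis, and under the paper's definition your bound $|m(L_g\phi)-m(\phi)|\le\|\phi\|_\infty\,\|L_g\mu_\alpha-\mu_\alpha\|$ is unavailable. The fix is routine: test the cluster mean on indicator functions, where setwise invariance gives $m(L_g 1_B)=m(1_B)$ directly, then extend by linearity and by uniform density of simple functions among bounded Borel functions. Relatedly, since the $\mu_\alpha$ need not be absolutely continuous with respect to Haar measure, the functional $\phi\mapsto\int\phi\,d\mu_\alpha$ is not well defined on $L^\infty(G)$ (Haar-a.e.\ classes); you should work on the bounded Borel functions or on the bounded continuous functions and invoke the standard equivalence of these formulations of amenability for locally compact groups.

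Second, you silently replace the paper's \emph{sequence} $\{\nu_n\}$ by a net, and this is not cosmetic. Day's argument naturally produces a net indexed by pairs $(F,\varepsilon)$, and for an uncountable discrete amenable group no sequence can be setwise asymptotically invariant: each $\nu_n$ has countable support $S_n$, and taking $B=\bigcup_n S_n$ and $g\notin BB^{-1}$ gives $|\nu_n(gB)-\nu_n(B)|=1$ for all $n$. So the proposition as literally stated (sequences, arbitrary locally compact Hausdorff $G$) requires either the net formulation or a countability hypothesis such as $\sigma$-compactness, under which F{\o}lner sequences of normalized Haar restrictions yield an honest sequence. Your net version is in fact the correct general statement, and the discrepancy is immaterial for this paper since the groups actually used ($S^1$, $\R^n$, $\C$, $\C^2$) are $\sigma$-compact; still, if your proof were spliced in, the indexing should be reconciled with the paper's definition or the definition weakened to nets.
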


Here, an asymptotic probability measure $\{\nu_n\}$ on $G$ is defined by a sequence of probability measures on $G$ satisfying
\[
\displaystyle\lim_{n\to\infty}|\nu_n(g\cdot B)-\nu_n(B)|=0
\]
for any $g\in G$ and any Borel set $B\subset G$. For example, all of compact Lie groups like $(S^1)^n$ are amenable groups. A finite-dimensional Euclid space $\R^n$ and the whole of integers are amenable groups too. 

In addition, we need the completeness of a family of quantum states. 
A family of quantum states is called complete if for any bounded linear operator $X$, the following holds:
\[
\mathrm{Tr}(\rho X)=0~(\forall\rho\in\mathcal{S})~~~is~equivalent~with ~~~X=0.
\]
The whole of pure states on a finite-dimensional quantum system and 
the quantum Gaussian states $\{\rho_{\theta,N}\}_{\theta\in\C}$ 
have the completeness for any $N>0$. 

\begin{thm}$($\textbf{Quantum Hunt-Stein theorem} \cite{Bog},\cite{Oza}$)$\Label{H-S}~
Let $\mathcal{S}$ be a complete family of quantum states on a Hilbert space $\mathcal{H}$ of at most countable dimension. 

When a unitary (projective) representation of an amenable group $G$ 
satisfies 
the covariant condition $(\ref{cov})$ concerning the disturbance parameter space $\Xi$,
the following equation holds.
\begin{align*}
\inf_{\tilde{T} \in {\cal T}_{\alpha,V}}\sup_{\xi \in \Xi} \beta_{\tilde{T}}(\rho_{\theta,\xi})
&=\inf_{T\in {\cal T}_{\alpha}} \sup_{\xi \in \Xi}\beta_{T}(\rho_{\theta,\xi}) 
\quad(\forall \theta \in\Theta_1), \\
\inf_{\tilde{T} \in {\cal T}_{\alpha,u,V}}\sup_{\xi \in \Xi} \beta_{\tilde{T}}(\rho_{\theta,\xi})
&=\inf_{T\in {\cal T}_{\alpha,u}} \sup_{\xi \in \Xi}\beta_{T}(\rho_{\theta,\xi}) 
\quad(\forall \theta \in\Theta_1).
\end{align*}
\end{thm}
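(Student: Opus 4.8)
The plan is to prove only the inequality ``$\le$'' in each of the two displayed equations, since the reverse ``$\ge$'' is immediate: ${\cal T}_{\alpha,V}\subseteq {\cal T}_{\alpha}$ and ${\cal T}_{\alpha,u,V}\subseteq {\cal T}_{\alpha,u}$, so restricting either infimum to invariant tests cannot decrease it. Thus it suffices to show that for every test $T\in {\cal T}_{\alpha}$ (resp. $T\in{\cal T}_{\alpha,u}$) there is an invariant test $\tilde T\in {\cal T}_{\alpha,V}$ (resp. $\tilde T\in{\cal T}_{\alpha,u,V}$) with $\sup_{\xi\in\Xi}\beta_{\tilde T}(\rho_{\theta,\xi})\le \sup_{\xi\in\Xi}\beta_{T}(\rho_{\theta,\xi})$ for all $\theta\in\Theta_1$. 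The mechanism is to average $T$ over the group $G$. In the compact case of Theorem \ref{cpt.H-S} one averages against the normalized Haar measure and lands on an exactly invariant operator in one step; the non-compact obstruction is that no invariant probability measure exists, and this is exactly what amenability repairs. By Proposition \ref{AIPM} there is an asymptotic probability measure $\{\nu_n\}$ on $G$, and this is the object I would average against.

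Concretely, I would set
\[
T_n:=\int_G V_g T V_g^{\ast}\,d\nu_n(g),
\]
interpreted weakly by $\Tr(\rho T_n):=\int_G \Tr(\rho\,V_g T V_g^{\ast})\,d\nu_n(g)$ for every trace-class $\rho$; since $0\le V_g T V_g^{\ast}\le I$ and $\nu_n$ is a probability measure we get $0\le T_n\le I$, so each $T_n$ is a test, and the projective cocycle of $V$ is harmless because it cancels in the conjugation $V_g(\cdot)V_g^{\ast}$. Using the covariance $(\ref{cov})$ in the form $V_g^{\ast}\rho_{\theta,\xi}V_g=\rho_{\theta,g^{-1}\cdot\xi}$ together with invariance of the trace,
\[
\Tr(\rho_{\theta,\xi}T_n)=\int_G \Tr(\rho_{\theta,g^{-1}\cdot\xi}\,T)\,d\nu_n(g).
\]
For $\theta\in\Theta_0$ the integrand is $\le\alpha$, hence $T_n\in{\cal T}_{\alpha}$; for $\theta\in\Theta_1$ the identity reads $\beta_{T_n}(\rho_{\theta,\xi})=\int_G\beta_{T}(\rho_{\theta,g^{-1}\cdot\xi})\,d\nu_n(g)\le \sup_{\xi'\in\Xi}\beta_{T}(\rho_{\theta,\xi'})$, so taking the supremum over $\xi$ gives the desired bound, and the unbiasedness bound $\beta_{T}\le 1-\alpha$ on $\Theta_1$ is inherited by $T_n$ in the same way.

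Next I would check that $T_n$ is asymptotically invariant and pass to a limit. For fixed $h\in G$ and trace-class $\rho$, writing $f(g):=\Tr(\rho\,V_g T V_g^{\ast})$, one has $\Tr(\rho(V_h T_n V_h^{\ast}-T_n))=\int_G (f(hg)-f(g))\,d\nu_n(g)$, and I would choose the side of the conjugation exactly so that this defect is controlled by the left-translation form $|\nu_n(g\cdot B)-\nu_n(B)|$ in the definition of an asymptotic probability measure: approximating the bounded function $f$ uniformly by simple functions reduces the defect to finite combinations of $\nu_n(h^{-1}B)-\nu_n(B)$, which tend to $0$. Since the set $\{X:0\le X\le I\}$ is weak-$\ast$ compact by Banach--Alaoglu, and is moreover weak-$\ast$ metrizable because the predual is separable when $\dim\mathcal{H}$ is at most countable, a subsequence $T_{n_k}$ converges weak-$\ast$ to some $\tilde T$ with $0\le\tilde T\le I$. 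All the inequalities of the previous paragraph pass to $\tilde T$ because $\Tr(\rho\tilde T)=\lim_k\Tr(\rho T_{n_k})$ for every trace-class $\rho$; and because $X\mapsto V_h X V_h^{\ast}$ is weak-$\ast$ continuous while the defect above vanishes, $\tilde T$ is exactly invariant, so $\tilde T\in {\cal T}_{\alpha,V}$ (resp. ${\cal T}_{\alpha,u,V}$). Completeness of $\mathcal{S}$ is what guarantees that a test is determined by its expectations $\Tr(\rho_{\theta,\xi}\,\cdot)$ on the family, so that this limiting invariant test is unambiguously the object required by the problem and by the Ozawa--Bogomolov framework.

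The step I expect to be the main obstacle is the passage from the asymptotic invariance of $\{\nu_n\}$, which Proposition \ref{AIPM} supplies only in the set-wise form $|\nu_n(g\cdot B)-\nu_n(B)|\to 0$, to the operator statement that the conjugation defect of $T_n$ tends to $0$, together with the interchange of the weak-$\ast$ limit with the conjugation $V_h(\cdot)V_h^{\ast}$. This is precisely where non-compactness bites, since in the absence of a genuine invariant probability measure one cannot obtain invariance in one step but must produce it as a limit of merely approximately invariant operators; checking weak-$\ast$ measurability and integrability of $g\mapsto V_g T V_g^{\ast}$, so that $T_n$ is well defined, is a routine but necessary preliminary.
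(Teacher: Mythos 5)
Your proposal is correct and follows essentially the same route as the paper's appendix proof of Theorem \ref{H-S}: average $T$ against the asymptotic probability measure supplied by amenability (Proposition \ref{AIPM}), control the conjugation defect by uniform simple-function approximation so that it reduces to the set-wise asymptotic invariance $|\nu_n(g\cdot B)-\nu_n(B)|\to 0$, and extract a weak-$\ast$ limit via the weak compactness theorem (Banach--Alaoglu plus metrizability from the at most countable dimension), to which the level-$\alpha$, unbiasedness, and type II error bounds pass. The only nuance is that the paper verifies invariance of the limit $\mathcal{L}(T)$ only against the states $\rho_{\theta,\xi}\in\mathcal{S}$ and then invokes completeness of $\mathcal{S}$ to upgrade to the operator identity $g\cdot\mathcal{L}(T)=\mathcal{L}(T)$, whereas your defect estimate, formulated against all trace-class $\rho$, yields exact invariance directly, so your (rather vague) appeal to completeness is in fact dispensable in your variant.
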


From Theorem \ref{H-S}, the following proposition which insists that a UMP test is also a UMP min-max test is obtained. 

\begin{prp}\Label{inv.min-max}

Let $\mathcal{S}$ be a complete family of quantum states on a Hilbert space $\mathcal{H}$ of at most countable dimension. 
Assume that a unitary (projective) representation of an amenable group $G$ satisfies 
the covariant condition $(\ref{cov})$ concerning the disturbance parameter space $\Xi$.
If there exists a UMP (unbiased) invariant test $T_0$ with level $\alpha$ for hypothesis testing $(\ref{C})$,
$T_0$ is also a UMP (unbiased) min-max test with level $\alpha$ for 
the disturbance parameter $\xi \in \Xi$.

\end{prp}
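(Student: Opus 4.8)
The plan is to mirror the derivation of Proposition \ref{cpt.inv.min-max} from the compact Hunt-Stein theorem, but now invoking its non-compact counterpart, Theorem \ref{H-S}, whose hypotheses (completeness of $\mathcal{S}$, amenability of $G$, and at most countable dimension of $\mathcal{H}$) are exactly those imposed in the present statement. The argument splits into two steps: first I would show that a UMP invariant test automatically attains the min-max value within the class of invariant tests, and then I would use Theorem \ref{H-S} to identify that value with the min-max value taken over all tests.

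For the first step, observe that a UMP invariant test is optimal in the min-max sense among invariant tests. By definition $T_0\in\mathcal{T}_{\alpha,V}$ and $\beta_{T_0}(\rho_{\theta,\xi})\le\beta_{T'}(\rho_{\theta,\xi})$ for every $\rho_{\theta,\xi}\in\mathcal{S}_1$ and every invariant test $T'\in\mathcal{T}_{\alpha,V}$. In particular, for each fixed $\xi$ we have $\beta_{T_0}(\rho_{\theta,\xi})\le\sup_{\xi'\in\Xi}\beta_{T'}(\rho_{\theta,\xi'})$, and taking the supremum over $\xi$ on the left yields $\sup_{\xi\in\Xi}\beta_{T_0}(\rho_{\theta,\xi})\le\sup_{\xi\in\Xi}\beta_{T'}(\rho_{\theta,\xi})$ for every $T'\in\mathcal{T}_{\alpha,V}$. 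Since $T_0$ is itself invariant, this gives
\[
\sup_{\xi\in\Xi}\beta_{T_0}(\rho_{\theta,\xi})
=\inf_{T'\in\mathcal{T}_{\alpha,V}}\sup_{\xi\in\Xi}\beta_{T'}(\rho_{\theta,\xi})
\quad(\forall\theta\in\Theta_1).
\]

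In the second step I would apply Theorem \ref{H-S} to rewrite the right-hand side. That theorem guarantees, under the stated amenability and completeness hypotheses, the identity $\inf_{\tilde T\in\mathcal{T}_{\alpha,V}}\sup_{\xi}\beta_{\tilde T}(\rho_{\theta,\xi})=\inf_{T\in\mathcal{T}_{\alpha}}\sup_{\xi}\beta_{T}(\rho_{\theta,\xi})$ for all $\theta\in\Theta_1$. Combining it with the displayed equality gives
\[
\sup_{\xi\in\Xi}\beta_{T_0}(\rho_{\theta,\xi})
=\inf_{T\in\mathcal{T}_{\alpha}}\sup_{\xi\in\Xi}\beta_{T}(\rho_{\theta,\xi})
\quad(\forall\theta\in\Theta_1),
\]
which is precisely the defining equation of a UMP min-max test of level $\alpha$. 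For the unbiased case the same two steps go through verbatim with $\mathcal{T}_{\alpha,V}$ and $\mathcal{T}_{\alpha}$ replaced by $\mathcal{T}_{\alpha,u,V}$ and $\mathcal{T}_{\alpha,u}$, using the second equality of Theorem \ref{H-S}; one only needs to note that a UMP unbiased invariant test $T_0$ lies in $\mathcal{T}_{\alpha,u,V}$.

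The genuinely hard content is packaged inside Theorem \ref{H-S}, so the proposition itself reduces to bookkeeping. The only step that deserves a moment's thought is the passage from the pointwise optimality defining a UMP invariant test (an inequality for each individual $\rho\in\mathcal{S}_1$) to optimality of $\sup_{\xi}\beta$; this is immediate, since pointwise domination is preserved under taking suprema. I therefore expect no obstacle beyond verifying that the amenability and completeness hypotheses of Theorem \ref{H-S} coincide with those assumed here — which, by construction, they do.
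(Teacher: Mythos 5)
Your proposal is correct and follows essentially the same route as the paper, which states Proposition \ref{inv.min-max} as an immediate corollary of Theorem \ref{H-S} without spelling out the bookkeeping: your two steps --- that a UMP invariant test $T_0\in\mathcal{T}_{\alpha,V}$ attains $\inf_{T'\in\mathcal{T}_{\alpha,V}}\sup_{\xi}\beta_{T'}(\rho_{\theta,\xi})$, and that Theorem \ref{H-S} identifies this with $\inf_{T\in\mathcal{T}_{\alpha}}\sup_{\xi}\beta_{T}(\rho_{\theta,\xi})$ --- are exactly the intended deduction, with the unbiased case handled verbatim via the second equality of the theorem. Nothing is missing; note only (as you implicitly use) that $T_0\in\mathcal{T}_{\alpha,V}\subset\mathcal{T}_{\alpha}$, so $T_0$ qualifies as a candidate in the unrestricted min-max criterion.
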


We give a proof of quantum Hunt-Stein theorem on the hypothesis testing in appendix. 
Since our proof is restricted to the situation of the hypothesis testing, 
it is plainer than Bogomolov's proof. 
When we additionally impose the unbiasedness to our tests,
the arguments similar to Theorems \ref{cpt.H-S} and \ref{H-S} hold.
Therefore, Propositions \ref{cpt.inv.min-max} and \ref{inv.min-max} insist that 
an UMP unbiased invariant test is also an UMP unbiased min-max test.

\begin{rem}
The above theorems can be restated with a general action of a group $G$
on the space $\mathcal{B}(\mathcal{H})$ of bounded operators on $\mathcal{H}$.
Indeed, when the action comes from a unitary (projective) representation $V$ as $g\cdot T:=V_g T V_g^*$,
the compatibility condition 
\begin{equation}\Label{compatibility}
\mathrm{Tr}(g\cdot\rho)X=\mathrm{Tr}\rho(g^{-1}\cdot X)
\end{equation}
holds.
When an action of an amenable group $G$
on the space $\mathcal{B}(\mathcal{H})$
satisfies the condition (\ref{compatibility}), it is called compatible.
Even if the (projective) representation $V$ is replaced by a compatible group action on the space $\mathcal{B}(\mathcal{H})$,
we can show Theorems \ref{cpt.H-S} and \ref{H-S} and Propositions \ref{cpt.inv.min-max} and \ref{inv.min-max}
in the same way.
\end{rem}


\subsection{Reduction methods on quantum hypothesis testing problems}~

Usually, a meaningful quantum hypothesis testing problem has a complicated structure.
Hence, it is needed to simplify such a given quantum hypothesis testing problem.
For this kind of simplification, we prepare some lemmas reducing a quantum hypothesis testing problem to a fundamental one. 
Thereafter, we summed up those lemmas as Theorem $\ref{thm3}$. Those are very useful and frequently used in later sections. 
Let $\mathcal{H}_1$ be a Hilbert space with at most countable dimension, and $\{\rho_{\theta,\xi_1}\}_{\theta\in\Theta, \xi_1\in\Xi_1}$ be a family of quantum states on $\mathcal{H}_1$. 

The following lemma shows that we can transform a quantum system in a testing problem by a unitary transformation.  

\begin{lem}\Label{lem1}
The following equivalent relation holds for any unitary operator $U$ on $\mathcal{H}_1$.
A test $T$ is a UMP (unbiased, min-max) test with level $\alpha$ for
\begin{align}\Label{lem1-1}
H_0:\theta \in \Theta_0 ~vs.~ H_1:\theta \in\Theta_1
\hbox{ with }
\mathcal{S}=\{\rho_{\theta,\xi_1} \}_{\theta\in \Theta,\xi_1\in \Xi_1}
\end{align}
if and only if
a test $U^* T  U$ is a UMP (unbiased, min-max) test with level $\alpha$ for
\begin{align}\Label{lem1-2}
H_0:\theta \in \Theta_0 ~vs.~ H_1:\theta \in\Theta_1
\hbox{ with }
\{U^* \rho_{\theta,\xi_1} U \}_{\theta\in \Theta,\xi_1\in \Xi_1}.
\end{align}
\end{lem}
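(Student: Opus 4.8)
The plan is to reduce the entire statement to the single trace identity
\[
\Tr(\rho_{\theta,\xi_1} T) = \Tr\big((U^*\rho_{\theta,\xi_1}U)(U^*TU)\big),
\]
which follows at once from the cyclic invariance of the trace together with $UU^* = U^*U = I$. First I would use this identity to record that both error probabilities are invariant under the simultaneous substitution $\rho_{\theta,\xi_1} \mapsto U^*\rho_{\theta,\xi_1}U$ and $T \mapsto U^*TU$; that is, $\alpha_T(\rho_{\theta,\xi_1}) = \alpha_{U^*TU}(U^*\rho_{\theta,\xi_1}U)$ and $\beta_T(\rho_{\theta,\xi_1}) = \beta_{U^*TU}(U^*\rho_{\theta,\xi_1}U)$ for every $\theta\in\Theta$ and $\xi_1\in\Xi_1$. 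This is the computational heart of the lemma, and the rest is bookkeeping.

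Next I would observe that $T \mapsto U^*TU$ is a bijection on the set of test operators, with inverse $S \mapsto USU^*$. Since conjugation by a unitary preserves positivity and the operator order, we have $0 \le T \le I$ if and only if $0 \le U^*TU \le I$, so this bijection indeed maps tests to tests. Combining this with the invariance of the type I error probability just established, the bijection carries the level-$\alpha$ set ${\cal T}_{\alpha}$ of problem (\ref{lem1-1}) exactly onto the corresponding level-$\alpha$ set of problem (\ref{lem1-2}). The point to emphasize is that the correspondence is exact in both directions, so that any infimum over admissible tests in one problem is an infimum over the matched admissible tests in the other.

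With the bijection between admissible test sets in hand, the three optimality notions transfer directly. For the UMP property, the defining inequality $\beta_T(\rho_{\theta,\xi_1}) \le \beta_{T'}(\rho_{\theta,\xi_1})$ against every competitor $T'$ of level $\alpha$ translates, via the trace identity and the substitution $T' \mapsto U^*T'U$, into the same inequality for $U^*TU$ against every competitor in problem (\ref{lem1-2}), and conversely; the unbiasedness condition $\beta_T(\rho_{\theta,\xi_1}) \le 1-\alpha$ is preserved pointwise by the same identity; and for the min-max property the equality of $\beta$-values survives both the supremum over $\xi_1\in\Xi_1$ and the infimum over the bijectively matched test sets, so the defining min-max equation holds for one problem if and only if it holds for the other. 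Since this is essentially a change-of-variables lemma, no step is genuinely difficult; the only care required is to confirm that $T \mapsto U^*TU$ matches the admissible test sets of the two problems exactly, so that all optima are compared over corresponding classes.
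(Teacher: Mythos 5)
Your proof is correct and matches the paper's intent exactly: the paper simply declares this lemma obvious, and the argument it has in mind is precisely your change-of-variables reasoning, namely the trace identity $\Tr(\rho_{\theta,\xi_1}T)=\Tr\bigl((U^*\rho_{\theta,\xi_1}U)(U^*TU)\bigr)$ together with the fact that $T\mapsto U^*TU$ is a bijection of the admissible test sets preserving level, unbiasedness, and the min-max quantities. Nothing is missing, and your care in noting that the bijection matches the two admissible classes exactly is the right point to make explicit.
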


The above lemma is obvious.  

The following two lemma shows that we can erase a partial system from which we can not obtain the information about the testing parameter.

\begin{lem}\Label{lem2}
Let $\mathcal{H}_2$ be a Hilbert space with at most countable dimension, and $\{\rho_{\xi_2}\}_{\xi_2\in\Xi_2}$ be a family of quantum states on $\mathcal{H}_2$. Then, the following equivalent relation holds.

A test $T$ is a UMP (unbiased, min-max) test with level $\alpha$ for
\begin{align}\Label{lem2-1}
H_0:\theta \in \Theta_0 ~vs.~ H_1:\theta \in\Theta_1
\hbox{ with }
\{\rho_{\theta,\xi_1} \}_{\theta\in \Theta,\xi_1\in \Xi_1}
\end{align}
if and only if
a test $T \otimes I$ is a UMP (unbiased, min-max) test with level $\alpha$ for
\begin{align}\Label{lem2-2}
H_0:\theta \in \Theta_0 ~vs.~ H_1:\theta \in\Theta_1
\hbox{ with }
\{\rho_{\theta,\xi_1}\otimes \rho_{\xi_2}\}_{\theta\in \Theta,\xi_1\in \Xi_1,\xi_2\in \Xi_2}.
\end{align}
\end{lem}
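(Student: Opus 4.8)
The plan is to exploit the single structural fact underlying the lemma: the appended factor $\rho_{\xi_2}$ carries no information about $\theta$, so it must drop out of every relevant error probability. The computational engine is the \emph{partial expectation} map: for a test $S$ on $\mathcal{H}_1\otimes\mathcal{H}_2$ and a state $\rho_{\xi_2}$ on $\mathcal{H}_2$, set $R_{\xi_2}(S):=\Tr_{\mathcal{H}_2}[(I\otimes\rho_{\xi_2})S]$, an operator on $\mathcal{H}_1$. I would first check that $R_{\xi_2}(S)$ is again a test, i.e. $0\le R_{\xi_2}(S)\le I$, by testing against a unit vector $|\phi\rangle\in\mathcal{H}_1$ and using $0\le S\le I$ together with $|\phi\rangle\langle\phi|\otimes\rho_{\xi_2}\ge 0$ and $\Tr\rho_{\xi_2}=1$. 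The key identity, obtained from $(\rho_{\theta,\xi_1}\otimes I)(I\otimes\rho_{\xi_2})=\rho_{\theta,\xi_1}\otimes\rho_{\xi_2}$, is
\[
\Tr\big[\rho_{\theta,\xi_1}\,R_{\xi_2}(S)\big]=\Tr\big[(\rho_{\theta,\xi_1}\otimes\rho_{\xi_2})\,S\big].
\]
Specializing to the product test $S=T\otimes I$ gives $\Tr[(\rho_{\theta,\xi_1}\otimes\rho_{\xi_2})(T\otimes I)]=\Tr[\rho_{\theta,\xi_1}T]$, so the type I and type II error probabilities of $T\otimes I$ at $\rho_{\theta,\xi_1}\otimes\rho_{\xi_2}$ coincide with those of $T$ at $\rho_{\theta,\xi_1}$ and are independent of $\xi_2$.

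From the product identity several facts are then immediate. The embedding $T\mapsto T\otimes I$ sends level-$\alpha$ (resp.\ unbiased) tests for (\ref{lem2-1}) to level-$\alpha$ (resp.\ unbiased) tests for (\ref{lem2-2}) and preserves the type II error as well as its supremum over the disturbance parameters, $\sup_{\xi_1,\xi_2}\beta_{T\otimes I}(\rho_{\theta,\xi_1}\otimes\rho_{\xi_2})=\sup_{\xi_1}\beta_T(\rho_{\theta,\xi_1})$. Conversely, for any level-$\alpha$ (resp.\ unbiased) test $S$ for (\ref{lem2-2}) and any fixed $\xi_2$, the reduced test $R_{\xi_2}(S)$ is level-$\alpha$ (resp.\ unbiased) for (\ref{lem2-1}), since the identity above transfers the bound $\Tr[(\rho_{\theta,\xi_1}\otimes\rho_{\xi_2})S]\le\alpha$ for $\theta\in\Theta_0$ (resp.\ $\beta_S\le 1-\alpha$ for $\theta\in\Theta_1$) to $R_{\xi_2}(S)$.

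For the UMP statement I would argue as follows. If $T$ is UMP for (\ref{lem2-1}), take any level-$\alpha$ test $S$ for (\ref{lem2-2}) and any target $\rho_{\theta,\xi_1}\otimes\rho_{\xi_2}$ with $\theta\in\Theta_1$; then $\beta_{T\otimes I}(\rho_{\theta,\xi_1}\otimes\rho_{\xi_2})=\beta_T(\rho_{\theta,\xi_1})\le\beta_{R_{\xi_2}(S)}(\rho_{\theta,\xi_1})=\beta_S(\rho_{\theta,\xi_1}\otimes\rho_{\xi_2})$, so $T\otimes I$ is UMP. The reverse implication needs only the embedding: any level-$\alpha$ test $T'$ for (\ref{lem2-1}) yields $T'\otimes I$, and the UMP inequality for $T\otimes I$ against $T'\otimes I$ reduces verbatim to the one for $T$ against $T'$. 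For the min-max statement the same two maps yield $\inf_S\sup_{\xi_1,\xi_2}\beta_S=\inf_T\sup_{\xi_1}\beta_T$: the embedding gives ``$\le$'', and reducing an arbitrary $S$ at any fixed $\xi_2$ gives ``$\ge$'' because $\sup_{\xi_1}\beta_{R_{\xi_2}(S)}\le\sup_{\xi_1,\xi_2}\beta_S$; since $T\otimes I$ attains exactly the value attained by $T$, optimality transfers in both directions. The unbiased variants are identical after restricting every class of tests to its unbiased subclass.

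The hard part is not the bookkeeping but the ``enlarged $\Rightarrow$'' direction of UMP and min-max optimality, where one must dominate not merely product tests but \emph{arbitrary}, possibly entangled, tests $S$ on $\mathcal{H}_1\otimes\mathcal{H}_2$. This is precisely what the partial-expectation map $R_{\xi_2}$ resolves, so the technical heart is verifying that $R_{\xi_2}(S)$ is a legitimate level-$\alpha$ (and, when required, unbiased) test together with the trace identity; once these are in hand, the fact that $\xi_2$ disappears from all product-test error probabilities makes every remaining step routine.
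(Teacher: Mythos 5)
Your proposal is correct and follows essentially the same route as the paper: your partial-expectation map $R_{\xi_2}(S)=\Tr_{\mathcal{H}_2}[(I\otimes\rho_{\xi_2})S]$ is exactly the paper's map $\Lambda_{\rho_{\xi_2}}$ defined by duality via $\Tr(\Lambda_{\rho_{\xi_2}}(X)\rho)=\Tr(X(\rho\otimes\rho_{\xi_2}))$, and the transfer of level, unbiasedness, and error probabilities proceeds identically. The only differences are cosmetic: you verify $0\le R_{\xi_2}(S)\le I$ by hand where the paper cites complete positivity, and you spell out the converse and min-max bookkeeping that the paper dismisses as obvious.
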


\begin{proof}

We assume that $T$ is a UMP (unbiased) test with level $\alpha$ for $(\ref{lem2-1})$. We define a completely positive map $\Lambda_{\rho_{\xi_2}}$ for each $\xi_2\in\Xi_2$ from the space of bounded self-adjoint operators on $\mathcal{H}\otimes\mathcal{H}'$ to that on $\mathcal{H}$ as 
\[
\mathrm{Tr}(\Lambda_{\rho_{\xi_2}}(X)\rho) = \mathrm{Tr}(X(\rho\otimes\rho_{\xi_2}))
\]
holds for any trace class operator $\rho$ on $\mathcal{H}$. Let $T'$ be an arbitrary test with level $\alpha$ for $(\ref{lem2-2})$. Then, $0\le\Lambda_{\rho_{\xi_2}}(T')\le I$ holds by the completely positivity of $\Lambda_{\rho_{\xi_2}}$. Moreover, since $T'$ is a test with level $\alpha$, 
\[
\mathrm{Tr}(\Lambda_{\rho_{\xi_2}}(T')\rho_{\theta,\xi_1})=\mathrm{Tr}(T'(\rho_{\theta,\xi_1}\otimes\rho_{\xi_2}))\le \alpha
\]
holds for any $\theta\in\Theta_0$, that is, $\Lambda_{\rho_{\xi_2}}(T')$ is a test with level $\alpha$ for $(\ref{lem2-1})$. Similarly, if $T'$ satisfies unbiasedness, $\Lambda_{\rho_{\xi_2}}(T')$ satisfies unbiasedness too. Therefore, we get
\[
\beta_{T\otimes I}(\rho_{\theta,\xi_1}\otimes\rho_{\xi_2})
=\beta_{T}(\rho_{\theta,\xi_1})
\le \beta_{\Lambda_{\rho_{\xi_2}}(T')}(\rho_{\theta,\xi_1})
=\beta_{T'}(\rho_{\theta,\xi_1}\otimes\rho_{\xi_2})
\]
for any $\theta\in\Theta_1,\xi_1\in\Xi_1,\xi_2\in\Xi_2$. Thus, $T\otimes I$ is a UMP (unbiased) test with level $\alpha$ for $(\ref{lem2-2})$.

Similarly, when $T$ is a UMP (unbiased) min-max test with level $\alpha$ for $(\ref{lem2-1})$, $T \otimes I$ is a UMP (unbiased) min-max test with level $\alpha$ for $(\ref{lem2-2})$. 

The converse is obvious.

\end{proof}

\begin{lem}\Label{lem3}

Let $\mathcal{H}_3$ be a Hilbert space with at most countable dimension, and $\{\rho_{\theta,\xi_1,\xi_3}\}_{\theta\in\Theta, \xi_1\in\Xi_1, \xi_3\in\Xi_3}$ be a family of quantum states on $\mathcal{H}_3$. We assume that $V$ is an irreducible unitary (projective) representation of an amenable group $G$ on $\mathcal{H}_3$ and covariant concerning the disturbance parameter space $\Xi_1\times\Xi_3$. In addition, if $G$ is non-compact, we assume that the family of quantum states $\{\rho_{\theta,\xi_1}\otimes \rho_{\theta,\xi_1,\xi_3} \}_{\theta\in \Theta,\xi_1\in \Xi_1,\xi_3\in \Xi_3}$ is complete. Then, the following equivalent relation holds. 

A test $T$ is a UMP (unbiased) min-max test with level $\alpha$ for
\begin{align}\Label{lem3-1}
H_0:\theta \in \Theta_0 ~vs.~ H_1:\theta \in\Theta_1
\hbox{ with }
\{\rho_{\theta,\xi_1} \}_{\theta\in \Theta,\xi_1\in \Xi_1}
\end{align}
if and only if
a test $T \otimes I$ is a UMP (unbiased) min-max test with level $\alpha$ for
\begin{align}\Label{lem3-2}
H_0:\theta \in \Theta_0 ~vs.~ H_1:\theta \in\Theta_1
\hbox{ with }
\{\rho_{\theta,\xi_1}\otimes \rho_{\theta,\xi_1,\xi_3} \}_{\theta\in \Theta,\xi_1\in \Xi_1,\xi_3\in \Xi_3}.
\end{align}
\end{lem}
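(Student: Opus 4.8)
The plan is to apply quantum Hunt--Stein theorem to the representation $g\mapsto I\otimes V_g$ of $G$ on $\mathcal{H}_1\otimes\mathcal{H}_3$ and to exploit the irreducibility of $V$ to pin down the invariant tests. The representation $g\mapsto I\otimes V_g$ is covariant concerning the disturbance space $\Xi_1\times\Xi_3$ for the family of (\ref{lem3-2}): since $I\otimes V_g$ acts trivially on the first tensor factor, $(I\otimes V_g)(\rho_{\theta,\xi_1}\otimes\rho_{\theta,\xi_1,\xi_3})(I\otimes V_g)^{\ast}=\rho_{\theta,\xi_1}\otimes\rho_{\theta,g\cdot(\xi_1,\xi_3)}$, which stays inside the family once one checks that the $G$-action fixes the $\Xi_1$-coordinate. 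Thus Theorem \ref{cpt.H-S} applies when $G$ is compact and Theorem \ref{H-S} applies when $G$ is non-compact, the latter being legitimate precisely because of the assumed amenability of $G$ and completeness of $\{\rho_{\theta,\xi_1}\otimes\rho_{\theta,\xi_1,\xi_3}\}$.

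First I would determine the invariant tests. A bounded self-adjoint operator $S$ on $\mathcal{H}_1\otimes\mathcal{H}_3$ satisfies $(I\otimes V_g)S(I\otimes V_g)^{\ast}=S$ for all $g\in G$ if and only if $S$ lies in the commutant of $\{I\otimes V_g\}_{g\in G}$; since $V$ is irreducible, Schur's lemma gives $\{V_g\}'=\mathbb{C}I$, so this commutant equals $\mathcal{B}(\mathcal{H}_1)\otimes\mathbb{C}I$, and hence $S=T'\otimes I$ for a unique bounded self-adjoint $T'$ on $\mathcal{H}_1$ --- the exact analogue of Lemma \ref{inv1}. For such an invariant test one computes
\[
\Tr\big[(\rho_{\theta,\xi_1}\otimes\rho_{\theta,\xi_1,\xi_3})(T'\otimes I)\big]=\Tr[\rho_{\theta,\xi_1}T'],
\]
so the $\xi_3$-dependence disappears. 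Consequently $T'\otimes I\mapsto T'$ is a bijection from invariant level-$\alpha$ tests for (\ref{lem3-2}) onto level-$\alpha$ tests for (\ref{lem3-1}) that preserves unbiasedness and satisfies $\sup_{\xi_1,\xi_3}\beta_{T'\otimes I}=\sup_{\xi_1}\beta_{T'}$, whence the min-max value over invariant tests for (\ref{lem3-2}) equals the min-max value over all tests for (\ref{lem3-1}).

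Next I would chain the identities. Quantum Hunt--Stein theorem (Theorem \ref{cpt.H-S} or \ref{H-S}) equates the min-max value over invariant tests for (\ref{lem3-2}) with the min-max value over all tests for (\ref{lem3-2}); combined with the previous paragraph this yields, for every $\theta\in\Theta_1$,
\[
\inf_{T'\in\mathcal{T}_{\alpha}}\sup_{\xi_1}\beta_{T'}(\rho_{\theta,\xi_1})=\inf_{T'}\sup_{\xi_1,\xi_3}\beta_{T'}(\rho_{\theta,\xi_1}\otimes\rho_{\theta,\xi_1,\xi_3}),
\]
where the left infimum runs over level-$\alpha$ tests for (\ref{lem3-1}) and the right over those for (\ref{lem3-2}), together with the matching identity for unbiased tests coming from the second equation of each theorem. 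Since $T\otimes I\in\mathcal{T}_{\alpha}$ for (\ref{lem3-2}) if and only if $T\in\mathcal{T}_{\alpha}$ for (\ref{lem3-1}) (and likewise for unbiasedness) and $\sup_{\xi_1,\xi_3}\beta_{T\otimes I}=\sup_{\xi_1}\beta_{T}$, the statement ``$T$ attains the (\ref{lem3-1})-min-max value'' is literally identical to ``$T\otimes I$ attains the (\ref{lem3-2})-min-max value''. By the displayed value equality these two optimality conditions coincide for all $\theta\in\Theta_1$, which gives both directions of the equivalence at once, in the plain and in the unbiased cases.

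The step I expect to be most delicate is the commutant computation $\{I\otimes V_g\}'=\mathcal{B}(\mathcal{H}_1)\otimes\mathbb{C}I$: because $\mathcal{H}_3$ may be infinite-dimensional and $V$ may be only projective, one should invoke Schur's lemma in its von Neumann-algebra form (the von Neumann algebra generated by $\{I\otimes V_g\}$ is $\mathbb{C}I\otimes\mathcal{B}(\mathcal{H}_3)$, whose commutant is $\mathcal{B}(\mathcal{H}_1)\otimes\mathbb{C}I$) rather than the finite-dimensional version. The remaining care is bookkeeping: verifying that the $G$-action on $\Xi_1\times\Xi_3$ indeed fixes the $\Xi_1$-coordinate, so that the covariance hypothesis (\ref{cov}) needed to invoke Hunt--Stein genuinely holds for $I\otimes V$ on the family of (\ref{lem3-2}).
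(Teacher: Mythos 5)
Your proposal is correct and follows essentially the same route as the paper's proof: invoke quantum Hunt--Stein (Theorem \ref{cpt.H-S} or Theorem \ref{H-S}) to restrict attention to tests invariant under $I\otimes V$, identify these via Schur's lemma with tests of the form $T'\otimes I$, and observe that the $\xi_3$-dependence of the error probabilities then drops out, so optimality transfers in both directions. Your extra care about the von Neumann-algebra form of the commutant computation $\{I\otimes V_g\}'=\mathcal{B}(\mathcal{H}_1)\otimes\mathbb{C}I$ and about the $G$-action fixing the $\Xi_1$-coordinate merely makes explicit points the paper leaves implicit.
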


\begin{proof}
When $T$ is a UMP (unbiased) min-max test with level $\alpha$ for $(\ref{lem3-1})$, we show that $T\otimes I$ is a UMP (unbiased) min-max test with level $\alpha$ for $(\ref{lem3-2})$. Let $T'$ be a test with level $\alpha$ for $(\ref{lem3-2})$. We can assume that $T'$ is invariant concerning to the irreducible (projective) representation $V$ due to quantum Hunt-Stein theorem (Theorem \ref{cpt.H-S} or Theorem \ref{H-S}). Thus, $T'$ is represented as $T''\otimes I$ by using the test $T''$ on $\mathcal{H}$ due to Schur's lemma. Then, since $T'$ is a test with level $\alpha$ for $(\ref{lem3-2})$,  the test $T''$ is a test with level $\alpha$ for $(\ref{lem3-1})$. Therefore, the type II error probability of $T\otimes I$ is less than or equal to that of $T'=T''\otimes I$.

The converse is obvious.

\end{proof}

We get the following theorem by summing up the above three lemmas.

\begin{thm}\Label{thm3}
Let $\mathcal{H}_1,\mathcal{H}_2,\mathcal{H}_3$ be Hilbert spaces with at most countable dimension, and $\mathcal{S}_1:=\{\rho_{\theta,\xi_1}\}_{\theta\in\Theta, \xi_1\in\Xi_1},\mathcal{S}_2:=\{\rho_{\xi_2}\}_{\xi_2\in\Xi_2},\mathcal{S}_3:=\{\rho_{\theta,\xi_1,\xi_3}\}_{\theta\in\Theta, \xi_1\in\Xi_1, \xi_3\in\Xi_3}$ be families of quantum states on $\mathcal{H}_1,\mathcal{H}_2,\mathcal{H}_3$, respectively. We assume that $V$ is an irreducible (projective) representation of an amenable group $G$ on $\mathcal{H}_3$ and covariant concerning the disturbance parameter space $\Xi_1\times\Xi_3$ of the family of quantum states $\{\rho_{\theta,\xi_1,\xi_3}\}_{\theta\in\Theta, \xi_1\in\Xi_1,  \xi_3\in\Xi_3}$. In addition, if $G$ is non-compact, we assume that the family of quantum states $\{\rho_{\theta,\xi_1}\otimes \rho_{\theta,\xi_1,\xi_3} \}_{\theta\in \Theta,\xi_1\in \Xi_1,\xi_3\in \Xi_3}$ is complete. Then, the following equivalent relation holds. 

A test $T$ is a UMP (unbiased) min-max test with level $\alpha$ for
\begin{align}\Label{thm3-1}
H_0:\theta \in \Theta_0 ~vs.~ H_1:\theta \in\Theta_1
\hbox{ with }
\{\rho_{\theta,\xi_1} \}_{\theta\in \Theta,\xi_1\in \Xi_1}
\end{align}
if and only if
a test $U^*(T \otimes I_{\mathcal{H}_2}\otimes I_{\mathcal{H}_3})U$ is a UMP (unbiased) min-max test with level $\alpha$ for
\begin{eqnarray}\Label{thm3-2}
&H_0:\theta \in \Theta_0 ~vs.~ H_1:\theta \in\Theta_1&\nonumber\\
&\hbox{ with }
\{U^*(\rho_{\theta,\xi_1}\otimes \rho_{\xi_2}\otimes\rho_{\theta,\xi_1, \xi_3})U \}_{\theta\in \Theta,\xi_1\in \Xi_1,\xi_2\in \Xi_2,\xi_3\in \Xi_3},&
\end{eqnarray}
where $U$ is a unitary operator on $\mathcal{H}_1\otimes\mathcal{H}_2\otimes\mathcal{H}_3$.

In addition, when $\mathcal{H}_3$ is the empty set $\phi$, a test $T$ is a UMP (unbiased, min-max) test with level $\alpha$ for $(\ref{thm3-1})$
if and only if
a test $U'^*(T \otimes I_{\mathcal{H}_2})U'$ is a UMP (unbiased, min-max) test with level $\alpha$ for
\begin{align}\Label{thm3-3}
H_0:\theta \in \Theta_0 ~vs.~ H_1:\theta \in\Theta_1
\hbox{ with }
\{U'^*(\rho_{\theta,\xi_1}\otimes \rho_{\xi_2})U' \}_{\theta\in \Theta,\xi_1\in \Xi_1,\xi_2\in \Xi_2},
\end{align}
where $U'$ is a unitary operator on $\mathcal{H}_1\otimes\mathcal{H}_2$.
\end{thm}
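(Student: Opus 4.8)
The plan is to derive Theorem \ref{thm3} by chaining the three reduction Lemmas \ref{lem1}, \ref{lem2} and \ref{lem3}. Each of these is an \emph{if and only if} statement relating optimality of a test before and after one elementary modification of the problem, so a composition of them is again an equivalence; it then suffices to build a path from the family in (\ref{thm3-1}) to the family in (\ref{thm3-2}) in which each edge is covered by exactly one lemma, while carrying the optimality notions along.

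I would apply Lemma \ref{lem3} first, to the base family $\{\rho_{\theta,\xi_1}\}$ on $\mathcal{H}_1$. The reason for placing it first is that its hypotheses---irreducibility and covariance of the representation $V$ on $\mathcal{H}_3$, and, in the non-compact case, completeness of $\{\rho_{\theta,\xi_1}\otimes\rho_{\theta,\xi_1,\xi_3}\}$---are exactly the hypotheses assumed in Theorem \ref{thm3}, so no extra verification is needed. This step gives that $T$ is a UMP (unbiased) min-max test for (\ref{thm3-1}) if and only if $T\otimes I_{\mathcal{H}_3}$ is a UMP (unbiased) min-max test for $\{\rho_{\theta,\xi_1}\otimes\rho_{\theta,\xi_1,\xi_3}\}$ on $\mathcal{H}_1\otimes\mathcal{H}_3$.

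Next I would adjoin the $\theta$-independent family $\{\rho_{\xi_2}\}$ on $\mathcal{H}_2$ via Lemma \ref{lem2}, which carries no completeness requirement and hence applies freely to the family just obtained; it replaces the test by $T\otimes I_{\mathcal{H}_3}\otimes I_{\mathcal{H}_2}$ and the family by $\{\rho_{\theta,\xi_1}\otimes\rho_{\theta,\xi_1,\xi_3}\otimes\rho_{\xi_2}\}$. Finally, the canonical reordering of the tensor factors from $\mathcal{H}_1\otimes\mathcal{H}_3\otimes\mathcal{H}_2$ to $\mathcal{H}_1\otimes\mathcal{H}_2\otimes\mathcal{H}_3$ is a unitary, and composing it with the arbitrary unitary $U$ yields a single unitary to which Lemma \ref{lem1} applies; conjugating produces the test $U^*(T\otimes I_{\mathcal{H}_2}\otimes I_{\mathcal{H}_3})U$ and the family $\{U^*(\rho_{\theta,\xi_1}\otimes\rho_{\xi_2}\otimes\rho_{\theta,\xi_1,\xi_3})U\}$, which is precisely (\ref{thm3-2}). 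For the special case $\mathcal{H}_3=\phi$ I would omit the first step entirely and chain only Lemmas \ref{lem2} and \ref{lem1}.

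The only genuinely delicate point, and the place I expect to spend the most care, is the bookkeeping of the three optimality qualifiers. Lemmas \ref{lem1} and \ref{lem2} preserve all of UMP, unbiasedness and the min-max property, whereas Lemma \ref{lem3} preserves only the min-max versions (UMP min-max and unbiased min-max). This asymmetry is exactly what dictates the two forms of the conclusion: when $\mathcal{H}_3$ is present the equivalence can be asserted only for UMP (unbiased) min-max tests, while in the case $\mathcal{H}_3=\phi$, where Lemma \ref{lem3} is never invoked, the stronger equivalence for UMP (unbiased, min-max) tests survives. Every other step is a routine invocation of a lemma whose hypotheses have already been matched to those of the theorem.
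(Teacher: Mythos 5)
Your proposal is correct and is essentially the paper's own argument: the paper derives Theorem \ref{thm3} precisely by ``summing up'' Lemmas \ref{lem1}, \ref{lem2} and \ref{lem3}, and your ordering (Lemma \ref{lem3} applied first so that the assumed completeness of $\{\rho_{\theta,\xi_1}\otimes\rho_{\theta,\xi_1,\xi_3}\}$ is exactly what is needed, then Lemma \ref{lem2}, then Lemma \ref{lem1} absorbing the tensor-factor reordering into $U$) is the intended chain. Your bookkeeping of the qualifiers---Lemma \ref{lem3} preserving only the min-max variants, which explains why the $\mathcal{H}_3=\phi$ case retains the stronger UMP (unbiased, min-max) equivalence---is likewise faithful to, and indeed more explicit than, the paper's one-line justification.
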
 

We considered some reduction methods on quantum hypothesis testing problems. 
Those results are effectively used in testing problems for quantum Gaussian states. Note that those reduction methods are not peculiar to quantum hypothesis testing. Indeed, when we construct optimal tests in the testing problems for Gaussian distributions, we perform the suitable orthogonal transformation (called the Helmert transformation) to the random variables distributed according to the Gaussian distributions, and erase useless parts in the transformed random variables. By those reduction, the testing problems for Gaussian distributions are transformed to fundamental problems, and optimal tests for the fundamental problems are constructed by using $\chi^2$, $t$ and $F$ distribution. In later sections, we treat testing problems for quantum Gaussian states. Then, we transform the testing problems to some fundamental problems by using the above reduction methods, and derive optimal tests for all testing problems by constructing optimal tests for fundamental problems.


\section{Relation between classical and quantum hypothesis testing}\Label{classical result}

~

In this section, we describe the relation between classical hypothesis testing and quantum hypothesis testing of the commutative case, 
and summed up some fundamental facts in classical hypothesis testing. 
Thereafter, we treat a quantum counterpart of $\chi^2$ and $F$ tests.

\subsection{Hypothesis testing for a commutative quantum states family}~

We consider a hypothesis testing for a family of commutative quantum states. Let $\mathcal{H}$ be a Hilbert space with at most countable dimension, and $\mathcal{S}=\{\rho_{\theta,\xi}\}_{\theta\in\Theta, \xi\in\Xi}$ be a family of commutative quantum states on  $\mathcal{H}$. By using a suitable orthonormal basis $\{|x\rangle\}_{x\in\mathcal{X}}$ of $\mathcal{H}$, each state of $\{\rho_{\theta,\xi}\}_{\theta\in\Theta, \xi\in\Xi}$ is represented as
\[
\rho_{\theta,\xi}=\displaystyle\sum_{x\in\mathcal{X}}p_{\theta,\xi}(x)|x\rangle\langle x|,
\]
where $\{p_{\theta,\xi}\}_{\theta\in\Theta, \xi\in\Xi}$ is a family of probability distribution on $\mathcal{X}$. Therefore,  a family of commutative quantum states $\mathcal{S}$ is identified with a family of probability distribution whose each probability is composed by the eigenvalues of the quantum states in $\mathcal{S}$.

Then, for an arbitrary test operator $T$, when we define the diagonal part $T_{diag}$ of the test by a test operator $\sum_{x\in\mathcal{X}} \langle x|T|x\rangle |x\rangle\langle x|$, the diagonal part satisfies $\mathrm{Tr}\rho_{\theta,\xi}T=\mathrm{Tr}\rho_{\theta,\xi}T_{diag}$. In particular, the type I and type II error probabilities satisfy $\alpha_{T}(\theta)=\alpha_{T_{diag}}(\theta)$ and $\beta_{T}(\theta)=\beta_{T_{diag}}(\theta)$. 
Since the efficiency of an arbitrary test can be achieved by the diagonal part of the test, 
our tests can be restricted to tests which have the form $\sum_{x\in\mathcal{X}} t_x |x\rangle\langle x|~(t_x\in[0,1])$. Therefore, a test operator $T=\sum_{x\in\mathcal{X}} t_x |x\rangle\langle x|$ for a family of commutative quantum states is identified with a test function $\ph_T(x):=t_x\in[0,1]$ on $\mathcal{X}$ which is composed by the eigenvalues of the test $T$. 

As stated above , the formulation of quantum hypothesis testing for a family of commutative quantum states is identified with the formulation of classical hypothesis testing for a family of probability distributions. Therefore, classical hypothesis testing theory is often effective to consider some kind of problems in quantum hypothesis testing. 

In classical hypothesis testing, problems for the exponential family is intensively studied, and results for those problems is necessary in hypothesis testing for quantum Gaussian states. Here, an $l$-parameter exponential family is defined as a parametric family of probability distributions $\mathcal{P}=\{p_{\theta}\}_{\theta\in\Theta\subset\R^l}$ 
($\Theta\subset\R^l$ is a convex set) 
on a measurable set $\Omega$ such that all probability distributions are represented as 
\[
p_{\theta}(\omega)=\mathrm{exp}\left(F(\omega)+\displaystyle\sum_{k=1}^l X_l(\omega)\theta_l-\psi(\theta)\right)
\]
where $F, X_1,\cdot\cdot\cdot, X_l$ are $\R$-valued random variables not depending on parameters $\theta=(\theta_1,\cdot\cdot\cdot,\theta_l)$, and $\psi:\Theta\to\R$ is a function \cite{A-N}. In the above exponential family, the parameters $\theta=(\theta_1,\cdot\cdot\cdot,\theta_l)$ are called natural parameters. 

In the following subsections, we review fundamental results in classical hypothesis testing, and derive optimal tests for two problems on quantum Gaussian states by using the results.

\subsection{Hypothesis testing with one-parameter exponential family}~

In order to treat a typical commutative case,
we employ the following known theorem for a one-parameter exponential family. 

\begin{thm}\cite{Leh2}\Label{cla1}
Let $\mathcal{P}=\{p_{\theta}\}_{\theta\in\Theta}$ be a one-parameter exponential family with a natural parameter $\theta$. 
Then, for any real number $C\in\Theta$, there exist 
constants $\gamma\in (0,1]$, $K\in\R$ and a test function
\[
\ph(\omega)=\left\{
\begin{array}{ll}
 0&\left( X_1(\omega)<K\right)\\
 \gamma&\left( X_1(\omega)=K\right)\\
 1&\left( X_1(\omega)>K\right)
\end{array}\right. 
\]
satisfying 
\begin{equation}\Label{alpha.const.0}
\mathrm{E}_{p_{C}}[\ph(X_1)]=\alpha,
\end{equation}
where $\mathrm{E}_{p_{C}}[~]$ means an expectation under the distribution $p_{C}$.
The test function $\ph$ is a UMP test function for the hypothesis testing problem
\[
H_0:\theta\le C~vs.~H_1:\theta>C,
\hbox{ with }
\{p_{\theta}\}_{\theta \in \Theta}.
\]
\end{thm}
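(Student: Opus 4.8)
The plan is to exploit the monotone likelihood ratio (MLR) structure of the exponential family and to combine the Neyman-Pearson lemma with a monotonicity argument for the power function. First I would fix an arbitrary alternative $\theta_1>C$ and consider the simple-versus-simple problem $H_0:\theta=C$ vs. $H_1:\theta=\theta_1$. Writing the likelihood ratio
\[
\frac{p_{\theta_1}(\omega)}{p_{C}(\omega)}
=\exp\Big((\theta_1-C)X_1(\omega)-(\psi(\theta_1)-\psi(C))\Big),
\]
I observe that, since $\theta_1-C>0$, this ratio is a strictly increasing function of $X_1(\omega)$. Hence each event $\{p_{\theta_1}/p_{C}>c\}$ coincides with an event $\{X_1>K\}$, and the Neyman-Pearson test against $\theta_1$ has exactly the threshold form of $\ph$.

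To pin down the constants $K$ and $\gamma$ so that the level at $\theta=C$ is exactly $\alpha$, I would use the standard randomization argument: the map $K\mapsto \mathrm{E}_{p_{C}}[1_{\{X_1>K\}}]$ is nonincreasing and right-continuous, dropping from $1$ to $0$, so there is a value $K$ with $\mathrm{E}_{p_{C}}[1_{\{X_1>K\}}]\le\alpha\le \mathrm{E}_{p_{C}}[1_{\{X_1\ge K\}}]$; choosing $\gamma\in(0,1]$ to interpolate at the atom $\{X_1=K\}$ gives $\mathrm{E}_{p_{C}}[\ph(X_1)]=\alpha$, which is $(\ref{alpha.const.0})$. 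The crucial point is that neither $K$ nor $\gamma$ depends on the particular $\theta_1$: both are fixed by $p_{C}$ and $\alpha$ alone. Therefore the single test $\ph$ is simultaneously Neyman-Pearson optimal against every alternative $\theta_1>C$.

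It remains to (i) verify that $\ph$ has level $\alpha$ on the whole composite null $H_0:\theta\le C$ and (ii) upgrade optimality from the simple null $\theta=C$ to the composite null. Both follow once the power function $\theta\mapsto \mathrm{E}_{p_{\theta}}[\ph(X_1)]$ is shown to be nondecreasing, and I expect this monotonicity to be the main obstacle. It can be obtained from the MLR property: for $\theta<\theta'$ the pair $(p_{\theta},p_{\theta'})$ has monotone likelihood ratio in $X_1$, and since $\ph$ is nondecreasing in $X_1$, a Neyman-Pearson-type correlation inequality gives $\mathrm{E}_{p_{\theta}}[\ph]\le \mathrm{E}_{p_{\theta'}}[\ph]$ (cf. \cite{Leh2}). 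Granting this, any $\theta\le C$ satisfies $\mathrm{E}_{p_{\theta}}[\ph]\le \mathrm{E}_{p_{C}}[\ph]=\alpha$, so $\ph$ has level $\alpha$; and for any competing level-$\alpha$ test $\ph'$ we have $\mathrm{E}_{p_{C}}[\ph']\le\alpha$, whence the Neyman-Pearson optimality at $\theta=C$ yields $\mathrm{E}_{p_{\theta_1}}[\ph']\le \mathrm{E}_{p_{\theta_1}}[\ph]$ for each $\theta_1>C$. Since $\theta_1>C$ is arbitrary, $\ph$ is a UMP test for $H_0:\theta\le C$ vs. $H_1:\theta>C$.
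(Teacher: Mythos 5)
Your proposal is correct, and it is precisely the classical Neyman--Pearson-plus-monotone-likelihood-ratio argument (the proof of this statement in Lehmann \cite{Leh2}, which the paper cites instead of giving its own proof of Theorem \ref{cla1}); the key observations — that the threshold form and the constants $K,\gamma$ are determined by $p_C$ and $\alpha$ alone, hence uniformly optimal over all $\theta_1>C$, and that MLR plus the monotonicity of $\ph$ in $X_1$ yields a nondecreasing power function controlling the level on all of $\theta\le C$ — are exactly the steps of that standard proof. The only points left implicit, namely the boundary case $\alpha=\mathrm{E}_{p_{C}}[1_{\{X_1>K\}}]$ (where $\gamma\in(0,1]$ is recovered by shifting $K$ to the next atom) and the MLR correlation inequality you invoke, are routine and covered by the same reference.
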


Next, we apply the above theorem to a quantum hypothesis testing problem. 
When we define a one-parameter exponential family $\{p_{n,w}\}_{w\in\R_{<0}}$ on $\Z_{\ge 0}^n$ by
\[
p_{n,w}(k)
 =\mathrm{exp}\left\{X_1(k)w-\psi(w)\right\}
\]
where $k=(k_1,\cdot\cdot\cdot,k_n)\in\Z_{\ge 0}^n, X_1(k):=\sum k_j,~ \psi(w):=-n\mathrm{log}(1-\mathrm{e}^{w})$, the family of quantum Gaussian states $\{\rho_{0,N}^{\otimes n}\}_{N\in\R_{>0}}$ is identified with the exponential family $\{p_{n,w}\}_{w\in\R_{>0}}$ since
\[
\rho_{0,N}^{\otimes n}
=\displaystyle\sum_{k\in\Z_{\ge 0}^n}p_{n,\mathrm{log}\left(\frac{N}{N+1}\right)}(k)|k \rangle\langle k|
\]
holds. Thus, a quantum hypothesis testing problem
\[
H_0:N\in (0,N_0]~vs.~H_1:N\in(N_0,\infty)
\hbox{ with }
\{\rho_{0,N}^{\otimes n}\}_{N\in\R_{> 0}} \eqno{(\mathrm{H}\textrm{-}\mathrm{\chi^2})}
\]
is translated to a classical hypothesis testing problem
\[
H_0:w\in \left(-\infty,\mathrm{log}\frac{N_0}{1+N_0}\right]~vs.~H_1:w\in \left(\mathrm{log}\frac{N_0}{1+N_0},0\right)
\hbox{ with }
\{p_{n,w}\}_{w\in\R_{< 0}}.
\]

Since $\{p_{n,w}\}_{w\in\R_{<0}}$ is a one-parameter exponential family with the natural parameter $w$, 
the following test function $\ph$ is a UMP test for the above hypothesis testing problem due to Theorem \ref{cla1}. The test function $\ph$ is defined by
\begin{equation}\Label{testfunction1}
\ph(k):=\left\{
\begin{array}{ll}
 0&( X_1(k)<K_0)\\
 \gamma&( X_1(k)=K_0)\\
 1&( X_1(k)>K_0),
\end{array}\right. 
\end{equation}
where $X_1(k):=\sum_{j=1}^n k_j$ for $k=(k_1,\cdot\cdot\cdot,k_n)\in\Z_{\ge0}^n$ and,
the constants $K_0\in\Z_{\ge 0}$ and $\gamma\in(0,1]$ are uniquely determined by 
\begin{eqnarray}\Label{alpha.const.3"}
&1-\displaystyle\sum_{K=0}^{K_0}\binom{K+n-1}{n-1}\left(\frac{1}{N_0+1}\right)^n \left(\frac{N_0}{N_0+1}\right)^K
< \alpha& \nonumber\\
&\le 1-\displaystyle\sum_{K=0}^{K_0-1}\binom{K+n-1}{n-1}\left(\frac{1}{N_0+1}\right)^n \left(\frac{N_0}{N_0+1}\right)^K,&\\
&\gamma:=\frac{\alpha-\left(1-\displaystyle\sum_{K=0}^{K_0}\binom{K+n-1}{n-1}\left(\frac{1}{N_0+1}\right)^n \left(\frac{N_0}{N_0+1}\right)^K\right)}
{\binom{K_0+n-1}{n-1}\left(\frac{1}{N_0+1}\right)^n \left(\frac{N_0}{N_0+1}\right)^{K_0}}&
\end{eqnarray} 
which corresponds to the condition $(\ref{alpha.const.0})$ in Theorem \ref{cla1}. Therefore,  we get the following theorem.

\begin{thm}\Label{chi2 test}
For the hypothesis testing problem $(\mathrm{H}\textrm{-}\mathrm{\chi^2})$, the test 
\[
T_{\alpha,N_0}^{[\mathrm{\chi^2}],n}:=\displaystyle\sum_{k\in\Z_{\ge 0}^n}\ph(k)|k\rangle\langle k|.
\]
 is a UMP test with level $\alpha$.
\end{thm}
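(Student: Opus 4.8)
The plan is to reduce $(\mathrm{H}\textrm{-}\mathrm{\chi^2})$ to the classical testing problem already solved in the text, via the commutative identification of Section 4.1. First I would note that $\{\rho_{0,N}^{\otimes n}\}_{N\in\R_{>0}}$ is a commutative family: by the $\theta=0$ case of Lemma \ref{prob}, each $\rho_{0,N}$ is diagonal in the number basis with the geometric eigenvalue distribution $P_{0,N}(k)=\frac{1}{N+1}\left(\frac{N}{N+1}\right)^k$, so $\rho_{0,N}^{\otimes n}$ is diagonal in $\{|k\rangle\}_{k\in\Z_{\ge 0}^n}$ with eigenvalue distribution $p_{n,w}$ for $w=\log\frac{N}{N+1}$. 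By the argument of Section 4.1, replacing any test operator by its diagonal part changes neither error probability, so a test operator is UMP for the quantum problem exactly when its associated test function is UMP for the classical problem for $\{p_{n,w}\}$.

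Next I would align the hypotheses. Since $w=\log\frac{N}{N+1}$ is strictly increasing in $N$, the quantum hypotheses $N\in(0,N_0]$ vs.\ $N\in(N_0,\infty)$ become $w\le C$ vs.\ $w>C$ with $C=\log\frac{N_0}{1+N_0}$, and $\{p_{n,w}\}$ is the one-parameter exponential family with natural parameter $w$ and statistic $X_1(k)=\sum_j k_j$. Theorem \ref{cla1} then directly yields that a threshold test function $\ph$ of the form (\ref{testfunction1}), calibrated by $\mathrm{E}_{p_C}[\ph(X_1)]=\alpha$ as in (\ref{alpha.const.0}), is a UMP test function of level $\alpha$ for the classical problem.

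It then remains to check that the calibration (\ref{alpha.const.0}) is exactly the condition (\ref{alpha.const.3"}) defining $K_0$ and $\gamma$. For this I would identify the law of $X_1=\sum_{j=1}^n k_j$ under $p_{n,C}$: as a sum of $n$ i.i.d.\ geometric variables with parameter $N_0$, it is negative binomial, so the weight of $\{X_1=K\}$ equals $\binom{K+n-1}{n-1}\left(\frac{1}{N_0+1}\right)^n\left(\frac{N_0}{N_0+1}\right)^K$. Substituting this into $\mathrm{E}_{p_C}[\ph(X_1)]=\alpha$ reproduces precisely the inequalities and the formula (\ref{alpha.const.3"}) for $K_0$ and $\gamma$, confirming that $\ph$ has level $\alpha$. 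Finally, translating back through Section 4.1, $T_{\alpha,N_0}^{[\mathrm{\chi^2}],n}=\sum_{k}\ph(k)|k\rangle\langle k|$ is the test operator carrying the UMP test function $\ph$, and is therefore a UMP test of level $\alpha$ for $(\mathrm{H}\textrm{-}\mathrm{\chi^2})$. There is no substantial obstacle here: the statement is essentially a repackaging of the preceding construction, and the only genuine computation is the negative-binomial identity matching (\ref{alpha.const.3"}); the subtle point to state carefully is the commutative reduction licensing the passage between quantum test operators and classical test functions.
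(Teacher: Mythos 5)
Your proposal is correct and follows essentially the same route as the paper: the commutative (diagonal-part) reduction of Section 4.1, the identification of $\{\rho_{0,N}^{\otimes n}\}_{N\in\R_{>0}}$ with the one-parameter exponential family $\{p_{n,w}\}$ under the increasing reparametrization $w=\log\frac{N}{N+1}$, an appeal to Theorem \ref{cla1}, and the negative-binomial law of $X_1=\sum_j k_j$ to match the calibration $(\ref{alpha.const.0})$ with $(\ref{alpha.const.3"})$. If anything, you spell out the diagonal-part step and the negative-binomial computation more explicitly than the paper, which simply asserts the correspondence and concludes the theorem.
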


We constructed a test $T_{\alpha,N_0}^{[\mathrm{\chi^2}],n}$ for the hypothesis testing problem $(\mathrm{H}\textrm{-}\mathrm{\chi^2})$. In the classical system, the hypothesis testing problem $(\mathrm{H}\textrm{-}\mathrm{\chi^2})$ corresponds to
\[
H_0:N\in(0,N_0]~vs.~H_1:N\in (N_0,\infty) ~\hbox{with}~\{\mathrm{G}_{0,N}^{(n)}\}_{N\in\R_{>0}}
\]
where $\mathrm{G}_{0,N}$ is a Gaussian distribution with the mean $0$ and the variance $N$. A $\chi^2$ test is a UMP test for the above classical hypothesis testing problem. Therefore, the test $T_{\alpha,N_0}^{[\mathrm{\chi^2}],n}$ can be regarded as a quantum counterpart of the $\chi^2$ test, and we call the test $T_{\alpha,N_0}^{[\mathrm{\chi^2}],n}$ a quantum $\chi^2$ test with $n$ degrees of freedom and level $\alpha$.

\subsection{Hypothesis testing with a two-parameter exponential family}~

In order to treat another typical commutative case,
we employ the following known theorem for a two-parameter exponential family. 

\begin{thm}\cite{Leh2}\Label{cla2}~
Let $\{p_{\theta}\}_{\theta\in\Theta}$ be a two-parameter exponential family with a natural parameter $\theta=(\theta_1,\theta_2)$. 
Then, 
for a real number $C$ in the range of the first parameter,
there exist functions $\gamma:\R \to (0,1]$, $c_1,c_2:\R\to\R$ and a function
\[
\ph(\omega)=
\left\{
\begin{array}{ll}
0&(c_1(X_2(\omega))<X_1(\omega)<c_1(X_2(\omega)))\\
\gamma(X_2(\omega))&(X_1(k)=c_1(X_2(\omega))~or~c_2(X_2(\omega)))\\
1&(X_1(\omega)<c_1(X_2(\omega))~or~c_2(X_2(\omega))<X_1(\omega)),
\end{array}
\right.
\]
satisfying 
\begin{eqnarray}
\mathrm{E}_{p_{C}}[\ph(X_1,X_2)|X_2=x_2]&=&\alpha,\Label{cla.const}\\
\mathrm{E}_{p_{C}}[X_1\ph(X_1,X_2)|X_2=x_2]&=&\alpha\mathrm{E}_{p_{C}}[X_1|X_2=x_2]\Label{cla.const2}
\end{eqnarray}
for any $x_2\in\R$, where $\mathrm{E}_{p_{C}}[~|X_2=x_2]$ means a conditional expectation. 
Then, $\ph$ is a UMP unbiased test function for the hypothesis testing problem
\[
H_0:\theta_1=C~vs.~H_1:\theta_1\ne C,
\hbox{ with }
\{p_{\theta_1,\theta_2}\}_{(\theta_1,\theta_2)\in\Theta}.
\]

\end{thm}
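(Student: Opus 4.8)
The plan is to reduce the two-sided composite problem to a family of conditional one-parameter problems on the boundary $\theta_1 = C$, handling the nuisance parameter $\theta_2$ through the sufficient statistic $X_2$, and then to invoke a generalized Neyman-Pearson lemma conditionally. First I would record that for an exponential family the power function $\pi(\theta) := \mathrm{E}_{p_\theta}[\varphi]$ is continuous, in fact real-analytic, in $\theta$. Since an unbiased test of level $\alpha$ satisfies $\pi(\theta) \le \alpha$ on $\{\theta_1 = C\}$ and $\pi(\theta) \ge \alpha$ on $\{\theta_1 \ne C\}$, continuity forces $\pi(C, \theta_2) = \alpha$ for every $\theta_2$; that is, the test is similar of size $\alpha$ on the boundary.

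Next I would use that, restricted to $\theta_1 = C$, the family $\{p_{C,\theta_2}\}_{\theta_2}$ is a one-parameter exponential family whose statistic $X_2$ is complete and sufficient over the open range of $\theta_2$. Completeness upgrades similarity to Neyman structure: any similar test must satisfy $\mathrm{E}_{p_C}[\varphi \mid X_2 = x_2] = \alpha$ for almost every $x_2$, which is exactly condition $(\ref{cla.const})$. For the second constraint I would appeal to unbiasedness once more: for each fixed $\theta_2$ the map $\theta_1 \mapsto \pi(\theta_1, \theta_2)$ attains a minimum at $\theta_1 = C$, so its derivative there vanishes. Differentiating under the integral sign and using $\partial_{\theta_1}\psi(\theta) = \mathrm{E}_{p_\theta}[X_1]$ gives $\partial_{\theta_1}\pi = \mathrm{E}_{p_\theta}[X_1 \varphi] - \mathrm{E}_{p_\theta}[X_1]\,\mathrm{E}_{p_\theta}[\varphi]$, so that at $\theta_1 = C$ one obtains $\mathrm{E}_{p_C}[X_1 \varphi] = \alpha\,\mathrm{E}_{p_C}[X_1]$; carrying the same computation out conditionally on $X_2 = x_2$, together with the Neyman structure already established, yields condition $(\ref{cla.const2})$.

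With both side conditions in hand, I would fix $x_2$ and regard $\varphi(\cdot, x_2)$ as a test of the single variable $X_1$. Conditionally on $X_2 = x_2$ the densities form a one-parameter exponential family in $\theta_1$, and the task becomes: among test functions obeying the two linear constraints $(\ref{cla.const})$ and $(\ref{cla.const2})$, maximize the conditional power $\mathrm{E}_{p_\theta}[\varphi \mid X_2 = x_2]$ at each alternative $\theta_1 \ne C$. The generalized two-constraint Neyman-Pearson lemma identifies the optimizer as the test rejecting where a suitable combination of the likelihood ratios is large; because the conditional densities are exponential in $X_1$, this rejection region is the complement of an interval, giving precisely the two-sided form with cut points $c_1(x_2) \le c_2(x_2)$ and boundary randomization $\gamma(x_2)$. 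Since the conditional power is then maximized for almost every $x_2$, integrating against the marginal law of $X_2$ shows that $\varphi$ is UMP among unbiased level-$\alpha$ tests.

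The hard part will be verifying that the two-sided interval test produced by the generalized Neyman-Pearson lemma is simultaneously optimal for all alternatives on both sides of $C$, rather than merely against one fixed point $\theta_1 \ne C$. This uniformity is what promotes the test from most powerful to uniformly most powerful, and it rests on the convex and analytic structure of the exponential family in $\theta_1$, concretely on the sign pattern of $X_1$ relative to the cut points, together with the fact that the two constraints determine the endpoints $c_1(x_2), c_2(x_2)$ and the randomization $\gamma(x_2)$ independently of the particular alternative chosen.
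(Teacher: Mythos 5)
The paper offers no proof of this theorem at all --- it is imported verbatim from Lehmann \cite{Leh2} --- and your proposal reconstructs exactly the standard argument of that source: similarity on the boundary $\theta_1=C$ via continuity (analyticity) of the power function, Neyman structure via completeness of $X_2$ for the boundary family, the differentiated unbiasedness constraint, and the conditional two-constraint generalized Neyman--Pearson lemma, whose exponential-versus-linear comparison $e^{(\theta_1-C)x_1}>k_1+k_2x_1$ makes the rejection region the complement of an interval with cut points fixed by the side conditions and hence independent of the alternative, which is precisely the resolution of the ``hard part'' you flag. The one step you state too loosely is the derivation of the conditional constraint $(\ref{cla.const2})$: the unconditional identity $\mathrm{E}_{p_C}[X_1\ph]=\alpha\,\mathrm{E}_{p_C}[X_1]$, valid for every $\theta_2$, localizes to $X_2=x_2$ only through a second application of (bounded) completeness of $X_2$ to the function $x_2\mapsto\mathrm{E}_{p_C}[X_1\ph|X_2=x_2]-\alpha\,\mathrm{E}_{p_C}[X_1|X_2=x_2]$, not by ``carrying the computation out conditionally'' with Neyman structure alone, since unbiasedness constrains the unconditional, not the conditional, power.
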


Next, we apply the above theorem to a quantum hypothesis testing problem. When we define a two-parameter exponential family $\{p_{n,u,v}\}_{u\in\R, v\in\R_{<0}}$ on $\Z_{\ge 0}^{m+n}$ by 
\begin{equation}\Label{e-family}
p_{m,n,u,v}(k,l):=\mathrm{exp}\left\{X_1(k)u+X_2(k,l)v-\psi(u,v)\right\}
\end{equation}
where $X_1(k):=\sum_{i=1}^m k_i, X_2(k,l):=\sum_{i=1}^{m}k_i + \sum_{j=1}^{n}l_j,  \psi(u,v):=-m\mathrm{log}(1-\mathrm{e}^{u+v})-n\mathrm{log}(1-\mathrm{e}^v)$,
the family of quantum Gaussian states $\{\rho_{0,M}^{\otimes m}\otimes\rho_{0,N}^{\otimes n}\}_{M,N \in \R_{>0}}$ is identified with the family of  probability distributions $\{p_{n,u,v}\}_{u\in\R ,v\in\R_{<0}}$ since
\[
\begin{array}{lll}
\rho_{0,M}^{\otimes m}\otimes\rho_{0,N}^{\otimes n}
=&\displaystyle\sum_{k\in\Z_{\ge 0}^m, l\in\Z_{\ge 0}^n}& p_{m,n,\mathrm{log}\frac{M}{M+1}-\mathrm{log}\frac{N}{N+1}, \mathrm{log}\frac{N}{N+1}}(k,l)|k,l\rangle\langle k,l|
\end{array}
\]
holds. Thus, a quantum hypothesis testing problem
\[
H_0:M=N~vs.~H_1:M\ne N
\hbox{ with }
\{\rho_{0,M}^{\otimes m}\otimes\rho_{0,N}^{\otimes n}\}_{M,N\in\R_{> 0}} 
\eqno{(\mathrm{H}\textrm{-}\mathrm{F})}
\]
is translated to a classical hypothesis testing problem
\[
H_0:u=0~vs.~H_1:u\ne 0
\hbox{ with }
\{p_{n,u,v}\}_{u\in\R,v\in\R_{< 0}}.
\]

Since $\{p_{m,n,u,v}\}_{u\in\R,v\in\R_{\ge 0}}$ is a 
two-parameter exponential family with the natural parameters $u$ and $v$,
the following test function $\ph$ is a UMP unbiased test for the above hypothesis testing problem due to Theorem \ref{cla2}. The test function $\ph$ is defined by
\begin{equation}\Label{testfunction2}
\ph(k,l):=
\left\{
\begin{array}{ll}
0&(c_1(X_2(k,l))<X_1(k)<c_1(X_2(k,l)))\\
\gamma(X_2(k,l))&(X_1(k)=c_1(X_2(k,l))~or~c_2(X_2(k,l)))\\
1&(X_1(k)<c_1(X_2(k,l))~or~c_2(X_2(k,l))<X_1(k)),
\end{array}
\right.
\end{equation}
where
the functions $c_1,c_2: \Z_{\ge 0}\to\Z_{\ge 0} $, and $\gamma:\Z_{\ge 0}\to [0,1)$ 
are uniquely determined so that the relations
\begin{align}
&\gamma(s)\left(\binom{c_1(s)+m-1}{m-1}(1-\delta_{c_1(s),c_2(s)})+\binom{c_2(s)+m-1}{m-1}\right) \nonumber \\
&\quad +\displaystyle\sum_{j=0}^{c_1(s)-1}\binom{j+m-1}{m-1}j+\displaystyle\sum_{j=c_2(s)+1}^{\infty}\binom{j+m-1}{m-1}j\nonumber \\
=&
\alpha\displaystyle\sum_{j=0}^s\binom{j+n-1}{n-1}\binom{s-j+m-1}{m-1}(s-j),
\Label{constraint2'} \\
& \Big[\gamma(s)A(s,c_1(s),c_2(s))+B(s,c_1(s),c_2(s)) \Big]\times\binom{s+m+n-1}{m+n-1}^{-1}
=\alpha
\Label{constraint'}
\end{align}
hold for any $s\in\Z_{\ge 0}$ with the functions $A$ and $B$ defined by
\begin{eqnarray*}
&&A(s,u,v)\\
&:=&\sharp\left\{(k,l)\in\Z_{\ge 0}^{m+n}\Big|X_1(k)=u~or~v, X_2(k,l)=s\right\}\\
 &=&\binom{u+m-1}{m-1}\binom{s-u+n-1}{n-1}(1-\delta_{u,v})+\binom{v+m-1}{m-1}\binom{s-v+n-1}{n-1},\\
&&B(s,u,v)\\
&:=&\sharp\left\{(k,l)\in\Z_{\ge 0}^{m+n}\Big|X_1(k)<u~or~X_1(k)>v, X_2(k,l)=s\right\}\\
 &=&\displaystyle\sum_{a=0}^{u-1}\binom{a+m-1}{m-1}\binom{s-a+n-1}{n-1} +\displaystyle\sum_{b=0}^{v-1}\binom{b+m-1}{m-1}\binom{s-b+n-1}{n-1}.
\end{eqnarray*}
In the above test function $\ph$,
the conditions $(\ref{constraint'})$ and $(\ref{constraint2'})$ 
correspond to  the conditions $(\ref{cla.const})$ and $(\ref{cla.const2})$ in Theorem \ref{cla2}. Therefore, we get the following theorem.

\begin{thm}\Label{F test}
For the hypothesis testing problem $(\mathrm{H}\textrm{-}\mathrm{F})$, the test 
\[
T_{\alpha}^{[\mathrm{F}],m,n}=\displaystyle\sum \ph(k,l)|k,l\rangle\langle k,l|.
\]
 is a UMP unbiased test with level $\alpha$.
\end{thm}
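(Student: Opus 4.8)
The plan is to reduce the quantum problem $(\mathrm{H}\textrm{-}\mathrm{F})$ to the classical two-parameter problem already set up above and then invoke Theorem~\ref{cla2}. First I would note that the states $\rho_{0,M}^{\otimes m}\otimes\rho_{0,N}^{\otimes n}$ are all diagonal in the number basis $\{|k,l\rangle\}$, hence mutually commuting. By the identification in the commutative subsection, the quantum testing problem is equivalent to the classical testing problem for the family $\{p_{m,n,u,v}\}$ defined in $(\ref{e-family})$, and the test operator $T_{\alpha}^{[\mathrm{F}],m,n}=\sum\ph(k,l)|k,l\rangle\langle k,l|$ corresponds to the classical test function $\ph(k,l)\in[0,1]$. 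Since $M=N$ is equivalent to $u=0$, the problem becomes $H_0:u=0$ versus $H_1:u\ne 0$ for a two-parameter exponential family with natural parameters $(u,v)$ and statistics $X_1,X_2$, which is precisely the setting of Theorem~\ref{cla2}.

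Theorem~\ref{cla2} then guarantees a UMP unbiased test function of the two-sided randomized form with cut-offs $c_1(X_2),c_2(X_2)$ and boundary randomization $\gamma(X_2)$, characterized by the two conditional constraints $(\ref{cla.const})$ and $(\ref{cla.const2})$. The substantive step is to rewrite these constraints as the explicit relations $(\ref{constraint'})$ and $(\ref{constraint2'})$. For this I would compute the conditional law of $X_1$ given $X_2=s$ under the null. A stars-and-bars count shows that the number of $(k,l)\in\Z_{\ge 0}^{m+n}$ with $X_1(k)=j$ and $X_2(k,l)=s$ equals $\binom{j+m-1}{m-1}\binom{s-j+n-1}{n-1}$, and a Vandermonde convolution gives the normalization $\sum_{j=0}^{s}\binom{j+m-1}{m-1}\binom{s-j+n-1}{n-1}=\binom{s+m+n-1}{m+n-1}$. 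The crucial observation is that this conditional distribution is independent of the nuisance parameter $v$, so the cut-offs and the randomization constant may be chosen pointwise in $s$. Inserting the two-sided form of $\ph$ into $(\ref{cla.const})$ and collecting the rejection-region count $B$ (where $\ph=1$) together with the boundary count $A$ (where $\ph=\gamma$) yields $(\ref{constraint'})$; performing the same reduction for $(\ref{cla.const2})$, now weighting each count by the value of $X_1$, yields $(\ref{constraint2'})$.

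Once these two constraints are matched, $\ph$ is the UMP unbiased test function supplied by Theorem~\ref{cla2}, and translating back through the commutative identification shows that $T_{\alpha}^{[\mathrm{F}],m,n}$ is a UMP unbiased test with level $\alpha$ for $(\mathrm{H}\textrm{-}\mathrm{F})$. I expect the main obstacle to be the bookkeeping in the second paragraph: confirming that the explicit combinatorial sums in $(\ref{constraint'})$ and $(\ref{constraint2'})$ are exactly the conditional-expectation conditions $(\ref{cla.const})$ and $(\ref{cla.const2})$, keeping careful track of the boundary contributions of the randomized two-sided test and of the $X_1$-weighting appearing in the second-moment constraint.
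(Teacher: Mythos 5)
Your proposal follows essentially the same route as the paper: the paper likewise identifies the commuting family $\{\rho_{0,M}^{\otimes m}\otimes\rho_{0,N}^{\otimes n}\}$ with the two-parameter exponential family $(\ref{e-family})$, translates $(\mathrm{H}\textrm{-}\mathrm{F})$ into $H_0:u=0$ vs.\ $H_1:u\ne 0$, and obtains the theorem by invoking Theorem~\ref{cla2}, with the conditions $(\ref{constraint'})$ and $(\ref{constraint2'})$ being exactly the combinatorial rewriting of $(\ref{cla.const})$ and $(\ref{cla.const2})$ via the stars-and-bars counts $A$ and $B$ that you describe. Your added details (the conditional law of $X_1$ given $X_2=s$, its independence of $v$ under the null, and the Vandermonde normalization $\binom{s+m+n-1}{m+n-1}$) correctly fill in the bookkeeping the paper leaves implicit in its citation of Theorem~\ref{cla2}.
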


We constructed a test $T_{\alpha}^{[\mathrm{F}],m,n}$ in the hypothesis testing problem $(\mathrm{H}\textrm{-}\mathrm{F})$. In the classical system, the hypothesis testing problem $(\mathrm{H}\textrm{-}\mathrm{F})$ corresponds to
\[
H_0:M=N~vs.~H_1:M\ne N~\hbox{with}~\{\mathrm{G}_{0,M}^{(m)}\mathrm{G}_{0,N}^{(n)}\}_{M,N\in\R_{\ge 0}}.
\]
 A $F$ test is a UMP unbiased test for the above classical hypothesis testing problem, and the test $T_{\alpha}^{[\mathrm{F}],m,n}$ is a UMP unbiased test for $(\mathrm{H}\textrm{-}\mathrm{F})$. Therefore, the test $T_{\alpha}^{[\mathrm{F}],m,n}$ can be regarded as a quantum counterpart of the $F$ test, and we call the test $T_{\alpha}^{[\mathrm{F}],m,n}$ a \textbf{quantum $F$ test} with $(m,n)$ degrees of freedom and with level $\alpha$.


\section{Hypothesis testing of the mean parameter}

We consider the hypothesis testing problem about the mean parameter for the quantum Gaussian states in this section, and derive a quantum counterpart of a $t$ test in subsection 5.2. 

\subsection{The case that the number parameter $N$ is known}

~

We consider the hypothesis testing problem
\[
H_0:|\theta|\in[0,R_0]~\text{vs.}~H_1:|\theta|\in(R_0,\infty)
\hbox{ with } 
\{\rho_{\theta,N}^{\otimes n}\}_{\theta\in\C}
 \eqno{(\mathrm{H}\textrm{-}1)}
\]
for $1\le n\in\N$ and $R_0\in\R_{\ge 0}$ when the number parameter $N$ is fixed. That is, we suppose that the number parameter $N$ is known. Then, the disturbance parameter space is $S^1=\{a\in\C||a|=1\}$ that represents the phase of mean parameter $\theta$. The UMP test for $(\mathrm{H}\textrm{-}1)$ does not exist as is shown in Appendix. Thus, our purpose is to derive a UMP min-max test in this subsection. 

When we change the parameterization as $\theta=r\mathrm{e}^{it}$, $(\mathrm{H}\textrm{-}1)$ is rewritten as 
\[
H_0:r\in[0,R_0]~\text{vs.}~H_1:r\in(R_0,\infty)
\hbox{ with } 
\{U_n^{\ast}(\rho_{\sqrt{n}r\mathrm{e}^{it},N}\otimes\rho_{0,N}^{\otimes (n-1)})U_n\}_{r\in\R_{\ge 0}, \mathrm{e}^{it}\in S^1}.
\]
We apply Theorem $\ref{thm3}$ to $(\mathrm{H}\textrm{-}1)$ in the following way:
\begin{eqnarray}
&\mathcal{S}_1:=\{\rho_{\sqrt{n}r\mathrm{e}^{it},N}\otimes\rho_{0,N}^{\otimes (n-1)}\}_{r\in\R_{\ge 0}, \mathrm{e}^{it}\in S^1},~\mathcal{S}_2:=\{\rho_{0,N}^{\otimes (n-1)}\},~\mathcal{S}_3:=\phi,&\nonumber\\
&\Theta:=\R_{\ge 0},~ \Xi_1:=S^1,~ \Xi_2:=\{N\},~ U':=U_n,&\nonumber
\end{eqnarray}
where $\phi$ is the empty set, $\Xi_2$ has only one element $N$ and $U_n$ is the concentrating operator satisfying $(\ref{concentrating})$. Hence, a UMP min-max test for $(\mathrm{H}\textrm{-}1)$ is given as $U_n^*(T'\otimes I^{\otimes(n-1)})U_n$ by using a UMP min-max test $T'$ with level $\alpha$ for 
\[
H_0:r\in[0,\sqrt{n}R_0]~\text{vs.}~H_1:r\in(\sqrt{n}R_0,\infty)
\hbox{ with } 
\{\rho_{r\mathrm{e}^{it},N}\}_{r\in\R_{\ge 0}, \mathrm{e}^{it}\in S^1}.
\] 

Therefore, we only have to treat the hypothesis testing problem
\[
H_0:r\in[0,R_0]~\text{vs.}~H_1:r\in(R_0,\infty)
\hbox{ with } 
\{\rho_{r\mathrm{e}^{it},N}\}_{r\in\R_{\ge 0}, \mathrm{e}^{it}\in S^1}.
\eqno{(\mathrm{H}\textrm{-}\mathrm{A})}
\]
Here, we define a test $T_{R,N}^{\alpha}$ for $R\in\R_{\ge 0}$ by 
\[
T_{R,N}^{\alpha}:=\gamma_{R}|k_{R}\rangle\langle k_{R}|+\displaystyle\sum_{k=k_{R}+1}^{\infty}|k\rangle\langle k|,
\]
where $k_{R}\in\Z_{\ge 0}$ and $0<\gamma_{R}\le1$ is determined by level $\alpha$ as
\begin{eqnarray}
&1-\displaystyle\sum_{k=0}^{k_{R}}P_{R,N}(k) < \alpha \le 1-\displaystyle\sum_{k=0}^{k_{R}-1}P_{R,N}(k),& \Label{k}\\
&\gamma_{R}:=\frac{\alpha-\left(1-\sum_{k=0}^{k_{R}}P_{R,N}(k)\right)}{P_{R,N}(k_{R})}, \Label{gamma}&
\end{eqnarray}
where $P_{R,N}$ is the probability distribution in Lemma \ref{prob}. The test $T_{R,N}^{\alpha}$ is guaranteed to be with level $\alpha$ by $(\ref{k})$ and $(\ref{gamma})$ as is shown in the proof of the next theorem.

\begin{thm}\Label{thmH-A}
For the hypothesis $(\mathrm{H}\textrm{-}\mathrm{A})$, the test $T_{R_0,N}^{\alpha}$ is a UMP min-max test with level $\alpha$.
\end{thm}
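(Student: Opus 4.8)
The plan is to strip away the phase disturbance by quantum Hunt--Stein theorem, thereby reducing $(\mathrm{H}\textrm{-}\mathrm{A})$ to a classical one-sided testing problem for the photon-number distribution, and then to settle that classical problem by a monotone likelihood ratio (MLR) argument. First I would invoke the phase shift operator $S_{e^{it}}=\exp(it\hat N)$ of $(\ref{phase shift operator})$. Since $e^{it\hat N}\rho_{re^{is},N}e^{-it\hat N}=\rho_{re^{i(s+t)},N}$, the unitary representation of the compact group $S^1$ is covariant for the disturbance parameter $e^{it}\in S^1$ in the sense of $(\ref{cov})$. Because $S^1$ is compact, Proposition \ref{cpt.inv.min-max} applies without any completeness hypothesis, so it suffices to exhibit a UMP invariant test: such a test is automatically a UMP min-max test.

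Next I would apply Lemma \ref{inv2} with $n=1$: every $S^1$-invariant test is diagonal in the number basis, $T=\sum_{k\ge 0}t_k|k\rangle\langle k|$ with $t_k\in[0,1]$. For such a test the power at $\rho_{re^{it},N}$ equals $\sum_k t_k P_{r,N}(k)$, which by Lemma \ref{prob} depends only on $r=|\theta|$ and not on the phase. Hence for invariant tests the supremum of the type II error over the disturbance parameter is constant in the phase, and the min-max problem collapses to the classical one-sided problem $H_0:r\le R_0$ versus $H_1:r>R_0$ for the family $\{P_{r,N}\}_{r\ge 0}$ on $\Z_{\ge 0}$. Note that $T^{\alpha}_{R_0,N}$ is precisely the randomized test in this reduced problem that rejects for large photon-number outcomes.

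The crux is to prove that $\{P_{r,N}\}_{r\ge 0}$ has monotone likelihood ratio in $k$. Writing $\lambda:=N/(N+1)$ and $\mu:=r^2/(N(N+1))$, Lemma \ref{prob} gives $P_{r,N}(k)=\tfrac{1}{N+1}\lambda^k e^{-r^2/(N+1)}L_k(-\mu)$, so for $r_1<r_2$ the likelihood ratio is a positive constant times $L_k(-\mu_2)/L_k(-\mu_1)$, and MLR is equivalent to $L_{k+1}(-\mu)/L_k(-\mu)$ being nondecreasing in $\mu$. I would establish this from the mixed-Poisson representation $P_{r,N}(k)=\frac{1}{\pi N k!}\int_{\C}|\xi|^{2k}e^{-|\xi|^2-|\theta-\xi|^2/N}d\xi$: here $P_{r,N}$ is a Poisson mixture whose rate $|\xi|^2$ follows a (scaled) noncentral $\chi^2$ law with two degrees of freedom and noncentrality $r^2$. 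The noncentral $\chi^2$ family is $\mathrm{TP}_2$ in (outcome, noncentrality) and the Poisson kernel is $\mathrm{TP}_2$ in (count, rate), so Karlin's basic composition formula yields that $P_{r,N}(k)$ is $\mathrm{TP}_2$ in $(k,r)$, i.e.\ MLR holds; alternatively the monotonicity can be checked directly from the three-term recurrence for $L_k$. This MLR step is where I expect the real work to lie.

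Finally, with MLR in hand, the classical theory of one-sided tests for families with monotone likelihood ratio (the natural generalization of Theorem \ref{cla1}, see \cite{Leh2}) shows that the threshold test rejecting large $k$ is UMP for the reduced problem, and that its power function is nondecreasing in $r$. In particular the type I error $\Tr\rho_{r,N}T^{\alpha}_{R_0,N}$ is nondecreasing in $r$, so its supremum over $H_0$ is attained at $r=R_0$, where conditions $(\ref{k})$ and $(\ref{gamma})$ calibrate it to equal $\alpha$; this simultaneously verifies that $T^{\alpha}_{R_0,N}$ has level $\alpha$ on $[0,R_0]$ and that it is UMP among invariant level-$\alpha$ tests. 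Combining this with the Hunt--Stein reduction of the first two steps, $T^{\alpha}_{R_0,N}$ is a UMP min-max test with level $\alpha$ for $(\mathrm{H}\textrm{-}\mathrm{A})$.
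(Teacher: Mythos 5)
Your proposal is correct, and its skeleton is exactly the paper's: covariance of the phase-shift representation of the compact group $S^1$, quantum Hunt--Stein via Proposition \ref{cpt.inv.min-max} (indeed with no completeness hypothesis needed), reduction of invariant tests to number-diagonal tests $T=\sum_k t_k|k\rangle\langle k|$ via Lemma \ref{inv2}, and then a classical one-sided problem for the family $\{P_{r,N}\}_{r\ge 0}$, with level calibrated at $r=R_0$ through $(\ref{k})$ and $(\ref{gamma})$. Where you genuinely diverge is the crux. The paper never cites the Karlin--Rubin theorem (its Theorem \ref{cla1} covers only exponential families, which, as you implicitly noted, does not apply here since $\{P_{r,N}\}$ is not exponential in $r$); instead it re-derives the Karlin--Rubin content by hand: it proves by induction on the three-term recurrence $(\ref{L.polynomial})$ that $L_k(-x)/L_l(-x)$ is increasing in $x>0$ for $k>l$ --- which is precisely your MLR property, since $P_{r,N}(k)=\frac{1}{N+1}\bigl(\frac{N}{N+1}\bigr)^k e^{-r^2/(N+1)}L_k\bigl(-\frac{r^2}{N(N+1)}\bigr)$ --- and then directly compares $T^{\alpha}_{R_0,N}$ with an arbitrary invariant level-$\alpha$ test, showing that the power difference, normalized by the positive factor $\frac{1}{N+1}\bigl(\frac{N}{N+1}\bigr)^{k_{R_0}}e^{-r^2/(N(N+1))}L_{k_{R_0}}\bigl(-\frac{r^2}{N(N+1)}\bigr)$, is increasing in $r$ and nonnegative at $r=R_0$ because the competitor has level $\alpha$ there. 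Your route instead establishes MLR structurally: $P_{r,N}$ is a Poisson mixture whose rate $|\xi|^2$ is $\frac{N}{2}$ times a noncentral $\chi^2$ with two degrees of freedom and noncentrality $2r^2/N$, and $\mathrm{TP}_2$ kernels compose, after which optimality and the monotone power function are outsourced to the classical MLR theory in \cite{Leh2}. Both arguments are sound and rest on the same monotonicity fact; yours is more conceptual and explains probabilistically where the Laguerre ratio monotonicity comes from (and generalizes more readily), while the paper's is self-contained, needing only the recurrence and an explicit Neyman--Pearson-style comparison.
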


Therefore, we get the following proposition.

\begin{prp}\Label{mean.min-max}
For the hypothesis $(\mathrm{H}\textrm{-}1)$, the test 
\[
T_{\alpha,R_0,N}^{[1],n}:=U_n^{\ast}(T_{\sqrt{n}R_0,N}^{\alpha}\otimes I^{\otimes (n-1)})U_n
\]
 is a UMP min-max test with level $\alpha$.
\end{prp}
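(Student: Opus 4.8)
The plan is to obtain the proposition as a direct combination of the single-mode optimality statement in Theorem \ref{thmH-A} with the reduction machinery of Theorem \ref{thm3}, following the two-step strategy already prepared in the text preceding the statement. The key device is the concentrating operator $U_n$ of $(\ref{concentrating})$, which converts the $n$-fold tensor family into one informative mode tensored with a parameter-free remainder, so that all the genuine optimization collapses onto a single bosonic mode.

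First I would reparametrize $(\mathrm{H}\textrm{-}1)$ by writing $\theta = re^{it}$ and invoking the identity $U_n\rho_{\theta,N}^{\otimes n}U_n^{\ast}=\rho_{\sqrt{n}\theta,N}\otimes\rho_{0,N}^{\otimes(n-1)}$, which gives $\{\rho_{\theta,N}^{\otimes n}\}_{\theta\in\C}=\{U_n^{\ast}(\rho_{\sqrt{n}re^{it},N}\otimes\rho_{0,N}^{\otimes(n-1)})U_n\}$. Since $|\sqrt{n}\theta|=\sqrt{n}|\theta|$, the null region $|\theta|\le R_0$ corresponds exactly to the single-mode radius region $\sqrt{n}r\le\sqrt{n}R_0$, which is the source of the rescaled threshold $\sqrt{n}R_0$. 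I would then apply Theorem \ref{thm3} in its empty-$\mathcal{H}_3$ branch (the composition of Lemmas \ref{lem1} and \ref{lem2}) with the identifications $\mathcal{H}_1=L^2(\R)$ carrying the informative mode $\{\rho_{re^{it},N}\}_{r\in\R_{\ge 0},\,e^{it}\in S^1}$, $\mathcal{H}_2=L^2(\R)^{\otimes(n-1)}$ carrying the disturbance-free state $\rho_{0,N}^{\otimes(n-1)}$ (so $\Xi_2=\{N\}$), and $U'=U_n$. Because $\mathcal{H}_3$ is empty here, neither amenability nor completeness of any auxiliary family is needed for this reduction, and the theorem yields the equivalence: $T$ is a UMP min-max test of level $\alpha$ for $(\mathrm{H}\textrm{-}\mathrm{A})$ with threshold $\sqrt{n}R_0$ if and only if $U_n^{\ast}(T\otimes I^{\otimes(n-1)})U_n$ is a UMP min-max test of level $\alpha$ for $(\mathrm{H}\textrm{-}1)$.

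It then remains only to name the optimal single-mode test. Applying Theorem \ref{thmH-A} with the threshold $\sqrt{n}R_0$ in place of $R_0$ shows that $T_{\sqrt{n}R_0,N}^{\alpha}$ is a UMP min-max test of level $\alpha$ for $(\mathrm{H}\textrm{-}\mathrm{A})$ with that threshold. Feeding $T=T_{\sqrt{n}R_0,N}^{\alpha}$ into the equivalence above immediately gives that $T_{\alpha,R_0,N}^{[1],n}=U_n^{\ast}(T_{\sqrt{n}R_0,N}^{\alpha}\otimes I^{\otimes(n-1)})U_n$ is a UMP min-max test of level $\alpha$ for $(\mathrm{H}\textrm{-}1)$, which is the assertion.

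The genuinely substantive inputs, namely Hunt-Stein invariance for the phase group $S^1$ and the optimality of the number-measurement test on a single mode, are already absorbed into Theorem \ref{thmH-A}, so the remaining work here is bookkeeping: confirming that $U_n$ realizes the state identity $(\ref{concentrating})$, that the reparametrization $r\mapsto\sqrt{n}r$ carries the hypothesis regions correctly (hence the factor $\sqrt{n}$ in the threshold), and that conjugation by $U_n$ together with tensoring by the identity preserves both the level constraint and the min-max value. I expect the only modest obstacle to be checking that the present setting matches the empty-$\mathcal{H}_3$ branch of Theorem \ref{thm3} rather than the general branch, so that the reduction is valid without invoking the completeness hypothesis.
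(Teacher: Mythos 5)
Your proposal is correct and follows essentially the same route as the paper: the paper likewise reparametrizes $\theta=r\mathrm{e}^{it}$, applies the empty-$\mathcal{H}_3$ branch of Theorem \ref{thm3} with $U'=U_n$, the informative single-mode family as $\mathcal{S}_1$, $\mathcal{S}_2=\{\rho_{0,N}^{\otimes(n-1)}\}$ (so $\Xi_2=\{N\}$), thereby reducing $(\mathrm{H}\textrm{-}1)$ to $(\mathrm{H}\textrm{-}\mathrm{A})$ with the rescaled threshold $\sqrt{n}R_0$, and then cites Theorem \ref{thmH-A} to identify $T_{\sqrt{n}R_0,N}^{\alpha}$ as the optimal single-mode test. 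Your remark that this branch requires neither amenability nor completeness is also consistent with the paper's usage, where those hypotheses enter only through Lemma \ref{lem3}.
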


\hspace{-1em}\textit{Proof of Theorem \ref{thmH-A}:} 
The representation $\{S_{e^{it}}\}_{e^{it}\in S^1}$ of $S^1=\{a\in\C||a|=1\}$ on $L^2(\R)$ satisfies 
\[
S_{e^{it}} \rho_{r\mathrm{e}^{is}} S_{e^{it}}^*=\rho_{r\mathrm{e}^{i(s+t)}},
\]
and is covariant concerning the disturbance parameter space $S^1$ in the sense of the subsection $3.2$. Due to Theorem \ref{cpt.H-S}, it is enough to prove that $T_{R_0,N}^{\alpha}$ is a UMP $S^1$-invariant test concerning the representation $\{S_{e^{it}}\}_{e^{it}\in S^1}$ with level $\alpha$. 

The type I error probability $\alpha_{T_{R_0,N}^{\alpha}}(r\mathrm{e}^{it})=\mathrm{Tr}\rho_{{r}\mathrm{e}^{it},  N}T_{R_0,N}^{\alpha}$ is monotonically increasing with respect to $r$ and the equation $\mathrm{Tr}\rho_{{R_0}\mathrm{e}^{it},  N}T_{R_0,N}^{\alpha}=\alpha$ holds by the conditions $(\ref{k})$ and $(\ref{gamma})$. Hence, $T_{R_0,N}^{\alpha}$ is a test with level $\alpha$. In addition, $T_{R_0,N}^{\alpha}$ is $S^1$-invariant test due to Lemma \ref{inv2}.

Let $T=\displaystyle\sum_{k=0}^{\infty}t_k |k\rangle\langle k|$ be an arbitrary $S^1$ invariant test concerning the representation $\{S_{e^{it}}\}_{e^{it}\in S^1}$. Since the Laguerre polynomials satisfy 
\begin{align}\Label{L.polynomial}
L_{k+1}(x)=\frac{1}{k+1}((-x+2k+1)L_k(x)-kL_{k-1}(x)),
\end{align}
it is easily shown that $L_{k}(-x)/L_{l}(-x)$~$(k>l)$ and $L_{k}(-x)/L_{l}(-x)$~$(k<l)$ are monotone increasing and decreasing with respect to $x\in\R_{> 0}$ respectively by applying the inductive method to (\ref{L.polynomial}).

Then
\[
\begin{array}{rll}
\mathrm{Tr}\rho_{r\mathrm{e}^{it}, N}T_{R_0,N}^{\alpha}
&=&\gamma_{R_0}\frac{1}{N+1}\left(\frac{N}{N+1}\right)^{k_{R_0}}\mathrm{e}^{-\frac{r^2}{N(N+1)}}L_{k_{R_0}}\left(-\frac{r^2}{N(N+1)}\right)\\
&&~~~+\displaystyle\sum_{k=k_{R_0}+1}^{\infty}\frac{1}{N+1}\left(\frac{N}{N+1}\right)^k\mathrm{e}^{-\frac{r^2}{N(N+1)}}L_k\left(-\frac{|\theta|^2}{N(N+1)}\right),\\
\mathrm{Tr}\rho_{r\mathrm{e}^{it}, N}T
&=&\displaystyle\sum_{k=0}^{\infty}t_k\frac{1}{N+1}\left(\frac{N}{N+1}\right)^k\mathrm{e}^{-\frac{r^2}{N(N+1)}}L_k\left(-\frac{r^2}{N(N+1)}\right),
\\
\end{array}
\]
which imply
\[
\begin{array}{ll}
&\left({\mathrm{Tr}(\rho_{r\mathrm{e}^{it},  N}T_{R_0,N}^{\alpha})-\mathrm{Tr}(\rho_{r\mathrm{e}^{it}, 
N}T)}\right)\Big/{\frac{1}{N+1}\left(\frac{N}{N+1}\right)^{k_{R_0}}\mathrm{e}^{-\frac{r^2}{N(N+1)}}L_{k_{R_0}}\left(-\frac{r^2}{N(N+1)}\right)}\\
=&-\displaystyle\sum_{k=0}^{k_{R_0}-1}t_{k}\left(\frac{N}{N+1}\right)^{k-k_{R_0}}\frac{L_k\left(-\frac{r^2}{N(N+1)}\right)}{L_{k_{R_0}}\left(-\frac{r^2}{N(N+1)}\right)}
+(\gamma_{R_0}t_{k_{R_0}})\\
&+\displaystyle\sum_{k=k_{R_0}+1}^{k_{R_0}-1}(1-t_{k})\left(\frac{N}{N+1}\right)^{k-k_{R_0}}\frac{L_k\left(-\frac{r^2}{N(N+1)}\right)}{L_{k_{R_0}}\left(-\frac{r^2}{N(N+1)}\right)}.
\end{array}
\]
Thus, the above function is monotonically increasing with respect to $r$, and hence, the following holds in $r>R_0$.
\[
\begin{array}{ll}
&\left({\mathrm{Tr}(\rho_{r\mathrm{e}^{it}, N}T_{R_0,N}^{\alpha})-\mathrm{Tr}(\rho_{r\mathrm{e}^{it},
N}T)}\right)\Big/{\frac{1}{N+1}\left(\frac{N}{N+1}\right)^{k_{R_0}}\mathrm{e}^{-\frac{r^2}{N(N+1)}}L_{k_{R_0}}\left(-\frac{r^2}{N(N+1)}\right)}\\
&\ge(\alpha-\mathrm{Tr}(\rho_{{R_0},
N}T))\Big/{\frac{1}{N+1}\left(\frac{N}{N+1}\right)^{k_{R_0}}\mathrm{e}^{-\frac{{R_0}^2}{N(N+1)}}L_{k_{R_0}}\left(-\frac{{R_0}^2}{N(N+1)}\right)}\\
&\ge 0.
\end{array}
\]
Therefore we get
\[
\beta_{T_{R_0,N}^{\alpha}}(\rho_{r\mathrm{e}^{it},N})
=1-\mathrm{Tr}(\rho_{r\mathrm{e}^{it}, N}T_{R_0,N}^{\alpha})
\le 1-\mathrm{Tr}(\rho_{r\mathrm{e}^{it}, N}T)
=\beta_{T}(\rho_{r\mathrm{e}^{it},N})
\]
for $r>R_0$.

\endproof

In particular, since we can regard a quantum Gaussian state $\rho_{\theta,0}$ at the number parameter $N=0$ as a coherent state $|\theta)(\theta|$, the test $T_{\alpha,R_0,0}^{[1],n}$ gives a UMP min-max test for the hypothesis testing problem on coherent states:
\[
H_0:|\theta|\in[0,R_0]~\text{vs.}~H_1:|\theta|\in(R_0,\infty)
\hbox{ with } \{|\theta)(\theta|^{\otimes n}\}_{\theta\in\C}
\]
for $1\le n\in\N$.

\subsection{The case that the number parameter is unknown: $t$ test}

~

In this subsection, we propose a quantum counter part of a $t$ test. We consider 
the hypothesis testing problem
\[
H_0:|\theta|\in[0,R_0]~\text{vs.}~H_1:|\theta|\in(R_0,\infty)
\hbox{ with } \{\rho_{\theta,N}^{\otimes n}\}_{\theta\in\C,N\in\R_{>0}} \eqno{(\mathrm{H}\textrm{-}2)}
\]
for $2\le n\in\N$ and $R_0\in\R_{\ge0}$. But it is difficult to derive an optimal test for the above hypothesis testing problem for an arbitrary $R_0\in\R_{\ge 0}$ since the number parameter $N$ is unknown. Hence, we consider the hypothesis testing problem $(\mathrm{H}\textrm{-}2)$ at $R_0=0$:
\[
H_0:|\theta|\in\{0\}~\text{vs.}~H_1:|\theta|\in(0,\infty)
\hbox{ with } \{\rho_{\theta,N}^{\otimes n}\}_{\theta\in\C,N\in\R_{>0}}.
\]
When we change the parameterization as $\theta=r\mathrm{e}^{it}$, $(\mathrm{H}\textrm{-}2)$ at $R_0=0$ is rewritten as 
\begin{eqnarray}
&H_0:r\in\{0\}~\text{vs.}~H_1:r\in(0,\infty)&\nonumber\\
&\hbox{ with } 
\{U_n^{\ast}(\rho_{\sqrt{n}r\mathrm{e}^{it},N}\otimes\rho_{0,N}^{\otimes (n-1)})U_n\}_{r\in\R_{\ge 0}, \mathrm{e}^{it}\in S^1,N\in\R_{>0}}.&\nonumber
\end{eqnarray}
We apply Lemma $\ref{lem1}$ to $(\mathrm{H}\textrm{-}2)$ in the following way:
\begin{eqnarray}
&\mathcal{S}:=\{\rho_{\sqrt{n}r\mathrm{e}^{it},N}\otimes\rho_{0,N}^{\otimes (n-1)}\}_{r\in\R_{\ge 0}, \mathrm{e}^{it}\in S^1,N\in\R_{>0}},&\nonumber\\
&\Theta:=\R_{\ge 0}(\ni r),~ \Xi_1:=S^1\times \R_{\ge 0}(\ni(\mathrm{e}^{it},N)),~ U:=U_n,&\nonumber
\end{eqnarray}
where $\phi$ is the empty set and $U_n$ is the concentrating operator satisfying $(\ref{concentrating})$. Hence, a UMP unbiased min-max test for $(\mathrm{H}\textrm{-}2)$ at $R_0=0$ is given as $U_n^*T'U_n$ by using a UMP unbiased min-max test $T'$ with level $\alpha$ for 
\[
H_0:r\in\{0\}~\text{vs.}~H_1:r\in(0,\infty)
\hbox{ with } 
\{\rho_{\sqrt{n}r\mathrm{e}^{it},N}\otimes\rho_{0,N}^{\otimes (n-1)}\}_{r\in\R_{\ge 0}, \mathrm{e}^{it}\in S^1}.
\]

Therefore, we only have to treat the hypothesis testing problem
\[
H_0:r\in\{0\}~\text{vs.}~H_1:r\in(0,\infty)
\hbox{ with } \{\rho_{r\mathrm{e}^{it},N}\otimes\rho_{0,N}^{\otimes n}\}_{r\in\R_{\ge 0}, \mathrm{e}^{it}\in S^1,N\in\R_{>0}}. \eqno{(\mathrm{H}\textrm{-}\mathrm{t})}
\]
Here, the disturbance parameter space is $S^1\times\R_{>0}$. We define a test $T_{\alpha}^{[\mathrm{t}],n}$ by
\[
T_{\alpha}^{[\mathrm{t}],n}=\displaystyle\sum_{k=(k_0,\cdot\cdot\cdot,k_n)\in\Z_{\ge 0}^{n+1}}\ph'^{(n)}({k})|k\rangle\langle k|
\]
where the test function
\[
\ph'^{(n)}(k):=
\left\{
\begin{array}{ll}
0&(k_0<c(s(k)))\\
\gamma(s(k))&(k_0=c(s(k)))\\
1&(k_0>c(s(k)))
\end{array}
\right.
\]
depends on the total counts $s(k):=\sum_{j=0}^n k_j$ of $k=(k_0,\cdot\cdot\cdot,k_n)$ and functions
\begin{eqnarray}\Label{c,gamma}
c:\Z_{\ge 0}\to\Z_{\ge 0} ,~\gamma:\Z_{\ge 0}\to (0,1]&
\end{eqnarray}
are determined as
\begin{eqnarray}
&\displaystyle\sum_{l=c(s)+1}^s\binom{s-l+n-1}{n-1}
<\alpha \binom{s+n}{n}
\le \displaystyle\sum_{l=c(s)}^s\binom{s-l+n-1}{n-1}& \Label{c(s)'},\\
&\gamma(s):=\frac{\alpha\binom{s+n}{n}-\sum_{l=c(s)+1}^s\binom{s-l+n-1}{n-1}}{\binom{s-c(s)+n-1}{n-1}}& \Label{gamma(s)'}
\end{eqnarray}
for each total counts $s\in\Z_{\ge 0}$. The test $T_{\alpha}^{[\mathrm{t}],n}$ is guaranteed to be with level $\alpha$ by $(\ref{c,gamma})$ as is shown in the proof of the next theorem, and a UMP unbiased min-max test for $(\mathrm{H}\textrm{-}\mathrm{t})$ as is shown in the next theorem.

\begin{thm}\Label{t-test}
For the hypothesis $(\mathrm{H}\textrm{-}\mathrm{t})$, the test $T_{\alpha}^{[\mathrm{t}],n}$ is a UMP unbiased min-max test with level $\alpha$.
\end{thm}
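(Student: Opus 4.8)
The plan is to dispose of the two disturbance parameters in succession: first remove the phase $e^{it}\in S^1$ via the compact quantum Hunt--Stein theorem, which collapses the problem to a commutative (classical) one, and then eliminate the remaining nuisance parameter $N\in\R_{>0}$ by conditioning on a complete sufficient statistic, exactly as one builds the classical one-sample $t$-test. The guiding picture is that mode $0$ carries the ``mean'' information (the analogue of $\bar X$) while the $n$ modes that stay in $\rho_{0,N}^{\otimes n}$ carry the ``variance'' information (the analogue of the residuals), and the total count $s(k)=\sum_{j=0}^n k_j$ is the conditioning statistic.

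First I would remove the phase. The representation $V_{e^{it}}:=S_{e^{it}}\otimes I^{\otimes n}$ of the compact group $S^1$ satisfies $V_{e^{it}}(\rho_{re^{is},N}\otimes\rho_{0,N}^{\otimes n})V_{e^{it}}^{\ast}=\rho_{re^{i(s+t)},N}\otimes\rho_{0,N}^{\otimes n}$ because $\rho_{0,N}$ is phase invariant, so $V$ is covariant for the $S^1$-action on $S^1\times\R_{>0}$ (acting trivially on $N$). By Theorem \ref{cpt.H-S} and Proposition \ref{cpt.inv.min-max} it is enough to produce a UMP unbiased $S^1$-invariant test; it is then automatically a UMP unbiased min-max test over the whole disturbance space. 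By Lemma \ref{inv2} an invariant test has the form $\sum_{k_0}|k_0\rangle\langle k_0|\otimes T_{k_0}$. Since the operators $T_{k_0}$ are tested against the commutative state $\rho_{0,N}^{\otimes n}$, and since only the phase-independent diagonal entries $\langle k_0|\rho_{re^{it},N}|k_0\rangle=P_{r,N}(k_0)$ of mode $0$ enter the traces (Lemma \ref{prob}), I may replace each $T_{k_0}$ by its diagonal part without altering any error probability. Hence the invariant problem is precisely the classical test of $r=0$ versus $r>0$ for the distributions $P_{r,N}(k_0)\prod_{j=1}^n P_{0,N}(k_j)$ on $\Z_{\ge0}^{n+1}$.

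Next I would dispose of $N$ by conditioning. At $r=0$ all $n+1$ modes are i.i.d.\ geometric $P_{0,N}$, so the joint law is $(\tfrac1{N+1})^{n+1}(\tfrac{N}{N+1})^{s}$, a one-parameter exponential family in the natural parameter $\log\tfrac{N}{N+1}$ whose sufficient statistic $s$ has a complete (negative-binomial) distribution. Therefore every unbiased level-$\alpha$ test is similar on $\{r=0\}$, and by completeness has Neyman structure $\mathrm{E}_{0}[\varphi\mid s]=\alpha$. The defining relations (\ref{c(s)'}) and (\ref{gamma(s)'}) are exactly this conditional-level-$\alpha$ requirement: at $r=0$ the conditional law of $k_0$ given $s$ is $\binom{s-k_0+n-1}{n-1}/\binom{s+n}{n}$, and $\sum_{l=0}^s\binom{s-l+n-1}{n-1}=\binom{s+n}{n}$, so $T_\alpha^{[\mathrm{t}],n}$ is a similar test of level $\alpha$. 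For a general alternative the conditional law of $k_0$ given $s$ is proportional to $\binom{s-k_0+n-1}{n-1}L_{k_0}(-\mu)$ with $\mu:=\tfrac{r^2}{N(N+1)}$; thus it depends on $(r,N)$ only through $\mu$ and, by the monotonicity in $x$ of the Laguerre ratios $L_{k+1}(-x)/L_k(-x)$ already established in the proof of Theorem \ref{thmH-A}, it has monotone likelihood ratio increasing in $\mu$. Consequently the one-sided rule $\varphi'^{(n)}$, which rejects for large $k_0$ at conditional level $\alpha$, is the conditional UMP test against every $\mu>0$ (the monotone-likelihood-ratio principle behind Theorem \ref{cla1}, cf.\ \cite{Leh2}). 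This yields simultaneously the unbiasedness ($\mathrm{E}_{r,N}[\varphi'\mid s]\ge\alpha$ for each $s$, hence power $\ge\alpha$) and the pointwise domination $\beta_{T_\alpha^{[\mathrm{t}],n}}(\rho_{re^{it},N})\le\beta_{T}(\rho_{re^{it},N})$ against any unbiased $T$, since $T$ also has Neyman structure and the marginal law of $s$ does not depend on the test. Being pointwise UMP unbiased among invariant tests, $T_\alpha^{[\mathrm{t}],n}$ attains the invariant min-max value, and Theorem \ref{cpt.H-S} lifts this to UMP unbiased min-max over $S^1\times\R_{>0}$.

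The hard part is the reduction ``unbiased $\Rightarrow$ Neyman structure'' combined with the conditional optimality. I must check that the geometric null family genuinely makes $s$ complete sufficient for $N$, so that similar tests coincide with tests of Neyman structure, and I must verify the monotone likelihood ratio of the conditional family in $k_0$. The latter is the real crux and is not formal: it rests entirely on the monotonicity of the Laguerre ratios imported from Theorem \ref{thmH-A}, and it is exactly what makes the one-sided test in $k_0$ optimal. A smaller technical point is the continuity of the power function in $r$ at $r=0$, which is needed to promote the level inequality to equality (similarity) and which follows from the explicit analytic form of $P_{r,N}$ in Lemma \ref{prob}.
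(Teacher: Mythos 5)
Your proposal is correct and follows essentially the same route as the paper: reduction to $S^1$-invariant tests via Theorem \ref{cpt.H-S} and Lemma \ref{inv2}, identification of unbiasedness with the conditional level-$\alpha$ condition given the total count $s$ (the paper's conditions (\ref{alphaconst'})--(\ref{alphaconst2'}), which encode exactly your Neyman-structure argument, with the continuity point you flag used implicitly there), and the Laguerre-ratio monotonicity from the proof of Theorem \ref{thmH-A} to dominate every competing test. The only difference is packaging: where you invoke the classical monotone-likelihood-ratio/conditional-UMP machinery for the family proportional to $\binom{s-k_0+n-1}{n-1}L_{k_0}(-\mu)$, the paper carries out the equivalent computation directly, showing the power difference vanishes at $x=0$ and is monotone increasing in $x=\frac{r^2}{N(N+1)}$.
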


Therefore, we get the following proposition.

\begin{prp}\Label{mean.min-max2}
For the hypothesis $(\mathrm{H}\textrm{-}2)$ at $R_0=0$, the test 
\[
T^{[2],n}_{\alpha}:=U_n^{\ast}T_{\alpha}^{[\mathrm{t}],n-1}U_n
\]
 is a UMP unbiased min-max test with level $\alpha$.
\end{prp}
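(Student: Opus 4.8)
The plan is to deduce Proposition~\ref{mean.min-max2} directly from Theorem~\ref{t-test} by means of the concentrating operator, placing all of the genuine work inside Theorem~\ref{t-test}. First I would reparametrize $\theta=r\mathrm{e}^{it}$ and use the identity $\rho_{\theta,N}^{\otimes n}=U_n^{\ast}(\rho_{\sqrt{n}\theta,N}\otimes\rho_{0,N}^{\otimes(n-1)})U_n$ coming from $(\ref{concentrating})$. Applying Lemma~\ref{lem1} with $U=U_n$ to the family $\{\rho_{\sqrt{n}r\mathrm{e}^{it},N}\otimes\rho_{0,N}^{\otimes(n-1)}\}$ shows that a test $T'$ is a UMP unbiased min-max test for that family if and only if $U_n^{\ast}T'U_n$ is one for $\{\rho_{r\mathrm{e}^{it},N}^{\otimes n}\}$, i.e. for $(\mathrm{H}\textrm{-}2)$ at $R_0=0$. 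Because $r\mapsto\sqrt{n}r$ is an increasing bijection of $[0,\infty)$ fixing $0$, the hypotheses $r=0$ vs $r>0$ are unchanged under the rescaling, so the reduced family is precisely the problem $(\mathrm{H}\textrm{-}\mathrm{t})$ of Theorem~\ref{t-test} with $n-1$ ancillary modes. Hence $T'=T_{\alpha}^{[\mathrm{t}],n-1}$ is optimal by Theorem~\ref{t-test}, and $T^{[2],n}_{\alpha}=U_n^{\ast}T_{\alpha}^{[\mathrm{t}],n-1}U_n$ is the claimed test. I would emphasize that this works only at $R_0=0$: for $R_0>0$ the rescaling would shift the threshold $\sqrt{n}R_0\ne R_0$ and the reduction would break.

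All substance therefore sits in Theorem~\ref{t-test}, which I would prove as follows. The phase $\mathrm{e}^{it}$ is a compact $S^1$ disturbance implemented by $S_{\mathrm{e}^{it}}$ on the first mode, while the other modes carry the phase-invariant state $\rho_{0,N}^{\otimes n}$; the compact quantum Hunt-Stein theorem (Theorem~\ref{cpt.H-S}) together with Lemma~\ref{inv2} and the commutative reduction of Section~\ref{classical result} then let me restrict, without loss, to tests diagonal in the joint number basis. On such tests the problem is classical with outcome law $p_{r,N}(k)=P_{r,N}(k_0)\prod_{j=1}^{n}P_{0,N}(k_j)$, which by Lemma~\ref{prob} depends on $\theta$ only through $r^2=|\theta|^2$, so the phase is automatically eliminated. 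With $s(k)=\sum_{j=0}^{n}k_j$, under $H_0$ one has $p_{0,N}(k)=(N+1)^{-(n+1)}(N/(N+1))^{s(k)}$, a one-parameter exponential family in $N$ whose sufficient statistic $s$ is complete, and the conditional law of $k$ given $s$ is uniform and independent of $N$.

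From here I would run the classical UMP-unbiased argument by hand. Continuity of the power function forces every unbiased level-$\alpha$ test to be similar on the boundary $r=0$, and completeness of $s$ upgrades similarity to the Neyman structure $\mathrm{E}_{0,N}[\ph\mid s]=\alpha$ for every $s$; this is exactly what $(\ref{c(s)'})$ and $(\ref{gamma(s)'})$ encode. For fixed $s$ the conditional likelihood ratio is proportional to $L_{k_0}(-r^2/(N(N+1)))$, which by the monotonicity of $L_k(-x)$ in $k$ established in the proof of Theorem~\ref{thmH-A} is increasing in $k_0$; the conditional Neyman--Pearson lemma then makes the one-sided region $\{k_0>c(s)\}$ conditionally most powerful for every $(r,N)$ simultaneously, whence $T_{\alpha}^{[\mathrm{t}],n}$ is UMP among all Neyman-structure tests and a fortiori among the invariant unbiased tests. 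That $T_{\alpha}^{[\mathrm{t}],n}$ is itself unbiased follows because $\ph$ and $L_{k_0}(-x)$ are both increasing in $k_0$, so Chebyshev's sum inequality conditionally on $s$, combined with $\sum_k p_{r,N}(k)=1$, gives $\gamma_{T}(r,N)\ge\alpha$. Finally Proposition~\ref{cpt.inv.min-max} converts the UMP unbiased invariant property into the min-max statement.

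The main obstacle, I expect, is that $\{p_{r,N}\}$ is not an exponential family, so Theorem~\ref{cla2} cannot be invoked verbatim and the Laguerre factor $L_{k_0}(-r^2/(N(N+1)))$ must be controlled directly. This bites in two places: proving the conditional monotone likelihood ratio in $k_0$, which I would get from the recursion $(\ref{L.polynomial})$ as in Theorem~\ref{thmH-A}, and verifying unbiasedness, where the decaying prefactor $\mathrm{e}^{-r^2/(N+1)}$ must be shown to be compensated exactly by the growth of the Laguerre terms; the identity $\sum_k p_{r,N}(k)=1$ is what supplies that compensation.
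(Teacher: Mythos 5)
Your proof is correct, and for the Proposition itself it coincides with the paper's: the paper likewise reparametrizes $\theta=r\mathrm{e}^{it}$, applies Lemma~\ref{lem1} with the concentrating operator $U_n$, observes that the reduced problem is exactly $(\mathrm{H}\textrm{-}\mathrm{t})$ with $n-1$ ancillary modes, and cites Theorem~\ref{t-test}; your remark that the ancillary modes must be kept (they carry information about the unknown $N$) is exactly why the paper uses Lemma~\ref{lem1} here rather than the erasure step of Theorem~\ref{thm3} used for $(\mathrm{H}\textrm{-}1)$. Where you diverge is inside Theorem~\ref{t-test}. After the same Hunt-Stein/Lemma~\ref{inv2} reduction and the same Neyman-structure characterization of unbiasedness (your condition $\mathrm{E}_{0,N}[\ph\mid s]=\alpha$ is the paper's $(\ref{alphaconst'})$--$(\ref{alphaconst2'})$, justified as you say by completeness of the geometric/negative-binomial family), the paper finishes by writing the difference of powers between the candidate and an arbitrary unbiased invariant competitor as a shell-wise sum weighted by Laguerre ratios $L_{k_0}(-x)/L_{c(s)}(-x)$, showing it vanishes at $x=0$ and is monotone increasing in $x$ via the recursion $(\ref{L.polynomial})$; you instead fix the alternative $(r,N)$ and invoke the conditional Neyman--Pearson lemma on each shell $\{s(k)=s\}$, using monotonicity of $L_{k_0}(-x)$ in $k_0$, which is a mild repackaging of the same Laguerre monotonicity that avoids the monotone-in-$x$ step. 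Your comonotonicity (Chebyshev) argument for unbiasedness of the candidate is a small genuine improvement: the paper infers unbiasedness from the similarity condition $(\ref{alphaconst2'})$, which strictly gives only power $=\alpha$ at $r=0$, and your inequality (equivalently, taking $T=\alpha I$ in the paper's difference computation) supplies the missing power $\ge\alpha$ for $r>0$. One quibble: your stated reason the argument fails for $R_0>0$ is off --- the reduction via $U_n$ is harmless for any $R_0$ (the threshold simply becomes $\sqrt{n}R_0$, exactly as in $(\mathrm{H}\textrm{-}1)$); what fails is that the fundamental problem with unknown $N$ is solved only at threshold $0$, the case $R_0>0$ being the open quantum analogue of the bioequivalence problem, as the paper notes at the end of Section~5.
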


\hspace{-1.5em}\textit{Proof of Theorem} \ref{t-test}\textit{:}
Due to Theorem \ref{cpt.H-S}, it is enough to prove that $T_{\alpha}^{[\mathrm{t}],n}$ is a UMP unbiased $S^1$-invariant test concerning the representation $\{S_{e^{it}}\otimes I^{\otimes n}\}_{e^{it}\in S^1}$ with level $\alpha$. 

Here, a test $T$ is an unbiased test with level $\alpha$ if and only if 
\[
\begin{array}{ll}
&\mathrm{Tr}(\rho_{0,\mathrm{e}^{it},N}^{\otimes n}T)\\
 =&\displaystyle\sum_{s=0}^{\infty}\left(\displaystyle\sum_{k\in\Z^{n+1}: s(k)=s}\langle k|T|k\rangle{\binom{s+n-1}{n-1}}^{-1}\right){\binom{s+n-1}{n-1}}\left(\frac{1}{N+1}\right)^n\left(\frac{N}{N+1}\right)^s\\
 =&\alpha
\end{array}
\]
holds for any $N\in\R_{>0}$. The above condition is equivalent to 
\begin{align}\Label{alphaconst'}
\displaystyle\sum_{k\in\Z^{n+1}: s(k)=s}\langle k|T|k\rangle{\binom{s+n-1}{n-1}}^{-1}=\alpha
\end{align}
for any $s\in\Z_{\ge 0}$. By the conditions $(\ref{c(s)'})$ and $(\ref{gamma(s)'})$,
\begin{align}\Label{alphaconst2'}
\displaystyle\sum_{k\in\Z^{n+1}: s(k)=s}\ph'^{(n)}({k}){\binom{s+n-1}{n-1}}^{-1}=\alpha
\end{align}
holds, and hence $T_{\alpha}^{[\mathrm{t}],n}$ is an unbiased test with level $\alpha$. In addition, due to Lemma $\ref{inv2}$, $T_{\alpha}^{[\mathrm{t}],n}$ is $S^1$-invariant test  concerning the representation $\{S_{e^{it}}\otimes I^{\otimes n}\}_{e^{it}\in S^1}$.

Let $T$ be an arbitrary unbiased $S^1$-invariant test  concerning the representation $\{S_{e^{it}}\otimes I^{\otimes n}\}_{e^{it}\in S^1}$. Then ${T}$ is represented as 
\[
{T}=\displaystyle\sum_{k_0=0}^{\infty}|k_0\rangle\langle k_0|\otimes T_{k_0}
\]
by Lemma $\ref{inv2}$. 

Since
\[
\begin{array}{lll}
&&\beta_{T}(\rho_{r\mathrm{e}^{it},N}\otimes\rho_{0,N}^{\otimes n})-\beta_{T_{\alpha}^{[\mathrm{t}],n}}(\rho_{r\mathrm{e}^{it},N}\otimes\rho_{0,N}^{\otimes n})\\
&=& \mathrm{Tr}((\rho_{r\mathrm{e}^{it},N}\otimes\rho_{0,N}^{\otimes n})T_{\alpha}^{[\mathrm{t}],n})-\mathrm{Tr}((\rho_{r\mathrm{e}^{it},N}\otimes\rho_{0,N}^{\otimes n})T)\\
 &=&\displaystyle\sum_{s=0}^{\infty}\displaystyle\sum_{k\in\Z_{\ge 0}^{n+1}:s(k)=s}(\ph'^{(n)}({k})-t_{k})
 L_{c(s)}\left(\frac{-r^2}{N(N+1)}\right)\left(\frac{1}{N+1}\right)^n\left(\frac{N}{N+1}\right)^s e^{\frac{-r^2}{N+1}}\\
 &=&\displaystyle\sum_{s=0}^{\infty}\left\{\displaystyle\sum_{k\in\Z_{\ge 0}^{n+1}:s(k)=s}(\ph'^{(n)}({k})-t_{k})\frac{L_{k_0}\left(\frac{-r^2}{N(N+1)}\right)}{L_{c(s)}\left(\frac{-r^2}{N(N+1)}\right)}\right\}\\
 &&\times L_{c(s)}\left(\frac{-r^2}{N(N+1)}\right)\left(\frac{1}{N+1}\right)^n\left(\frac{N}{N+1}\right)^s e^{\frac{-r^2}{N+1}}
\end{array}
\]
where $t_k:=\langle k_1,\cdot\cdot\cdot,k_n|T_{k_0}|k_1,\cdot\cdot\cdot,k_n\rangle$ for $k=(k_0,k_1,\cdot\cdot\cdot,k_n)\in\Z_{\ge 0}^{n+1}$, 
it is enough to show 
\begin{equation}\Label{inequality}
\displaystyle\sum_{k\in\Z_{\ge 0}^{n+1}:s(k)=s}(\ph'^{(n)}({k})-t_{k})\frac{L_{k_0}\left(-x\right)}{L_{c(s)}\left(-x\right)}\ge 0
\end{equation}
for any $s\in\Z_{\ge 0}$ and $x\in\R_{> 0}$ in order to prove that $\beta_T-\beta_{T_{\alpha}^{[\mathrm{t}],n}}$ is nonnegative. 
\[
\begin{array}{ll}
&\displaystyle\sum_{k\in\Z_{\ge 0}^{n+1}:s(k)=s}(\ph'^{(n)}({k})-t_{k})\frac{L_{k_0}\left(-x\right)}{L_{c(s)}\left(-x\right)}\\
 =&\displaystyle\sum_{k_0=0}^{s}\displaystyle\sum_{(k_1,\cdot\cdot\cdot,k_n)\in\Z_{\ge 0}^{n}:s(k)=s}(\ph'^{(n)}({k})-t_{k})\frac{L_{k_0}\left(-x\right)}{L_{c(s)}\left(-x\right)}\\
 =&\displaystyle\sum_{ k_0=0}^{c(s)-1}\displaystyle\sum_{(k_1,\cdot\cdot\cdot,k_n)\in\Z_{\ge 0}^{n}:s(k)=s}(-t_{k})\frac{L_{k_0}\left(-x\right)}{L_{c(s)}\left(-x\right)}
 +\displaystyle\sum_{k_0=c(s)}\displaystyle\sum_{(k_1,\cdot\cdot\cdot,k_n)\in\Z_{\ge 0}^{n}:s(k)=s}(\gamma(s)-t_{k})\\
 &+\displaystyle\sum_{k_0=c(s)+1}^{\infty}\displaystyle\sum_{(k_1,\cdot\cdot\cdot,k_n)\in\Z_{\ge 0}^{n}:s(k)=s}(1-t_{k})\frac{L_{k_0}\left(-x\right)}{L_{c(s)}\left(-x\right)}.
\end{array}
\]
The above function equals 0 at $x=0$ by (\ref{alphaconst'}) and (\ref{alphaconst2'}), and increases with respect to $x\in\R_{> 0}$ in the same way as the proof of Theorem \ref{mean.min-max}. Thus, the inequality $(\ref{inequality})$ is proved.

\endproof

We constructed a test $T_{\alpha}^{[\mathrm{t}],n}$ in the hypothesis testing problem $(\mathrm{H}\textrm{-}\mathrm{t})$. In the classical system, the hypothesis testing problem $(\mathrm{H}\textrm{-}\mathrm{t})$ corresponds to
\[
H_0:|\theta|\in\{0\}~vs.~H_1:|\theta|\in(0,\infty)~\hbox{with}~ \{G_{\theta,N}^{(n)}\}_{ \theta\in\R, N\in\R_{>0}},
\]
where $G_{\theta,N}$ is a Gaussian distribution. 
A $t$ test is a UMP min-max test for the above hypothesis testing problem. In addition, since the type I and II error probabilities on the $t$ test does not depend on the sign of the mean parameter of a Gaussian distribution. Similarly, the type I and II error probabilities on the test $T_{\alpha}^{[\mathrm{t}],n}$ does not depend on the phase of the mean parameter of a quantum Gaussian state. Therefore, the test $T_{\alpha}^{[\mathrm{t}],n}$ can be regarded as a quantum counterpart of the $t$ test. Hence, we call the test $T_{\alpha}^{[\mathrm{t}],n}$ in the above theorem a \textbf{quantum} \mbox{\boldmath ${ t}$} \textbf{test} with $n$ degrees of freedom and level $\alpha$. 

When the hypothesis testing problem $(\mathrm{H}\textrm{-}2)$ at $R_0\ne 0$, it is not solved whether there exists an optimal test for the hypothesis testing problem $(\mathrm{H}\textrm{-}2)$. This quantum hypothesis testing problem is analogous to the hypothesis testing problem for the size of the mean parameter of the Gaussian distribution with unknown variance in classical hypothesis testing. The problem is called the bioequivalence problem \cite{BHM} and is not solved whether there exists an optimal test for it. But the problem appears in several situations including medicine and pharmacy, and is expected to be solved from the demand of not only the theoretical aspect but also the application aspect. The problem $(\mathrm{H}\textrm{-}2)$ is expected to be also important in the quantum hypothesis testing because of the importance of the bioequivalence problem.

\section{Hypothesis testing of the mean parameters for two kinds of quantum Gaussian states}

We consider the hypothesis testing about the consistency of the mean parameters for two kinds of quantum Gaussian states $\{\rho_{\theta,N}^{\otimes m}\otimes \rho_{\eta,N}^{\otimes n}\}$.

\subsection{The case that the number parameter is known}

~

We consider the hypothesis testing problem
\[
H_0:\theta=\eta~vs.~H_1:\theta\ne\eta~\hbox{with}~\{\rho_{\theta,N}^{\otimes m}\otimes \rho_{\eta,N}^{\otimes n}\}_{\theta,\eta\in\C}
\eqno{(\mathrm{H}\textrm{-}3)}
\]
for $1\le n\in\N$ when the number parameter $N$ is fixed. That is, we suppose that the number parameter $N$ is known. 

When we change the parameterization as $m\theta+n\eta=a, \theta-\eta=r\mathrm{e}^{it}$, $(\mathrm{H}\textrm{-}3)$ is rewritten as 
\begin{eqnarray}
&H_0:r\in\{0\}~vs.~H_1:r\in(0,\infty)&\nonumber\\
&~\hbox{with}~
\{U'^*_{m,n}(\rho_{a,N}\otimes\rho_{r\mathrm{e}^{it},N}\otimes\rho_{0,N}^{\otimes (m+n-2)})U'_{m,n}\}_{a\in\C,r\in\R_{\ge 0},\mathrm{e}^{it}\in S^1}.&\nonumber
\end{eqnarray}
We apply Theorem $\ref{thm3}$ to $(\mathrm{H}\textrm{-}3)$ in the following way:
\begin{eqnarray}
&\mathcal{S}_1:=\{\rho_{r\mathrm{e}^{it},N}\}_{r\in\R_{\ge 0}, \mathrm{e}^{it}\in S^1},~ \mathcal{S}_2:=\{\rho_{a,N}\otimes\rho_{0,N}^{\otimes (m+n-2)}\}_{a\in\C}, \mathcal{S}_3:=\phi,&\nonumber\\
&\Theta:=\R_{\ge0},~ \Xi_1:=S^1,~ \Xi_2:=\C,~ U':=U'_{m,n},&\nonumber
\end{eqnarray}
where $\phi$ is the empty set, and $U'_{m,n}$ is the unitary operator satisfying $(\ref{U2})$. Hence, a UMP min-max test for $(\mathrm{H}\textrm{-}3)$ is given as $U'^*_{m,n}(I\otimes T'\otimes I^{\otimes(m+n-2)})U'_{m,n}$ by using a UMP min-max test $T'$ with level $\alpha$ for 
\begin{eqnarray}
&H_0:r\in\{0\}~vs.~H_1:r\in(0,\infty)
~\hbox{with}~\{\rho_{r\mathrm{e}^{it},N}\}_{r\in\R_{\ge 0},\mathrm{e}^{it}\in S^1}.& \nonumber
\end{eqnarray}
Since this hypothesis testing problem is nothing but $(\mathrm{H}\textrm{-}\mathrm{A})$, 
Theorem \ref{thmH-A} guarantees that 
$T_{R_0,N}^{\alpha}$ is a UMP min-max test for the above testing problem.
Therefore, we get the following proposition.

\begin{prp}
For the hypothesis testing problem $(\mathrm{H}\textrm{-}3)$, the test 
\[
T_{\alpha,N}^{[3],m,n}:={U'}_{m,n}^{\ast}(I\otimes T_{0,N}^{1,\alpha}\otimes I^{\otimes (m+n-2)}){U'}_{m,n}
\]
 is a UMP min-max test with level $\alpha$.
\end{prp}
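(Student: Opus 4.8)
The plan is to apply the reduction machinery of Theorem \ref{thm3} to peel $(\mathrm{H}\textrm{-}3)$ down to the already-solved fundamental problem $(\mathrm{H}\textrm{-}\mathrm{A})$ at $R_0=0$, exactly as the parameter identification preceding the statement suggests. First I would reparametrize by $a:=m\theta+n\eta$ and $r\mathrm{e}^{it}:=\theta-\eta$; since $(\theta,\eta)\mapsto(a,\theta-\eta)$ is a linear bijection of $\C^2$, the null hypothesis $\theta=\eta$ is equivalent to $r=0$ and the alternative $\theta\ne\eta$ to $r>0$, so the hypotheses translate cleanly. Conjugating the state family by the unitary $U'_{m,n}$ and invoking $(\ref{U2})$ turns each $\rho_{\theta,N}^{\otimes m}\otimes\rho_{\eta,N}^{\otimes n}$ into the product $\rho_{c_1 a,N}\otimes\rho_{c_0 r\mathrm{e}^{it},N}\otimes\rho_{0,N}^{\otimes(m+n-2)}$; because $c_0,c_1\ne 0$, the auxiliary constants can be absorbed into a relabeling of $a$ and $r$ without altering the ranges $a\in\C$, $r\in\R_{\ge 0}$, $\mathrm{e}^{it}\in S^1$.

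Next I would feed this into Theorem \ref{thm3} with the assignment listed before the statement: $\mathcal{S}_1=\{\rho_{r\mathrm{e}^{it},N}\}$ carrying the testing parameter $r$ together with the $S^1$-disturbance $\mathrm{e}^{it}$, $\mathcal{S}_2=\{\rho_{a,N}\otimes\rho_{0,N}^{\otimes(m+n-2)}\}_{a\in\C}$ an erasable block depending only on the disturbance $a$ and not on $r$ or $\mathrm{e}^{it}$, and $\mathcal{S}_3=\phi$. Because $\mathcal{H}_3$ is empty, I would use the second (empty-$\mathcal{H}_3$) clause of Theorem \ref{thm3}, which is a pure composition of Lemma \ref{lem1} (unitary covariance) and Lemma \ref{lem2} (deletion of a testing-irrelevant tensor factor) and therefore requires none of the amenability or completeness hypotheses. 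This clause says that a test $T'$ is a UMP min-max test for the $\mathcal{S}_1$-problem if and only if ${U'}_{m,n}^{\ast}(I\otimes T'\otimes I^{\otimes(m+n-2)})U'_{m,n}$ is a UMP min-max test for $(\mathrm{H}\textrm{-}3)$, with $T'$ acting on the tensor slot occupied by $\rho_{c_0 r\mathrm{e}^{it},N}$.

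The $\mathcal{S}_1$-problem is literally $(\mathrm{H}\textrm{-}\mathrm{A})$ at $R_0=0$, so Theorem \ref{thmH-A} supplies the required UMP min-max test $T_{0,N}^{\alpha}$ (the test written $T_{0,N}^{1,\alpha}$ in the statement). Substituting $T'=T_{0,N}^{\alpha}$ into the equivalence of the previous paragraph yields that $T_{\alpha,N}^{[3],m,n}$ is a UMP min-max test of level $\alpha$ for $(\mathrm{H}\textrm{-}3)$, which is the claim.

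I expect the only real care-points, rather than genuine obstacles, to be bookkeeping: confirming that the reparametrization is a bijection so the hypotheses map correctly, verifying that $\mathcal{S}_2$ truly carries no $r$- or $\mathrm{e}^{it}$-dependence so that Lemma \ref{lem2} is applicable, and tracking which of the $m+n$ modes the single-mode test acts on after the $U'_{m,n}$ conjugation (the reason the displayed test reads $I\otimes T_{0,N}^{\alpha}\otimes I^{\otimes(m+n-2)}$ rather than $T_{0,N}^{\alpha}\otimes I^{\otimes(m+n-1)}$). No new analytic estimate is needed, since all the substantive work, namely the monotone-likelihood-ratio argument on the Laguerre factors and the quantum Hunt-Stein reduction over $S^1$, is already packaged inside Theorem \ref{thmH-A}.
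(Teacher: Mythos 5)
Your proposal is correct and follows essentially the same route as the paper: the reparametrization $a=m\theta+n\eta$, $r\mathrm{e}^{it}=\theta-\eta$, conjugation by $U'_{m,n}$ via $(\ref{U2})$, the empty-$\mathcal{H}_3$ clause of Theorem \ref{thm3} with exactly the assignment $\mathcal{S}_1=\{\rho_{r\mathrm{e}^{it},N}\}$, $\mathcal{S}_2=\{\rho_{a,N}\otimes\rho_{0,N}^{\otimes(m+n-2)}\}_{a\in\C}$, and then Theorem \ref{thmH-A} for the residual problem, which is $(\mathrm{H}\textrm{-}\mathrm{A})$ at $R_0=0$. Your care-points (bijectivity of the reparametrization, the $r$-independence of $\mathcal{S}_2$, the placement of the single-mode test in the second tensor slot, and the harmlessness of the constants $c_0,c_1$ since $R_0=0$) are exactly the bookkeeping the paper leaves implicit.
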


In particular, the test $T_{\alpha,0}^{[3],m,n}$ gives a UMP min-max test for the hypothesis testing problem on coherent states:
\[
H_0:\theta=\eta~vs.~H_1:\theta\ne\eta~\hbox{with}~\{|\theta)(\theta|^{\otimes m}\otimes |\eta)(\eta|^{\otimes n}\}_{\theta,\eta\in\C}
\]
for $2\le n\in\N$.

\subsection{The case that the number parameter is unknown: $t$ test}

~

We consider the hypothesis testing problem
\[
H_0:\theta=\eta~vs.~H_1:\theta\ne\eta
~\hbox{with}~
\{\rho_{\theta,N}^{\otimes m}\otimes \rho_{\eta,N}^{\otimes n}\}_{\theta,\eta\in\C,N\in\R_{>0}}
\eqno{(\mathrm{H}\textrm{-}4)} 
\]
for $2\le n\in\N$. 

When we change the parameterization as $m\theta+n\eta=a, \theta-\eta=r\mathrm{e}^{it}$, $(\mathrm{H}\textrm{-}4)$ is rewritten as
\begin{eqnarray}
&H_0:r\in\{0\}~vs.~H_1:r\in(0,\infty)&\nonumber\\
&~\hbox{with}~
\{U'^*_{m,n}(\rho_{a,N}\otimes\rho_{r\mathrm{e}^{it},N}\otimes\rho_{0,N}^{\otimes (m+n-2)})U'_{m,n}\}_{a\in\C,r\in\R_{\ge 0},\mathrm{e}^{it}\in S^1,N\in\R_{>0}}.\nonumber&
\end{eqnarray}
We apply Theorem $\ref{thm3}$ for $(\mathrm{H}\textrm{-}4)$ in the following way:
\begin{eqnarray}
&\mathcal{S}_1:=\{\rho_{r\mathrm{e}^{it},N}\otimes\rho_{0,N}^{\otimes (m+n-2)}\}_{r\in\R_{\ge 0}, \mathrm{e}^{it}\in S^1,N\in\R_{>0}},~ \mathcal{S}_2:=\phi, ~\mathcal{S}_3:=\{\rho_{a,N}\}_{a\in\C,N\in\R_{\ge 0}},&\nonumber\\
&\Theta:=\R_{\ge0}(\ni r),\Xi_1:=S^1\times\R_{\ge 0}(\ni(\mathrm{e}^{it},N)), \Xi_3=G:=\C,V:=W, U':=U'_{m,n},&\nonumber
\end{eqnarray}
where $\phi$ is the empty set, $U'_{m,n}$ is the unitary operator satisfying $(\ref{U2})$, and $W$ is the mean shift operators $\{W_{\theta}\}_{\theta\in\C}$ as a representation of $\C$ defined in $(\ref{mean shift operator})$.
Hence, a UMP unbiased min-max test for $(\mathrm{H}\textrm{-}4)$ is given as $U'^*_{m,n}(I\otimes T')U'_{m,n}$ by using a UMP unbiased min-max test $T'$ with level $\alpha$ for 
\begin{eqnarray}
&H_0:r\in\{0\}~vs.~H_1:r\in(0,\infty)
~\hbox{with}~\{\rho_{r\mathrm{e}^{it},N}\otimes\rho_{0,N}^{\otimes (m+n-2)}\}_{r\in\R_{\ge 0},\mathrm{e}^{it}\in S^1}.& \nonumber
\end{eqnarray}
Since this hypothesis testing problem is nothing but $(\mathrm{H}\textrm{-}\mathrm{t})$, 
Theorem \ref{t-test} guarantees that 
$T_{\alpha}^{[\mathrm{t}],{m+n-2}}$ is a UMP unbiased min-max test for the above testing problem.
Therefore, we get the following proposition.

\begin{prp}

For the hypothesis testing problem $(\mathrm{H}\textrm{-}4)$, the test 
\[
T_{\alpha,N}^{[4],m,n}:={U'}_{m,n}^{\ast}(I\otimes T_{\alpha}^{[\mathrm{t}],{m+n-2}}){U'}_{m,n}
\]
is a UMP unbiased min-max test with level $\alpha$.

\end{prp}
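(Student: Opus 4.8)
The plan is to read the conclusion off the reduction already prepared before the statement, so the work splits into an algebraic normalisation, an application of Theorem \ref{thm3}, and an appeal to the already-established $t$-test optimality. First I would perform the change of variables $a=m\theta+n\eta$, $r\mathrm{e}^{it}=\theta-\eta$ and conjugate by the unitary $U'_{m,n}$ of $(\ref{U2})$, turning $\rho_{\theta,N}^{\otimes m}\otimes\rho_{\eta,N}^{\otimes n}$ into $\rho_{c_1 a,N}\otimes\rho_{c_0 r\mathrm{e}^{it},N}\otimes\rho_{0,N}^{\otimes(m+n-2)}$. Since $c_0>0$, the magnitude $c_0 r$ vanishes exactly when $r=0$, so the split $\theta=\eta$ versus $\theta\ne\eta$ is faithfully translated into $r=0$ versus $r>0$, and this split is insensitive to the disturbance data $(a,\mathrm{e}^{it},N)$; I would absorb $c_0$ into the radial variable to match the normalisation of $(\mathrm{H}\textrm{-}\mathrm{t})$.

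Next I would apply Theorem \ref{thm3} with the identifications already recorded, namely $\mathcal{S}_1=\{\rho_{r\mathrm{e}^{it},N}\otimes\rho_{0,N}^{\otimes(m+n-2)}\}$, $\mathcal{S}_2=\phi$, $\mathcal{S}_3=\{\rho_{a,N}\}$, $\Xi_3=G=\C$ and $V=W$. The hypotheses to verify are that $\C$ is amenable (immediate, being $\R^2$, cf. Proposition \ref{AIPM}), that the mean-shift family $\{W_g\}_{g\in\C}$ is an irreducible projective representation of $\C$ on $L^2(\R)$ — the same irreducibility invoked for Lemma \ref{inv1} — and that it is covariant on the $a$-mode through $W_g\rho_{a,N}W_g^{*}=\rho_{a+g,N}$, the group acting trivially on $(\mathrm{e}^{it},N)$. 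Theorem \ref{thm3} then asserts that a test $T$ is a UMP unbiased min-max test for the reduced problem on $\mathcal{S}_1$ if and only if $U'^{*}_{m,n}(I\otimes T)U'_{m,n}$ is one for $(\mathrm{H}\textrm{-}4)$. The residual problem on $\mathcal{S}_1$ is literally $(\mathrm{H}\textrm{-}\mathrm{t})$ with $m+n-2$ vacuum modes, so Theorem \ref{t-test} supplies $T=T_{\alpha}^{[\mathrm{t}],m+n-2}$; substituting gives exactly $T_{\alpha,N}^{[4],m,n}=U'^{*}_{m,n}(I\otimes T_{\alpha}^{[\mathrm{t}],m+n-2})U'_{m,n}$, which is the assertion.

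The one substantive point, and the step I expect to be the main obstacle, is the completeness hypothesis that Theorem \ref{thm3} imposes because $G=\C$ is non-compact: one must control the family $\{\rho_{r\mathrm{e}^{it},N}\otimes\rho_{0,N}^{\otimes(m+n-2)}\otimes\rho_{a,N}\}$. Here I would be careful, since the fixed thermal factors $\rho_{0,N}$ are diagonal in the number basis and are inert under the $\C$-action, so an operator supported purely on their off-diagonal entries is annihilated by every state of the family; completeness in the naive sense on the whole tensor space therefore needs comment. The resolution I would pursue is that the Hunt--Stein averaging underlying Theorem \ref{thm3} only mixes the mean-mode $\mathcal{H}_3$ on which $\C$ acts, so what is genuinely required is completeness of the displaced thermal family $\{\rho_{a,N}\}_{a\in\C}$ on $\mathcal{H}_3$, which holds for every $N>0$ by the completeness statement recorded before Theorem \ref{H-S}; the inert factors only rescale traces by $1$ and do not obstruct the reduction. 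Once this is secured, the remaining bookkeeping — tracking the count of vacuum modes as $m+n-2$ and confirming that $U'_{m,n}$ carries the $r=0$ hypothesis back to $\theta=\eta$ — is routine.
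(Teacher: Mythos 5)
Your main line of argument is exactly the paper's proof: the paper also reparametrizes by $a=m\theta+n\eta$, $r\mathrm{e}^{it}=\theta-\eta$, conjugates by the unitary $U'_{m,n}$ of $(\ref{U2})$ (absorbing the positive constants $c_0,c_1$ into the coordinates, as you do), applies Theorem \ref{thm3} with precisely your identifications $\mathcal{S}_1=\{\rho_{r\mathrm{e}^{it},N}\otimes\rho_{0,N}^{\otimes(m+n-2)}\}$, $\mathcal{S}_2=\phi$, $\mathcal{S}_3=\{\rho_{a,N}\}$, $\Xi_3=G:=\C$, $V:=W$, and then recognizes the residual problem as $(\mathrm{H}\textrm{-}\mathrm{t})$ with $m+n-2$ degrees of freedom, so that Theorem \ref{t-test} supplies $T_{\alpha}^{[\mathrm{t}],m+n-2}$ and hence $T_{\alpha,N}^{[4],m,n}$. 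On this route there is nothing to add.

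Where you depart from the paper is the completeness discussion, and your diagnosis there is sharper than the text itself: the joint family $\{\rho_{r\mathrm{e}^{it},N}\otimes\rho_{0,N}^{\otimes(m+n-2)}\otimes\rho_{a,N}\}$ is indeed \emph{not} complete in the sense defined before Theorem \ref{H-S} — for instance $X=I\otimes\bigl(|0\rangle\langle 1|+|1\rangle\langle 0|\bigr)\otimes I$, supported off-diagonally on one inert mode, satisfies $\mathrm{Tr}\rho X=0$ for every member — and the paper applies Theorem \ref{thm3} here (and likewise for $(\mathrm{H}\textrm{-}6)$ and $(\mathrm{H}\textrm{-}8)$) without verifying or even mentioning this hypothesis. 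However, your proposed repair, that completeness of $\{\rho_{a,N}\}_{a\in\C}$ on $\mathcal{H}_3$ alone is what is ``genuinely required,'' is asserted rather than proved, and as stated it does not close the gap. In the proof of Lemma \ref{lem3}, completeness is used to upgrade the trace identity $\mathrm{Tr}\bigl(\sigma\,(g\cdot\mathcal{L}(T'))\bigr)=\mathrm{Tr}\bigl(\sigma\,\mathcal{L}(T')\bigr)$ to \emph{operator} invariance $(I\otimes W_g)\mathcal{L}(T')(I\otimes W_g)^{\ast}=\mathcal{L}(T')$, which is exactly what Schur's lemma (as in Lemma \ref{inv1}) needs in order to write the averaged test as $T''\otimes I$. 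Completeness of $\{\rho_{a,N}\}_{a}$ at each fixed $N$ only yields that the power $\mathrm{Tr}(\sigma\,\mathcal{L}(T'))$ is independent of $a$, i.e.\ invariance of expectation values on the family, not invariance of the operator on $\mathcal{H}_1\otimes\mathcal{H}_3$; and because the disturbance parameter $N$ is shared between the inert modes and the $a$-mode, the $\mathcal{H}_3$ factor cannot be decoupled as cleanly as ``the inert factors only rescale traces by $1$'' — a fixed-state conditional expectation in the style of Lemma \ref{lem2} fails here precisely because $\rho_{a,N}$ carries $N$. A genuine patch needs an additional step, e.g.\ first pinching to the number-diagonal subalgebra of the inert modes (harmless, since every family state is diagonal there and the pinching commutes with $I\otimes W_g$) and then arguing invariance relative to the subalgebra actually distinguishable by the family; note that even then diagonal counterexamples on two inert modes persist, so the argument must ultimately avoid requiring full operator invariance on the inert part. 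In short: your proof is the paper's proof, your completeness paragraph correctly flags a hypothesis the paper leaves unverified, but the one-sentence resolution you offer is the step that still needs an actual argument.
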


\section{Hypothesis testing of the number parameter}

We consider the hypothesis testing problem about the number parameter for quantum Gaussian states.

\subsection{The case that the mean parameter is known: $\chi^2$ test}

~

In this subsection, we treat the hypothesis testing problem
\[
H_0:N\in[0,N_0]~vs.~H_1:N\in (N_0,\infty)~\hbox{with}~\{\rho_{\theta,N}^{\otimes n}\}_{N\in\R_{>0}} \eqno{(\mathrm{H}\textrm{-}5)}
\]
for $1\le n\in\N$ and $N\in\R_{>0}$ when the mean parameter $\theta$ is fixed. That is, we suppose that the mean parameter $\theta$ is known. Then, the disturbance parameter space is empty. Note that an optimal test for the classical analogue of $(\mathrm{H}\textrm{-}5)$ is given by a $\chi^2$ test.

We apply Lemma $\ref{lem1}$ to $(\mathrm{H}\textrm{-}5)$ in the following way:
\[
\mathcal{S}:=\{\rho_{0,N}^{\otimes n}\}_{N\in\R_{\ge0}},~\Theta:=\R_{\ge0},~\Xi_1:=\phi,~U:=W_{-\theta}^{\otimes n},
\]
where $\phi$ is the empty set and $W_{\theta}$ is the mean shift operator defined in $(\ref{mean shift operator})$. Hence, a UMP test for $(\mathrm{H}\textrm{-}5)$ is given as $W_{-\theta}^{\otimes n *}T'W_{-\theta}^{\otimes n}$ by using a UMP test $T'$ with level $\alpha$ for 
\[
H_0:N\in (0,N_0]~vs.~H_1:N\in(N_0,\infty)
\hbox{ with }
\{\rho_{0,N}^{\otimes n}\}_{N\in\R_{> 0}} 
\]
Since this hypothesis testing problem is nothing but $(\mathrm{H}\textrm{-}\mathrm{\chi^2})$, 
Theorem \ref{chi2 test} guarantees that 
$T_{\alpha}^{[\mathrm{\chi^2}],n}$ is a UMP test for the above testing problem.
Therefore, we get the following proposition.

\begin{prp}\Label{numUMP}
For the hypothesis testing problem $(\mathrm{H}\textrm{-}5)$, the test 
\[
T_{\alpha,\theta, N_0}^{[5],n}:=(W_{-\theta}^{\otimes n})^{\ast} T_{\alpha}^{[\mathrm{\chi^2}],n} (W_{-\theta}^{\otimes n})
\]
 is a UMP test with level $\alpha$.
\end{prp}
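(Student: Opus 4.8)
The plan is to reduce $(\mathrm{H}\textrm{-}5)$ directly to the already-solved quantum $\chi^2$ testing problem $(\mathrm{H}\textrm{-}\chi^2)$ via Lemma \ref{lem1}, exactly along the lines sketched in the paragraph preceding the proposition. Since the mean parameter $\theta$ is fixed and known, there is no disturbance parameter here ($\Xi_1=\phi$), so the relevant notion is the plain UMP test and Lemma \ref{lem1} applies in its unqualified (UMP) form; in particular no min-max or quantum Hunt--Stein argument is needed, which is what makes this case strictly easier than the later ones.

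First I would record that $W_{-\theta}$ is unitary on $L^2(\R)$, so that $W_{-\theta}^{\otimes n}$ is a unitary operator on $L^2(\R)^{\otimes n}$ and Lemma \ref{lem1} may be applied with $U:=W_{-\theta}^{\otimes n}$. The essential computation is the covariance identity $(W_{-\theta}^{\otimes n})^{*}\rho_{0,N}^{\otimes n}(W_{-\theta}^{\otimes n})=\rho_{\theta,N}^{\otimes n}$, valid for every $N\in\R_{>0}$. This follows factorwise from the defining relation $W_{\theta'}\rho_{\theta,N}W_{\theta'}^{*}=\rho_{\theta+\theta',N}$: using $(W_{-\theta})^{*}=W_{\theta}$ and specializing to $\theta'=\theta$, $\theta=0$ gives $W_{\theta}\rho_{0,N}W_{\theta}^{*}=\rho_{\theta,N}$. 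The one point to verify carefully is the sign bookkeeping, so that the conjugation produces exactly $\rho_{\theta,N}$ and not $\rho_{-\theta,N}$ or $\rho_{2\theta,N}$; it is precisely this that forces the choice $U=W_{-\theta}^{\otimes n}$ rather than $W_{\theta}^{\otimes n}$.

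Crucially, this conjugation leaves the number parameter $N$ untouched, so it carries the family $\{\rho_{0,N}^{\otimes n}\}_{N}$ onto $\{\rho_{\theta,N}^{\otimes n}\}_{N}$ in a way that matches the hypotheses $N\in(0,N_0]$ versus $N\in(N_0,\infty)$ term by term; thus $(\mathrm{H}\textrm{-}\chi^2)$ is transported precisely to $(\mathrm{H}\textrm{-}5)$. Theorem \ref{chi2 test} asserts that $T_{\alpha}^{[\chi^2],n}$ is a UMP test with level $\alpha$ for $(\mathrm{H}\textrm{-}\chi^2)$. Applying the ``if and only if'' of Lemma \ref{lem1} with $U=W_{-\theta}^{\otimes n}$ then gives immediately that $U^{*}T_{\alpha}^{[\chi^2],n}U=(W_{-\theta}^{\otimes n})^{*}T_{\alpha}^{[\chi^2],n}(W_{-\theta}^{\otimes n})=T_{\alpha,\theta,N_0}^{[5],n}$ is a UMP test with level $\alpha$ for $(\mathrm{H}\textrm{-}5)$, which is the claim. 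The only genuine content beyond citing the two prior results is the covariance identity, so the expected main obstacle is simply getting the phase/sign conventions of $W_{-\theta}$ right; everything else is formal.
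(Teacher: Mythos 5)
Your proposal is correct and follows exactly the paper's own derivation: apply Lemma \ref{lem1} with $U=W_{-\theta}^{\otimes n}$ (using the covariance $W_{\theta}\rho_{0,N}W_{\theta}^{*}=\rho_{\theta,N}$, which leaves $N$ untouched, so that $(\mathrm{H}\textrm{-}\chi^2)$ is carried onto $(\mathrm{H}\textrm{-}5)$) and then invoke Theorem \ref{chi2 test}. Your sign bookkeeping, $(W_{-\theta})^{*}=W_{\theta}$ so that $U^{*}\rho_{0,N}^{\otimes n}U=\rho_{\theta,N}^{\otimes n}$, matches the paper's convention, and your observation that $\Xi_1=\phi$ makes the plain UMP form of Lemma \ref{lem1} sufficient (no Hunt--Stein argument) is precisely the paper's reasoning.
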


In the above derivation,
we employ Theorem \ref{chi2 test} as well as our reduction method.
Since Theorem \ref{chi2 test} is shown by Theorem \ref{cla1},
application of the classical result (Theorem \ref{cla1}) is essential in the above derivation.
Similar observations can be applied to Propositions \ref{num.min-max}, \ref{2numUMP}, and \ref{2numUMP2}. 

\subsection{The case that the mean parameter is unknown: $\chi^2$ test}

~

In this subsection, we treat the hypothesis testing problem
\[
H_0:N\in(0,N_0]~vs.~H_1:N\in (N_0,\infty)~\hbox{with}~\{\rho_{\theta,N}^{\otimes n}\}_{\theta\in\C,N\in\R_{>0}}. \eqno{(\mathrm{H}\textrm{-}6)}
\]
for $2\le n\in\N$ and $N\in\R_{>0}$. Then, the disturbance parameter space is $\C$. Note that an optimal test for the classical analogue of $(\mathrm{H}\textrm{-}6)$ is given by a $\chi^2$ test.

We apply Theorem $\ref{thm3}$ to $(\mathrm{H}\textrm{-}6)$ in the following way:
\begin{eqnarray}
&\mathcal{S}_1:=\{\rho_{0,N}^{\otimes (n-1)}\}_{N\in\R_{\ge0}},~\mathcal{S}_2:=\phi,~\mathcal{S}_3:= \{\rho_{\sqrt{n}\theta,N}\}_{\theta\in\C,N\in\R_{\ge0}},&\nonumber\\
&\Theta:=\R_{\ge0},~\Xi_1:=\phi,~\Xi_3=G:=\C,~V:=W,~U:=U_n,&\nonumber
\end{eqnarray}
where $\phi$ is the empty set, $U_n$ is the concentrating operator satisfying $(\ref{concentrating})$, and $W$ is the mean shift operators $\{W_{\theta}\}_{\theta\in\C}$ as a representation of $\C$ defined in $(\ref{mean shift operator})$. Hence, a UMP min-max test for $(\mathrm{H}\textrm{-}6)$ is given as $U'^*_n(I\otimes T')U_{n}$ by using a UMP min-max test $T'$ with level $\alpha$ for 
\begin{eqnarray}
&H_0:r\in\{0\}~vs.~H_1:r\in(0,\infty)
~\hbox{with}~\{\rho_{0,N}^{\otimes (n-1)}\}_{N\in\R_{>0}}.& \nonumber
\end{eqnarray}
Since this hypothesis testing problem is nothing but $(\mathrm{H}\textrm{-}\mathrm{\chi^2})$, 
Theorem \ref{chi2 test} guarantees that 
$T_{\alpha}^{[\mathrm{\chi^2}],n-1}$ is a UMP min-max test for the above testing problem.
Therefore, we get the following proposition.

\begin{prp}\Label{num.min-max}

For the hypothesis $(\mathrm{H}\textrm{-}6)$, the test 
\[
T_{\alpha,N_0}^{[6],n}:=U_n^{\ast}(I\otimes T_{\alpha}^{[\mathrm{\chi^2}],n-1})U_n
\]
is a UMP min-max test with level $\alpha$.

\end{prp}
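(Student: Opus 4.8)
The plan is to treat this proposition as the endpoint of the reduction already set up above: apply Theorem \ref{thm3} with the listed data to collapse $(\mathrm{H}\textrm{-}6)$ onto the fundamental problem $(\mathrm{H}\textrm{-}\chi^2)$ with $n-1$ degrees of freedom, and then quote Theorem \ref{chi2 test}. Concretely, with $\mathcal{S}_1=\{\rho_{0,N}^{\otimes(n-1)}\}$, $\mathcal{S}_2=\phi$, $\mathcal{S}_3=\{\rho_{\sqrt n\theta,N}\}$, $V=W$ and $U=U_n$, the transformed family in $(\ref{thm3-2})$ is $\{U_n^*(\rho_{0,N}^{\otimes(n-1)}\otimes\rho_{\sqrt n\theta,N})U_n\}$, which by the defining relation $(\ref{concentrating})$ of the concentrating operator is exactly $\{\rho_{\theta,N}^{\otimes n}\}$, i.e. the state family of $(\mathrm{H}\textrm{-}6)$. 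Hence Theorem \ref{thm3} identifies $U_n^*(I\otimes T')U_n$ as a UMP min-max test of level $\alpha$ for $(\mathrm{H}\textrm{-}6)$ precisely when $T'$ is one for the $(n-1)$-mode problem $H_0:N\in(0,N_0]$ vs. $H_1:N\in(N_0,\infty)$ with $\{\rho_{0,N}^{\otimes(n-1)}\}$.

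Next I would discharge the hypotheses of Theorem \ref{thm3}. The group-theoretic inputs are immediate: the mean-shift representation $\{W_\theta\}_{\theta\in\C}$ is irreducible on $L^2(\R)$ (recorded just after Lemma \ref{inv1}), and $\C\cong\R^2$ is amenable (noted after Proposition \ref{AIPM}). Covariance on the disturbance factor $\mathcal{H}_3$ follows from $W_{\theta'}\rho_{\theta,N}W_{\theta'}^*=\rho_{\theta+\theta',N}$, under which $W$ acts transitively on the mean disturbance $\sqrt n\theta$. The remaining input, used because $\C$ is non-compact, is completeness, and this is where the real work lies. Granting it for the moment, the reduced fundamental problem is nothing but $(\mathrm{H}\textrm{-}\chi^2)$ with $n-1$ degrees of freedom; its disturbance parameter space is trivial, so a UMP test and a UMP min-max test coincide, and Theorem \ref{chi2 test} supplies $T'=T_{\alpha}^{[\chi^2],n-1}$. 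Substituting gives $T_{\alpha,N_0}^{[6],n}=U_n^*(I\otimes T_{\alpha}^{[\chi^2],n-1})U_n$, as claimed.

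The main obstacle I anticipate is exactly the completeness hypothesis. Theorem \ref{thm3} asks that the family $\{\rho_{0,N}^{\otimes(n-1)}\otimes\rho_{\sqrt n\theta,N}\}_{\theta\in\C,N>0}$ on $\mathcal{H}_1\otimes\mathcal{H}_3$ be total, and this is delicate here: the $n-1$ ancillary factors $\rho_{0,N}^{\otimes(n-1)}$ are diagonal in the number basis and move only with $N$, so a partial-trace argument (fix $N$, sweep $\theta$ using completeness of $\{\rho_{\sqrt n\theta,N}\}_\theta$ on $\mathcal{H}_3$, then separate in $N$ by linear independence of the geometric weights) only controls the number-diagonal blocks of an operator on $\mathcal{H}_1$; an off-diagonal block such as $A\otimes|0\rangle\langle 1|\otimes I^{\otimes(n-2)}$ is annihilated by every state of the family, since $\langle 1|\rho_{0,N}|0\rangle=0$. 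I would therefore focus the argument on this point: either verify that the Hunt-Stein reduction invoked inside Lemma \ref{lem3} in fact needs only the totality of $\{\rho_{\sqrt n\theta,N}\}_\theta$ on the single factor $\mathcal{H}_3$ where $W$ acts irreducibly, so that the diagonal ancillae are harmless, or supply a direct averaging argument over the transitive mean-shift action that bypasses full completeness. Once this reduction to $W$-invariant, hence by Schur's lemma product-form, tests is secured, the remainder is the mechanical assembly of the previous paragraph.
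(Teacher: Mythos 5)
Your main line is exactly the paper's own proof: the paper disposes of Proposition \ref{num.min-max} by applying Theorem \ref{thm3} with precisely the data you list ($\mathcal{S}_1=\{\rho_{0,N}^{\otimes(n-1)}\}$, $\mathcal{S}_2=\phi$, $\mathcal{S}_3=\{\rho_{\sqrt{n}\theta,N}\}$, $\Xi_3=G:=\C$, $V:=W$, $U:=U_n$), reducing $(\mathrm{H}\textrm{-}6)$ via $(\ref{concentrating})$ to $(\mathrm{H}\textrm{-}\mathrm{\chi^2})$ with $n-1$ degrees of freedom and then quoting Theorem \ref{chi2 test}; no further argument appears there (the paper even misstates the reduced hypotheses as a test of $r$, which you correctly render as $N\in(0,N_0]$ vs. $N\in(N_0,\infty)$).

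Where you go beyond the paper is the completeness scrutiny, and your worry is legitimate: the family $\{\rho_{0,N}^{\otimes(n-1)}\otimes\rho_{\sqrt{n}\theta,N}\}_{\theta\in\C,N>0}$ is \emph{not} complete in the sense of Section 3 --- your off-diagonal counterexample works, and in fact these states determine even less than the number-diagonal blocks, since $\Tr\bigl(\rho_{0,N}^{\otimes(n-1)}X\bigr)$ depends only on sums of diagonal entries over fixed total photon number. So Theorem \ref{thm3} does not literally apply, and the paper passes over this silently (the same issue affects its treatment of $(\mathrm{H}\textrm{-}4)$ and $(\mathrm{H}\textrm{-}8)$). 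Of your two proposed repairs, the second is the one that works, and it can be made concrete: completeness enters the proof of Theorem \ref{H-S} only to upgrade the trace identities $\Tr(\rho_{\theta,\xi}(g\cdot\mathcal{L}(T)))=\Tr(\rho_{\theta,\xi}\mathcal{L}(T))$ to the operator identity $g\cdot\mathcal{L}(T)=\mathcal{L}(T)$. For $G=\C\cong\R^2$ one may take $\nu_n$ to be normalized Lebesgue measure on the ball of radius $n$; then $\|g_{\ast}\nu_n-\nu_n\|$ tends to $0$ in total variation for each fixed $g$, whence
\[
\bigl|\Tr X(g\cdot T_n)-\Tr X T_n\bigr|\le \|X\|_1\,\|g_{\ast}\nu_n-\nu_n\|\longrightarrow 0
\]
uniformly over trace-class $X$ of bounded trace norm, so the weak-$\ast$ limit $\mathcal{L}(T)$ is genuinely invariant under $\{I^{\otimes(n-1)}\otimes W_\xi\}$ with no completeness assumption at all; Schur's lemma (Lemma \ref{inv1}) then yields the product form $T''\otimes I$ required inside Lemma \ref{lem3}. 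With that patch, your assembly --- trivial disturbance space in the reduced problem, so UMP and UMP min-max coincide, and Theorem \ref{chi2 test} supplies $T'=T_{\alpha}^{[\mathrm{\chi^2}],n-1}$ --- is exactly the paper's conclusion. In short: same route as the paper, plus a correct diagnosis of a hypothesis the paper never verifies, for which your ``direct averaging'' alternative is the right fix.
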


\section{Hypothesis testing of the number parameters for two kinds of quantum Gaussian states}

We consider the hypothesis testing about the consistency of the number parameters for two kinds of quantum Gaussian states.

\subsection{The case that the mean parameters are known: $F$ test}

~

In this subsection, we treat the hypothesis testing problem
\[
H_0:M=N~vs.~H_0:M\ne N~\hbox{with}~\{\rho_{\theta,M}^{\otimes m}\otimes \rho_{\eta,N}^{\otimes n}\}_{M,N\in\R_{> 0}} \eqno{(\mathrm{H}\textrm{-}7)}
\]
for $1\le n\in\N$ when the mean parameters $\theta,\eta$ is fixed. That is, we suppose that the mean parameters $\theta,\eta$ are known. Note that an optimal test for the classical analogue of $(\mathrm{H}\textrm{-}7)$ is given by an $F$ test.

We apply Lemma $\ref{lem1}$ to $(\mathrm{H}\textrm{-}7)$ in the following way:
\[
\mathcal{S}:=\{\rho_{0,M}^{\otimes m}\otimes \rho_{0,N}^{\otimes n}\}_{M,N\in\R_{\ge0}},~\Theta:=\R_{\ge0}^{2},~\Xi_1:=\phi,~U:=W_{-\theta}^{\otimes m}\otimes W_{-\eta}^{\otimes n},
\]
where $\phi$ is the empty set and $W_{\theta}$ is the mean shift operator defined in $(\ref{mean shift operator})$. Hence, a UMP unbiased test for $(\mathrm{H}\textrm{-}7)$ is given as $(W_{-\theta}^{\otimes m}\otimes W_{-\eta}^{\otimes n})^*T'(W_{-\theta}^{\otimes m}\otimes W_{-\eta}^{\otimes n})$ by using a UMP unbiased test $T'$ with level $\alpha$ for 
\[
H_0:M=N~vs.~H_0:M\ne N~\hbox{with}~\{\rho_{0,M}^{\otimes m}\otimes \rho_{0,N}^{\otimes n}\}_{M,N\in\R_{> 0}}. \eqno{(\mathrm{H}\textrm{-}7)}
\]
Since this hypothesis testing problem is nothing but $(\mathrm{H}\textrm{-}\mathrm{F})$, 
Theorem \ref{F test} guarantees that 
$T_{\alpha}^{[\mathrm{F}],m,n}$ is a UMP unbiased test for the above testing problem.
Therefore, we get the following proposition.

\begin{prp}\Label{2numUMP}
For the hypothesis testing problem $(\mathrm{H}\textrm{-}7)$, the test 
\[
T_{\alpha,\theta,\eta}^{[7],m,n}:=(W_{-\theta}^{\otimes m}\otimes W_{-\eta}^{\otimes n})^{\ast} T_{\alpha}^{[\mathrm{F}],m,n} (W_{-\theta}^{\otimes m}\otimes W_{-\eta}^{\otimes n})
\]
 is a UMP unbiased test with level $\alpha$.
\end{prp}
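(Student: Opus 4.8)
The plan is to reduce $(\mathrm{H}\textrm{-}7)$ to the already-solved commutative problem $(\mathrm{H}\textrm{-}\mathrm{F})$ by the parameter-independent unitary conjugation set up in the preceding paragraph, and then to transport the optimality of the quantum $F$ test established in Theorem \ref{F test} back to the original problem via Lemma \ref{lem1}. First I would record the intertwining relation. Writing $U:=W_{-\theta}^{\otimes m}\otimes W_{-\eta}^{\otimes n}$ and using the mean shift identity $W_{\theta'}\rho_{\theta,N}W_{\theta'}^{\ast}=\rho_{\theta+\theta',N}$ attached to $(\ref{mean shift operator})$, one gets $W_{-\theta}\rho_{\theta,M}W_{-\theta}^{\ast}=\rho_{0,M}$ and $W_{-\eta}\rho_{\eta,N}W_{-\eta}^{\ast}=\rho_{0,N}$, so that
\[
U\left(\rho_{\theta,M}^{\otimes m}\otimes\rho_{\eta,N}^{\otimes n}\right)U^{\ast}=\rho_{0,M}^{\otimes m}\otimes\rho_{0,N}^{\otimes n}
\]
for every $M,N\in\R_{>0}$. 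Equivalently $\rho_{\theta,M}^{\otimes m}\otimes\rho_{\eta,N}^{\otimes n}=U^{\ast}(\rho_{0,M}^{\otimes m}\otimes\rho_{0,N}^{\otimes n})U$, which is exactly the relation between the two families in Lemma \ref{lem1} with empty disturbance space $\Xi_1=\phi$.

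The one point that genuinely needs care — and the only place an argument of this shape could fail — is that $U$ depends solely on the fixed, known means $\theta,\eta$ and not on the parameters $M,N$ that define the hypotheses. Because of this, conjugation by $U$ carries the whole family parametrically in $(M,N)$ and leaves the null set $\{M=N\}$ and the alternative $\{M\ne N\}$ invariant, while by cyclicity of the trace it preserves the level constraint $\mathrm{Tr}(\rho T)\le\alpha$, the unbiasedness constraint, and the power $\mathrm{Tr}(\rho T)$ simultaneously. This is precisely the content of Lemma \ref{lem1} in its unbiased form, so I would invoke it to conclude that a test $T'$ is UMP unbiased of level $\alpha$ for the reduced problem
\[
H_0:M=N~vs.~H_1:M\ne N~\hbox{with}~\{\rho_{0,M}^{\otimes m}\otimes\rho_{0,N}^{\otimes n}\}_{M,N\in\R_{>0}}
\]
if and only if $U^{\ast}T'U$ is UMP unbiased of level $\alpha$ for $(\mathrm{H}\textrm{-}7)$.

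Finally I would observe that the reduced problem is verbatim $(\mathrm{H}\textrm{-}\mathrm{F})$, for which Theorem \ref{F test} already supplies the quantum $F$ test $T_{\alpha}^{[\mathrm{F}],m,n}$ as a UMP unbiased test of level $\alpha$. Taking $T'=T_{\alpha}^{[\mathrm{F}],m,n}$ in the equivalence above then gives that $U^{\ast}T_{\alpha}^{[\mathrm{F}],m,n}U=T_{\alpha,\theta,\eta}^{[7],m,n}$ is a UMP unbiased test of level $\alpha$ for $(\mathrm{H}\textrm{-}7)$, as claimed. I expect no substantive obstacle at this stage: all the hard work — the explicit construction of $T_{\alpha}^{[\mathrm{F}],m,n}$ and the proof of its optimality through the two-parameter exponential family of Theorem \ref{cla2} — lives upstream in Theorem \ref{F test}, and the present proposition is a direct application of Lemma \ref{lem1} together with that theorem.
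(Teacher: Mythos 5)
Your proposal is correct and coincides with the paper's own derivation: both apply Lemma \ref{lem1} with $U=W_{-\theta}^{\otimes m}\otimes W_{-\eta}^{\otimes n}$ and $\Xi_1=\phi$ to identify $(\mathrm{H}\textrm{-}7)$ with $(\mathrm{H}\textrm{-}\mathrm{F})$, and then transport the UMP unbiased quantum $F$ test of Theorem \ref{F test} back by conjugation. Your explicit verification of the intertwining relation and the remark that $U$ is independent of $(M,N)$ simply make explicit what the paper leaves implicit in its application of Lemma \ref{lem1}.
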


\subsection{The case that the mean parameters are unknown: $F$ test}

~

In this subsection, we treat the hypothesis testing problem
\[
H_0:M=N~vs.~H_0:M\ne N~\hbox{with}~
\{\rho_{\theta,M}^{\otimes m}\otimes \rho_{\eta,N}^{\otimes n}\}_{\theta,\eta\in\C,M,N\in\R_{> 0}} \eqno{(\mathrm{H}\textrm{-}8)}
\]
for $2\le n\in\N$. 

We apply Theorem $\ref{thm3}$ to $(\mathrm{H}\textrm{-}8)$ in the following way:
\begin{eqnarray}
&\mathcal{S}_1:=\{\rho_{0,M}^{\otimes (m-1)}\otimes\rho_{0,N}^{\otimes (n-1)}\}_{M,N\in\R_{\ge 0}}, \mathcal{S}_2:=\phi, \mathcal{S}_3:=\{\rho_{\sqrt{m}\theta,M}\otimes\rho_{\sqrt{n}\eta,N}\}_{\theta,\eta\in\C,M,N\in\R_{> 0}},&\nonumber\\
&\Theta:=\R_{\ge0}^2,~\Xi_1:=\phi,~\Xi_3=G:=\C^2,~V:=W^{\otimes 2},~U:=U''_{m,n},~&\nonumber
\end{eqnarray}
where $\phi$ is the empty set, $U''_{m,n}$ is the unitary operator satisfying $(\ref{U3})$, and $W^{\otimes 2}$ is the mean shift operators $\{W_{\theta}\otimes W_{\eta}\}_{\theta,\eta\in\C}$ as a representation of $\C^2$ defined in $(\ref{mean shift operator})$. Hence, a UMP unbiased min-max test for $(\mathrm{H}\textrm{-}8)$ is given as $U''^{*}_{m,n}(I^{\otimes 2}\otimes T')U''_{m,n}$ by using a UMP unbiased min-max test $T'$ with level $\alpha$ for 
\[
H_0:M=N~vs.~H_0:M\ne N~\hbox{with}~
\{\rho_{0,M}^{\otimes (m-1)}\otimes \rho_{0,N}^{\otimes (n-1)}\}_{\theta,\eta\in\C,M,N\in\R_{> 0}}. 
\]
Since this hypothesis testing problem is nothing but $(\mathrm{H}\textrm{-}\mathrm{F})$, 
Theorem \ref{F test} guarantees that 
$T_{\alpha}^{[\mathrm{F}],m-1,n-1}$ is a UMP unbiased min-max test for the above testing problem.
Therefore, we get the following proposition.

\begin{prp}\Label{2numUMP2}
For the hypothesis testing problem $(\mathrm{H}\textrm{-}8)$, the test
\[
T_{\alpha}^{[8],m,n}={U''}_{m,n}(I^{\otimes 2}\otimes T_{\alpha}^{[\mathrm{F}],m-1,n-1}){U''}_{m,n}^{\ast}
\]
 is a UMP unbiased min-max test with level $\alpha$.
\end{prp}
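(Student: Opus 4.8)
The plan is to recognize $(\mathrm{H}\textrm{-}8)$ as a direct instance of the general reduction theorem \ref{thm3}, and then to invoke the optimality of the quantum $F$ test already established in Theorem \ref{F test}. First I would apply the unitary $U''_{m,n}$ of $(\ref{U3})$ to concentrate, within each of the two groups of identically prepared Gaussian states, all of the mean information into a single mode. After this transformation the state $\rho_{\theta,M}^{\otimes m}\otimes\rho_{\eta,N}^{\otimes n}$ becomes $\rho_{\sqrt m\theta,M}\otimes\rho_{\sqrt n\eta,N}\otimes\rho_{0,M}^{\otimes(m-1)}\otimes\rho_{0,N}^{\otimes(n-1)}$, so that the last $m+n-2$ modes carry only the number parameters $M,N$ (with vanishing mean), while the first two modes absorb the means $\theta,\eta$, which are pure disturbance for the hypothesis $M=N$ versus $M\ne N$. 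This is exactly the decomposition demanded by Theorem \ref{thm3}, with $\mathcal S_1=\{\rho_{0,M}^{\otimes(m-1)}\otimes\rho_{0,N}^{\otimes(n-1)}\}$ on $\mathcal H_1=L^2(\R)^{\otimes(m+n-2)}$, empty $\mathcal S_2$, $\mathcal S_3=\{\rho_{\sqrt m\theta,M}\otimes\rho_{\sqrt n\eta,N}\}$ on $\mathcal H_3=L^2(\R)^{\otimes 2}$, testing parameter $(M,N)\in\R_{\ge0}^2$, disturbance $\xi_3=(\theta,\eta)\in\C^2$, and the group $G=\C^2$ acting by the mean shifts $V=W^{\otimes 2}$.

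Next I would verify the three hypotheses that Theorem \ref{thm3} imposes on the pair $(G,V)$ acting on $\mathcal H_3$. Covariance in the sense of $(\ref{cov})$ is immediate from $W_{\zeta}\rho_{\chi,L}W_{\zeta}^*=\rho_{\chi+\zeta,L}$: the element $(\zeta_1,\zeta_2)\in\C^2$ sends $\rho_{\sqrt m\theta,M}\otimes\rho_{\sqrt n\eta,N}$ to $\rho_{\sqrt m\theta+\zeta_1,M}\otimes\rho_{\sqrt n\eta+\zeta_2,N}$, leaving $(M,N)$ fixed and acting transitively on the mean plane $\C^2=\Xi_3$. Irreducibility of $W^{\otimes 2}$ as a representation of $\C^2$ on $L^2(\R)^{\otimes 2}$ follows from Lemma \ref{inv1} (equivalently Schur's lemma) applied slotwise: any operator commuting with every $W_{\zeta_1}\otimes I$ has the form $I\otimes X'$, and commuting in addition with every $I\otimes W_{\zeta_2}$ forces $X'$ to be scalar. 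Finally $\C^2\cong\R^4$ is abelian, hence amenable, so the non-compact quantum Hunt-Stein theorem is available.

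With these verifications in hand, Theorem \ref{thm3} identifies the UMP unbiased min-max tests of $(\mathrm{H}\textrm{-}8)$ with those of the reduced problem carried by $\mathcal H_1$, namely $H_0:M=N$ versus $H_1:M\ne N$ with $\{\rho_{0,M}^{\otimes(m-1)}\otimes\rho_{0,N}^{\otimes(n-1)}\}_{M,N>0}$. This is precisely the problem $(\mathrm{H}\textrm{-}\mathrm{F})$ with $(m-1,n-1)$ degrees of freedom (which is why we need $m,n\ge 2$), for which Theorem \ref{F test} furnishes the UMP unbiased test $T_\alpha^{[\mathrm F],m-1,n-1}$. Pulling this test back through $U''_{m,n}$ while leaving the two concentrated modes untouched then yields the claimed test $T_\alpha^{[8],m,n}$ as a UMP unbiased min-max test of level $\alpha$.

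The step I expect to be the real obstacle is the completeness requirement that Theorem \ref{thm3} imposes in the non-compact case, i.e. completeness of the combined family $\{\rho_{0,M}^{\otimes(m-1)}\otimes\rho_{0,N}^{\otimes(n-1)}\otimes\rho_{\sqrt m\theta,M}\otimes\rho_{\sqrt n\eta,N}\}$. Here one must be careful, because the $\mathcal H_1$ modes carry only the thermal states $\rho_{0,M},\rho_{0,N}$, which are diagonal in the number basis and on their own detect only the diagonal of an observable; an off-diagonal operator of the form $|a\rangle\langle b|\otimes I$ on $\mathcal H_1\otimes\mathcal H_3$ is annihilated by every member of the family, so the completeness cannot come from the $\mathcal H_1$ factor alone. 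Whatever completeness is genuinely needed must therefore be extracted from the two mean-carrying modes in $\mathcal H_3$, where $\{\rho_{\sqrt m\theta,M}\otimes\rho_{\sqrt n\eta,N}\}_{\theta,\eta\in\C}$ is complete for each fixed $(M,N)$ by the stated completeness of the Gaussian family $\{\rho_{\chi,L}\}_{\chi\in\C}$. Confirming that this is enough to legitimize the non-compact Hunt-Stein symmetrization — so that every competing test may indeed be reduced to the invariant form $T''\otimes I$ on $\mathcal H_3$ — is the delicate point I would scrutinize most closely before declaring the reduction complete.
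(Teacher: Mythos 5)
Your proposal reproduces the paper's own derivation step for step: the same instantiation of Theorem \ref{thm3} with $\mathcal{S}_1=\{\rho_{0,M}^{\otimes(m-1)}\otimes\rho_{0,N}^{\otimes(n-1)}\}$, $\mathcal{S}_2=\phi$, $\mathcal{S}_3=\{\rho_{\sqrt{m}\theta,M}\otimes\rho_{\sqrt{n}\eta,N}\}$, $\Theta=\R_{\ge0}^2$, $\Xi_3=G=\C^2$, $V=W^{\otimes 2}$, $U=U''_{m,n}$, followed by the same identification of the reduced problem with $(\mathrm{H}\textrm{-}\mathrm{F})$ in $(m-1,n-1)$ degrees of freedom and the same appeal to Theorem \ref{F test}; your explicit checks of covariance, of irreducibility of $W^{\otimes2}$ via Lemma \ref{inv1} applied slotwise, and of amenability of $\C^2$ are correct and are left implicit in the paper.

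The completeness worry in your last paragraph is genuine, and it is a gap in the paper's own proof rather than in your write-up: the paper invokes Theorem \ref{thm3} without verifying its non-compact hypothesis, and your counterexample shows that this hypothesis literally fails. Indeed, for number states $|a\rangle\ne|b\rangle$ on the zero-mean modes, the operator $X=|a\rangle\langle b|\otimes I_{\mathcal{H}_3}$ satisfies $\mathrm{Tr}(\rho X)=\langle b|\rho_1|a\rangle\,\mathrm{Tr}(\rho_3)=0$ for every member of the combined family, since the $\mathcal{H}_1$ marginals are diagonal in the number basis for all $M,N$, yet $X\ne 0$. Completeness enters the proof of Theorem \ref{H-S} at exactly one point, namely to upgrade ``$g\cdot\mathcal{L}(T)$ and $\mathcal{L}(T)$ have equal traces on the family'' to the operator identity $g\cdot\mathcal{L}(T)=\mathcal{L}(T)$; without it the averaged test need not be invariant, and the Schur-lemma step producing the form $T''\otimes I$ in Lemma \ref{lem3} is not licensed. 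A plausible repair within the paper's toolkit is to first apply the compact Hunt--Stein theorem (Theorem \ref{cpt.H-S}) with the torus of phase rotations acting trivially on the zero-mean modes (using $S_{e^{it}}\rho_{0,M}S_{e^{it}}^{\ast}=\rho_{0,M}$), so that by Lemma \ref{inv2} the competing test may be assumed block-diagonal in the number basis on $\mathcal{H}_1$; the off-diagonal directions that destroy completeness are thereby removed, and the mean-shift averaging then only requires a completeness statement relative to block-diagonal operators, which is weaker but still needs proof and is nowhere supplied in the paper. So your instinct to scrutinize this point before declaring the reduction complete is exactly right. (A minor remark: the displayed formula in the Proposition conjugates by $U''_{m,n}$ on the left and $U''^{\ast}_{m,n}$ on the right, whereas Lemma \ref{lem1} and the text immediately preceding the Proposition give $U''^{\ast}_{m,n}(I^{\otimes2}\otimes T_{\alpha}^{[\mathrm{F}],m-1,n-1})U''_{m,n}$; your ``pulling back'' convention matches the latter, and the Proposition's statement appears to contain a typo.)
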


\section{Comparison between the heterodyne measurement and the number measurement}

In this section, we compare the efficiency of the number measurement with that of the heterodyne measurement.

The heterodyne measurement $M_H=\{M_H(\xi)\}_{\xi\in\C}$ is defined by
\[
M_H(\xi):=\frac{1}{\pi}|\xi)(\xi|~~~(\xi\in\C)
\]
where $|\xi)(\xi|$ is a coherent state. This is a measurement widely used in an optical system, and a UMVUE for the mean parameter of quantum Gaussian states $\{\rho_{\theta,N}\}_{\theta\in\C}$ with the known number parameter $N$ \cite{Y-L}, that is, $M_H$ is optimal in the sense of quantum estimation.

When a system with a quantum Gaussian state $\rho_{\theta,N}$ is measured by the heterodyne measurement, its measured value is distributed according to two-dimensional Gaussian distribution with the mean parameter $(\re \theta, \im \theta)$ and the covariance matrix $\frac{N+1}{2}I_2$ where $I_2$ is the $2\times 2$ identity matrix.

\subsection{Comparison in hypothesis testing of the mean parameter}

~

We compare the type II error probabilities of tests based on the heterodyne measurement and the number measurement in the hypothesis testing problem
\[
H_0:|\theta|\in[0,R_0]~\text{vs.}~H_1:|\theta|\in(R_0,\infty)
\hbox{ with } 
\{\rho_{\theta,N}^{\otimes n}\}_{\theta\in\C}
 \eqno{(\mathrm{H}\textrm{-}1)}
\]
 when $R_0=0$, level $\alpha=0.1, n=1, N=\frac{1}{9}$.

Since the heterodyne measurement is a UMVUE, the decision based on the heterodyne measurement is expected to have good efficiency. 
The normal line of Fig \ref{f1} shows the type II error probability of the classical optimal test for the measured value obtained from the heterodyne measurement. 
On the other hand, Proposition \ref{mean.min-max} says that the optimal test in min-max criterion is based on the number measurement, 
and the thick line of Fig \ref{f1} shows the type II error probability of UMP min-max test in Proposition \ref{mean.min-max}.
This comparison shows that 
our optimal test much improves the combination of the optimal measurement for estimation and the classical optimal test
in the test $(\mathrm{H}\textrm{-}1)$.

\begin{figure}[htbp]
\begin{center}
\scalebox{1.0}{\includegraphics[scale=1]{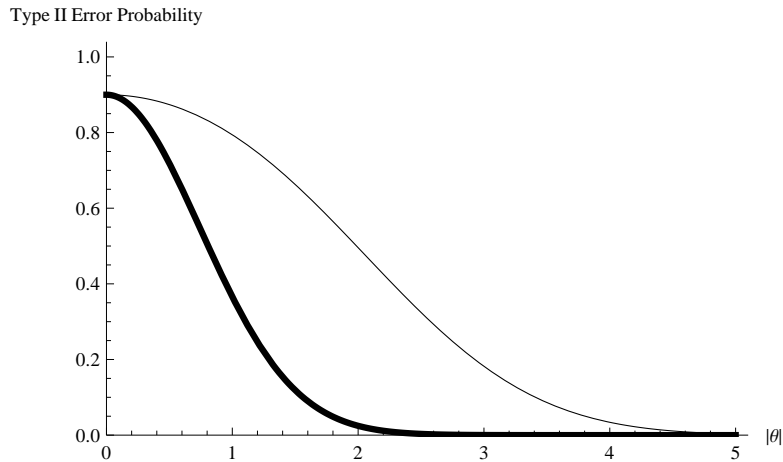}}
\end{center}
\caption{
The thick and normal lines show the type II error probabilities of the test based on the number measurement and the heterodyne measurement respectively when  $R_0=0$, $n=1, N=\frac{1}{9}$ and level $\alpha=0.1$ in Proposition $\ref{mean.min-max}$.}
\Label{f1}
\end{figure}%

~

\subsection{Comparison in hypothesis testing of the number parameter}

~

We compare the type II error probabilities of tests based on the heterodyne measurement and the number measurement in the hypothesis testing problem
\[
H_0:N\in[0,N_0]~vs.~H_1:N\in (N_0,\infty)~\hbox{with}~\{\rho_{\theta,N}^{\otimes n}\}_{N\in\R_{>0}} \eqno{(\mathrm{H}\textrm{-}5)}
\]
 when  $N_0=\frac{1}{9}, \alpha=0.1, n=1, \theta=0$.

Proposition \ref{numUMP} says that the optimal test is based on the number measurement.
Therefore, the test based on the heterodyne measurement have a larger type II error probability than the UMP test in Proposition \ref{numUMP}. 
The difference is showed in Fig \ref{f2}. 
The thick line of Fig \ref{f2} shows the type II error probability of UMP test in Proposition \ref{numUMP} and 
the normal line of Fig \ref{f2} shows the type II error probability of the classical optimal test for the measured value obtained from the heterodyne measurement. 
This comparison shows that 
our optimal test much improves the combination of the heterodyne measurement and the classical optimal test
in the test. $(\mathrm{H}\textrm{-}5)$.

Both comparisons indicate the importance of the number measurement in quantum hypothesis testing.

\begin{figure}[htbp]
\begin{center}
\scalebox{1.0}{\includegraphics[scale=1]{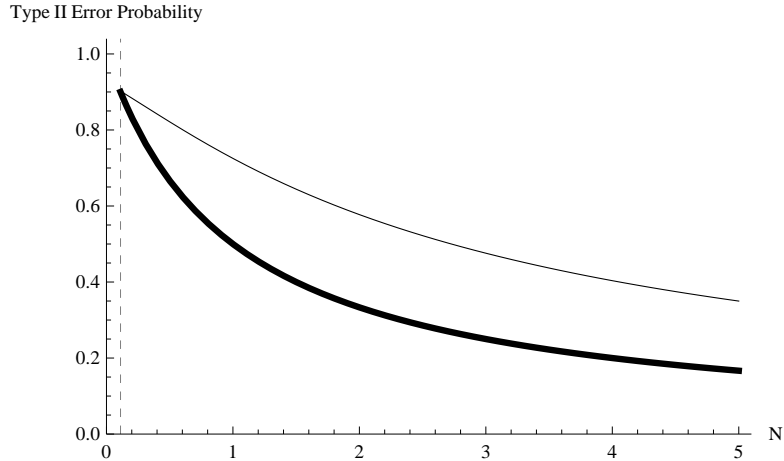}}
\end{center}
\caption{
The thick and normal lines show the type II error probabilities of the test based on the number measurement and the heterodyne measurement respectively when  $N_0=\frac{1}{9}$, $n=1, \theta=0$ and level $\alpha=0.1$ in Proposition $\ref{numUMP}$. The dashed line shows $N=N_0=\frac{1}{9}$.}
\Label{f2}
\end{figure}%

\section{Relation to quantum estimation for quantum Gaussian states}

~

Since the composite hypothesis testing treats a parametric state family with a disturbance parameter,
this problem is related to 
state estimation for a parametric state family with a disturbance parameter.
That is, 
we can expect that the measurement for an optimal test gives the optimal estimation based on this relation.
In order to treat this relation, we focus on a general family $\{\rho_{\theta,\eta}\}_{\theta\in\Theta,\xi\in \Xi}$ of quantum states on a quantum system $\mathcal{H}$,
in which, $\theta\in \Theta\subset\R$ is the parameter to be estimated and 
$\xi\in \Xi$ is the disturbance parameter. 
When a POVM $M$ takes values in $\R$ and satisfies 
\[
\int_{\R}x\mathrm{Tr}\rho_{\theta,\xi} M(dx)=\theta
\]
for all $\theta\in \Theta$ and $\xi\in \Xi$, it is called an {\it unbiased estimator}.

The error of an unbiased estimator $M$ is measured by the MSE (mean squared error) $V_{\theta,\xi}[M]:=\int_{\R^l}(x-\theta)^2\mathrm{Tr}\rho_{\theta,\xi}M(dx)$. 
The unbiased estimator $M$ is called \textbf{UMVUE$($Uniformly Minimum Variance Unbiased Estimator$)$}
when its MSE is smaller than the MSEs of other unbiased estimators,
i.e., 
any unbiased estimator $M'$ satisfies
\[
V_{\theta,\xi}[M]\le V_{\theta,\xi}[M']
\]
for all $\theta\in \Theta$ and $\xi\in \Xi$.

\begin{thm}\Label{UMVUE}
When the number parameter $N$ is to be estimated and the mean parameter $\theta$ is the disturbance parameter 
in the quantum Gaussian state family $\{\rho_{\theta,N}^{\otimes n}\}_{\theta\in\C,N\in\R_{> 0}}~(n\ge 2)$,
the POVM $M^{n-1}_{num}=\left\{M\left(\frac{k}{n-1}\right)\right\}_{k\in\Z_{\ge 0}}$ defined as follows is a UMVUE.
\[
M\left(\frac{k}{n-1}\right) :=\displaystyle\sum_{k_1+\cdot\cdot\cdot+k_{n-1}=k}U_n^{\ast}(I\otimes M_N(k_1)\otimes\cdot\cdot\cdot \otimes M_N(k_{n-1}))U_n.
\]

\end{thm}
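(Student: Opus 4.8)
The plan is to split the statement into a performance computation for $M^{n-1}_{num}$ and a global optimality argument, treating the unknown mean parameter $\theta$ as a nuisance to be removed by the same symmetry reduction used throughout the paper.

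First I would compute the risk of $M^{n-1}_{num}$. Conjugating by the concentrating operator and using $(\ref{concentrating})$, measuring $M^{n-1}_{num}$ on $\rho_{\theta,N}^{\otimes n}$ is identical to measuring the total photon number of the last $n-1$ modes of $\rho_{\sqrt n\theta,N}\otimes\rho_{0,N}^{\otimes(n-1)}$. Since those modes are in the state $\rho_{0,N}^{\otimes(n-1)}$, which is independent of $\theta$, Lemma \ref{prob} at $\theta=0$ shows each count is geometric with mean $N$, so the reported value $\tfrac{k}{n-1}$ is the empirical mean of $n-1$ i.i.d. geometric variables. Hence $\int \tfrac{k}{n-1}\,\mathrm{Tr}\bigl(\rho_{\theta,N}^{\otimes n}M^{n-1}_{num}(\tfrac{k}{n-1})\bigr)=N$ for all $\theta,N$, giving unbiasedness, and by Lemma \ref{moment} (variance $N(N+1)$ at $\theta=0$) the MSE is $V_{\theta,N}[M^{n-1}_{num}]=\tfrac{N(N+1)}{n-1}$, independent of $\theta$.

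For optimality I would exploit the mean-shift covariance $W_{\theta'}\rho_{\theta,N}W_{\theta'}^{\ast}=\rho_{\theta+\theta',N}$, under which the estimand $N$ and the quadratic loss $(x-N)^2$ are both invariant, and the acting group $\C\cong\R^2$ is amenable (Proposition \ref{AIPM}). The aim is to reduce the optimization to estimators invariant under $\{W_{\theta'}^{\otimes n}\}_{\theta'\in\C}$. After conjugation by $U_n$ this group becomes $\{W_{\sqrt n\theta'}\otimes I^{\otimes(n-1)}\}$, so by Lemma \ref{inv1} every invariant POVM element has the form $I\otimes T'$, i.e. discards the first mode and acts only on the last $n-1$ modes carrying $\rho_{0,N}^{\otimes(n-1)}$. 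On these modes the family is commutative and diagonal in the number basis; it is precisely the $(n-1)$-fold geometric one-parameter exponential family already treated in Theorem \ref{chi2 test}, whose complete sufficient statistic is $\sum_{j}k_j$. Since $\tfrac{k}{n-1}$ is an unbiased function of this complete sufficient statistic, the classical Lehmann–Scheff\'e theorem identifies it as the unique UMVUE of $N$ in the reduced model, with variance $\tfrac{N(N+1)}{n-1}$.

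The main obstacle is the estimation analogue of the Hunt–Stein reduction: justifying that passing to $W$-invariant estimators does not increase the variance \emph{pointwise}. Averaging an arbitrary unbiased $M'$ over the non-compact amenable group by an asymptotic mean (Proposition \ref{AIPM}, in the spirit of Theorem \ref{H-S}) yields an invariant unbiased estimator, but because the MSE is linear in the POVM this averaging controls only the group-averaged risk and so directly delivers minimaxity rather than the stronger pointwise statement. Upgrading to genuine UMVUE requires a quantum Rao–Blackwell step, and the plan is to obtain it from the completeness of $\{\rho_{\theta,N}\}_{\theta\in\C}$ combined with the fact that the optimal invariant estimator already has $\theta$-independent risk: completeness forces any two invariant unbiased estimators to coincide, which pins the averaged estimator to $M^{n-1}_{num}$ and transfers the classical Lehmann–Scheff\'e optimality to every $(\theta,N)$. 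Making this completeness-driven uniqueness rigorous in the operator setting is the delicate part of the argument.
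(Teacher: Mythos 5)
Your computation of the risk of $M^{n-1}_{num}$ is correct and matches the paper: after conjugation by $U_n$ the outcome is the empirical mean of $n-1$ i.i.d.\ geometric counts from $\rho_{0,N}^{\otimes(n-1)}$, so the estimator is unbiased with $V_{\theta,N}[M^{n-1}_{num}]=N(N+1)/(n-1)$ for every $\theta$. The optimality half, however, has a genuine gap, and it sits exactly where you flagged it. Hunt--Stein averaging over the amenable group $\C$ is linear in the POVM, so the risk of the averaged estimator at $(\theta,N)$ is a limit of averages of $V_{g^{-1}\cdot\theta,N}[M']$ over $g$; this yields only $V_{N}[\text{averaged }M']\le\sup_{\theta'}V_{\theta',N}[M']$, i.e.\ minimaxity in $\theta$, never the pointwise domination $V_{\theta,N}[M^{n-1}_{num}]\le V_{\theta,N}[M']$ that UMVUE requires. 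Your proposed repair does not close this: it is false that completeness forces all invariant unbiased estimators to coincide. By Lemma \ref{inv1} an invariant estimator has the form $I\otimes T'$ with $T'$ any classical unbiased estimator of $N$ in the $(n-1)$-fold geometric model --- for instance $T'$ reading off only the single count $k_1$ is invariant and unbiased yet differs from $M^{n-1}_{num}$. Completeness of the state family $\{\rho_{\theta,N}\}_{\theta}$ over the mean parameter is a different notion from statistical completeness of the statistic $\sum_j k_j$ in the reduced $N$-model, and the latter gives uniqueness only among estimators that are functions of $\sum_j k_j$. As it stands your argument proves that $M^{n-1}_{num}$ is minimax-optimal among unbiased estimators, not the theorem.

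The paper's proof of Theorem \ref{UMVUE} uses no group averaging at all and obtains the pointwise bound directly. Given any unbiased $M'$, it forms the first-moment operator $X_{M'}=U_n^{\ast}\int_{\Omega}\omega M'(d\omega)U_n$ and invokes the operator inequality $\int\omega^2 M'(d\omega)\ge\bigl(\int\omega M'(d\omega)\bigr)^2$ --- this is the quantum Rao--Blackwell surrogate missing from your plan: it replaces an arbitrary POVM by an observable estimator while decreasing the variance at every single $(\theta,N)$, with no averaging and hence no loss of pointwise control. Unbiasedness at every mean value then forces the structural identity $X_{M'}=I\otimes Y_{M'}$ (this is the paper's counterpart of your completeness step; note it constrains only the first-moment operator rather than the whole POVM, which is all the moment inequality requires), and the SLD Cram\'er--Rao inequality applied to the residual commutative family $\{\rho_{0,N}^{\otimes(n-1)}\}_{N}$ --- whose SLD Fisher information coincides with the classical Fisher information $(n-1)/(N(N+1))$ of the negative binomial model --- gives $V_{\theta,N}[M']\ge N(N+1)/(n-1)$ for all $\theta$ and $N$. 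Since $M^{n-1}_{num}$ attains this bound, it is the UMVUE, and your Lehmann--Scheff\'e step becomes unnecessary because the Cram\'er--Rao bound is already achieved. If you wish to salvage your invariance route, what you need in place of the asymptotic mean is precisely such a pointwise variance-reducing map; the moment-operator inequality supplies it, whereas completeness-driven uniqueness cannot.
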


\begin{proof}
Let $M=\{{M}(\omega)\}_{\omega\in\Omega}$ be an unbiased estimator for the number parameter $N$. 
Then
\[
X_M:=U_n^{\ast}\int_{\Omega}\omega M(d\omega)U_n
\]
satisfies 
\begin{align*}
&N
 =\int_{\Omega}\omega\mathrm{Tr}\rho_{\theta,N}^{\otimes n}M(d\omega)
 =\mathrm{Tr}\rho_{\theta,N}^{\otimes n}\int_{\Omega}\omega M(d\omega)\\
 =&\mathrm{Tr}\rho_{\sqrt{n}\theta,N}\otimes\rho_{0,N}^{\otimes (n-1)} X_M
 =\mathrm{Tr}\rho_{0,N}^{\otimes n}(W_{\sqrt{n}\theta}\otimes I^{\otimes (n-1)})^{\ast}X_M(W_{\sqrt{n}\theta}\otimes I^{\otimes (n-1)})
\end{align*}
for any $\theta \in \C$. Therefore, $X_M$ is represented as the form
\[
X_M=I\otimes Y_M.
\]
Then, the MSE of $M$ satisfies the following inequality.
\begin{align*}
 & (V_{\theta,N}[\mathrm{M}])_N
 =\int_{\Omega}(\omega-N)^2\mathrm{Tr}(\rho_{\theta,N}^{\otimes n}M(d\omega))\\
 =&\mathrm{Tr}(\rho_{\sqrt{n}\theta,N}\otimes\rho_{0,N}^{\otimes (n-1)}) U_n^{\ast}\int_{\Omega}\omega^2M(d\omega)U_n-N^2\\
 \ge & \mathrm{Tr}(\rho_{\sqrt{n}\theta,N}\otimes\rho_{0,N}^{\otimes (n-1)})X_M^2-N^2
 =\mathrm{Tr}\rho_{0,N}^{\otimes (n-1)}Y_M^2-N^2\\
 =&\mathrm{Tr}\rho_{0,N}^{\otimes (n-1)}(Y_M-N)^2
 \ge \Big(J^s_{N,n}\Big)^{-1},
\end{align*}
where $J^s_{N,n}$ is the SLD Fisher information metric for $\mathcal{S}':=\{\rho_{0,N}^{\otimes (n-1)}\}_{N\in\R_{> 0}}$. The last inequality is followed by the quantum Cram\'{e}r-Rao inequality (\cite{Hay5}). Since $\mathcal{S}'$ is the commutative family of quantum states, $\mathcal{S}'$ can be regarded as the family of probability distributions
\[
\mathcal{P}':=\left\{p_{N}(k_1,\cdot\cdot\cdot,k_{n-1}):=\left(\frac{1}{N+1}\right)\left(\frac{N}{N+1}\right)^{k_1+\cdot\cdot\cdot+k_{n-1}}\right\}_{N\in\R_{>0}}
\]
and $J^s_{N,n}$ coincides with the Fisher information metric $J_N$ on $\mathcal{P}'$. 
Therefore, $J^s_{N,n}$ can be calculated as follows.
\begin{align*}
& J^s_{N,n}
 =J_N
 =E_N[(\partial_N \mathrm{log}p_N)^2]
 =E_N[-\partial_N^2 \mathrm{log}p_N]\\
 =&E_N\Big[\left(\frac{1}{N}\right)^2\displaystyle\sum_{j=1}^{n-1}k_j-\left(\frac{1}{N+1}\right)^2\displaystyle\sum_{j=1}^{n-1}(k_j+1)\Big]
 =\frac{n-1}{N(N+1)}.
\end{align*}

The unbiasedness of $M^{n-1}_{num}$ is clear. The MSE $V_{\theta,N} $ of $M^{n-1}_{num}$ is
\begin{align*}
 &V_{\theta,N} 
=\mathrm{Tr}(\rho_{\sqrt{n}\theta,N}\otimes\rho_{0,N}^{\otimes (n-1)})
U_n^{\ast}\displaystyle\sum_{k\in\Z_{\ge 0}}\left(\frac{k}{n-1}\right)^2M\left(\frac{k}{n-1}\right)U_n-N^2\\
=&\mathrm{Tr}\rho_{0,N}^{\otimes (n-1)} \left(\displaystyle\sum_{k_1,\cdot\cdot\cdot,k_{n-1}\in\Z_{\ge
0}}\left(\frac{1}{n-1}\right)^2\left({\displaystyle\sum_{j=1}^{n-1}k_j}\right)^2M_N(k_1)\otimes\cdot\cdot\cdot\otimes
M_N(k_{n-1})\right)\\
&-N^2\\
=&\displaystyle\sum_{k_1,\cdot\cdot\cdot,k_{n-1}\in\Z_{\ge
0}}\left(\frac{1}{n-1}\right)^2\left({\displaystyle\sum_{j=1}^{n-1}k_j}\right)^2\left(\frac{1}{N+1}\right)^{n-1}(\frac{N}{N+1})^{k_1+\cdot\cdot\cdot+k_{n-1}}-N^2\\
=&\frac{1}{n-1}\displaystyle\sum_{k\in\Z_{\ge
0}}k^2\left(\frac{1}{N+1}\right)\left(\frac{N}{N+1}\right)^k
 +\frac{n-2}{n-1}\left(\displaystyle\sum_{k\in\Z_{\ge
0}}k\left(\frac{1}{N+1}\right)\left(\frac{N}{N+1}\right)^k\right)^2\\
 &-N^2\\
 =&(\frac{1}{n-1})^2\left((n-1)(2N^2+N)+(n-1)(n-2)N^2\right)-N^2 \\
 =& \frac{N(N+1)}{n-1}
 =(J^s_{N,n})^{-1}.
\end{align*}
The MSE of $M_{num}^{n-1}$ is the minimum in the unbiased estimators due to the quantum Cramer-Rao inequality, namely $M_{num}^{n-1}$ is the UMVUE.

\end{proof}

Theorem \ref{UMVUE} insists that
the following POVM is essential for inference of the number parameter for the $n$-copy family
$\rho_{0,N}^{\otimes n}$:
\[
M^{n}_{\chi^2}=\left\{M\left(\frac{k}{n}\right)=\displaystyle\sum_{k_1+\cdot\cdot\cdot+k_{n}=k}
M_N(k_1)\otimes\cdot\cdot\cdot \otimes M_N(k_{n})
\right\}_{k\in\Z_{\ge 0}}.
\]
We focus on the random variable 
$\bar{K}_{n,N}$ defined as the outcome of the POVM $M^{n+1}_{\chi^2}$ when the state is $\rho_{0,N}^{\otimes (n+1)}$. 
The random variable $n\bar{K}_{n, N}$ is distributed according to the negative binomial distribution:
\[
NB_{n,N}(k):=\binom{k+n-1}{n-1}\left(\frac{1}{N}\right)^n \left(\frac{N}{N+1}\right)^{k}~~~(k\in\Z_{\ge 0}).
\]
When a random variable $K_{n,N}$ is distributed according to $NB_{n,N}$, 
since the distribution of $\frac{2K_{n,N}}{N}$ plays the same role as the $\chi^2$ distribution in in estimation and testing of the number parameter $N$,
we call the distribution of $\frac{2K_{n,N}}{N}$ \textbf{$N$-$\chi^2$ distribution} with $n$ degrees of freedom and denote the distribution by $\chi^2_{n,N}$.

The distribution $\chi^2_{n,N}$ is a positive probability on the set $\frac{2}{N}\Z_{\ge 0}:=\{\frac{2}{N}l|l\in\Z_{\ge 0}\}$. 
Similar to the case of the $\chi^2$ distribution,
we call the maximum $r\in\frac{2}{N}\Z_{\ge 0}$ satisfying $P(K_{n,N}\ge r)\le \alpha$ the upper $\alpha$ point of $\chi^2_{n,N}$, which is denoted by $\chi^2_{n,N,\alpha}$. The threshold value $K_0$ used in (\ref{testfunction1}) is equal to $\frac{N_0}{2}\chi^2_{n,N,\alpha}$, and hence, the optimal tests for the hypothesis testing problem $(\mathrm{H}\textrm{-}5)$ and $(\mathrm{H}\textrm{-}6)$ are composed by $\chi^2_{n,N,\alpha}$. Note that $N$-$\chi^2$ distribution depends on the number parameter $N$ of the quantum Gaussian states 
although the (classical) $\chi^2$ distribution does not depend on the variance parameter of the Gaussian distributions. 

\begin{prp}

$\chi_{n, N}^2$ converges in distribution to $\chi_{2n}^2$ as $N\to\infty$ where $\chi_{2n}^2$ is the (classical) $\chi^2$ distribution with $2n$ degrees of freedom.

\end{prp}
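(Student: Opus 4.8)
The plan is to establish the convergence in distribution by the standard transform method: compute the characteristic function (or moment generating function) of $\chi^2_{n,N}$, take its limit as $N\to\infty$, recognize the limit as the transform of $\chi^2_{2n}$, and invoke the continuity theorem. The starting observation is that $NB_{n,N}$ is exactly the law of a sum $K_{n,N}=\sum_{j=1}^n X_j$ of $n$ independent geometric variables $X_j$, each distributed according to $P_{0,N}$, i.e. $P(X_j=k)=\frac{1}{N+1}\left(\frac{N}{N+1}\right)^k$. Hence the moment generating function factorizes, and since the geometric MGF is
\[
\mathrm{E}[e^{tX_j}]=\frac{1}{(N+1)-Ne^t}\qquad\left(t<\log\tfrac{N+1}{N}\right),
\]
we obtain $\mathrm{E}[e^{tK_{n,N}}]=\left((N+1)-Ne^{t}\right)^{-n}$.

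Next, recalling that $\chi^2_{n,N}$ is the law of $Y_N:=\frac{2}{N}K_{n,N}$, I would substitute $t=2s/N$ to get $\mathrm{E}[e^{sY_N}]=\left((N+1)-Ne^{2s/N}\right)^{-n}$. Expanding $e^{2s/N}=1+\frac{2s}{N}+O(N^{-2})$ gives $(N+1)-Ne^{2s/N}\to 1-2s$, so that $\mathrm{E}[e^{sY_N}]\to(1-2s)^{-n}$ for every $s<\tfrac12$. This limit is precisely the moment generating function of the $\chi^2_{2n}$ distribution (the gamma distribution of shape $n$ and scale $2$), which already matches at the level of moments, since $Y_N$ has mean $2n$ and variance $4n(N+1)/N\to 4n$, exactly the mean and variance of $\chi^2_{2n}$.

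Finally I would conclude by the continuity theorem. To make the limit-of-transforms argument fully rigorous and to sidestep any worry about the domain of the MGF, I would run the computation with characteristic functions instead, where $\varphi_{Y_N}(u)=\left((N+1)-Ne^{2iu/N}\right)^{-n}$ is defined for all real $u$ and converges pointwise to $(1-2iu)^{-n}=\varphi_{\chi^2_{2n}}(u)$; since the limit is continuous at $u=0$, L\'evy's continuity theorem yields convergence in distribution. The only genuinely delicate point is this last justification—ensuring the limit function is itself a valid characteristic function and that pointwise convergence suffices—while the algebraic limit in the preceding step is routine. An alternative, more probabilistic route worth mentioning is that each scaled geometric $\frac{2}{N}X_j$ converges in distribution to an exponential law of mean $2$ (that is, $\chi^2_2$), as one checks directly from $P\!\left(\frac{X_j}{N}>x\right)=\left(\frac{N}{N+1}\right)^{\lceil Nx\rceil}\to e^{-x}$, and then independence plus continuity of addition gives that the $n$-fold sum converges to the $n$-fold convolution, namely $\chi^2_{2n}$.
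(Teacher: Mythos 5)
Your proof is correct, but it takes a genuinely different route from the paper's. The paper argues in two stages: for $n=1$ it computes the distribution function of $\frac{2K_{1,N}}{N}$ in closed form, $P\left(\frac{2K_{1,N}}{N}\le x\right)=1-\left(\frac{N}{N+1}\right)^{\lfloor Nx/2\rfloor}\to 1-e^{-x/2}$, identifying the limit as $\chi^2_2$; for general $n$ it writes $K_{n,N}$ as a sum of $n$ independent geometric variables and invokes Skorokhod's representation theorem to replace each scaled geometric by an almost surely convergent copy, so that the sum converges almost surely, hence in distribution, to the $n$-fold convolution $\chi^2_{2n}$. You instead compute the characteristic function of the scaled negative binomial in one stroke, $\varphi_{Y_N}(u)=\left((N+1)-Ne^{2iu/N}\right)^{-n}\to(1-2iu)^{-n}$, and conclude by L\'evy's continuity theorem; since the limit is recognized outright as the characteristic function of $\chi^2_{2n}$ (the Gamma law of shape $n$ and scale $2$), the delicate point you flag---that the limit be a valid characteristic function---is automatic, and your switch from MGFs to characteristic functions correctly removes any domain worry (indeed $\left|(N+1)-Ne^{2iu/N}\right|\ge 1$, so your formula is valid for all real $u$, while the MGF computation needs $s<\tfrac12$ and large $N$, as you note). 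What each approach buys: yours handles all $n$ uniformly in a single elementary computation and avoids the comparatively heavy Skorokhod machinery; the paper's exhibits the exponential limit of the building block explicitly and stays purely probabilistic. Your closing alternative (scaled geometric tails $\left(\frac{N}{N+1}\right)^{\lceil Nx\rceil}\to e^{-x}$, then independence plus continuity of addition) is essentially the paper's second stage with joint convergence of independent components in place of Skorokhod, and it too is sound, modulo the irrelevant off-by-one at integer values of $Nx$. One incidental remark: your geometric normalization $\frac{1}{N+1}$ is the correct one; the paper's displayed $NB_{n,N}$ carries a typo $\left(\frac{1}{N}\right)^n$, which affects neither argument.
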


\begin{proof}
In the first step, we prove the proposition when $n=1$. Let $F^{\chi_{2}^2}, F^{\chi_{2,N}^2}$ be the distributions of $\chi_{2}^2$, $\chi_{2, N}^2$ respectively. 
Then the following equation holds.
\[
F^{\chi_{2}^2}_N(x):=P\left(\frac{2K_{1, N}}{N}\le x\right)
 =\displaystyle\sum_{j=1}^{[\frac{Nx}{2}]}\frac{1}{N}\left(\frac{N}{N+1}\right)^{j-1}
 =1-\left(\frac{N}{N+1}\right)^{[\frac{Nx}{2}]}
\]
where [~] is the Gauss symbol. Hence
\[
\lim_{N\to\infty}F^{\chi_{2}^2}_N(x)=1-\lim_{N\to\infty}\frac{1}{{\left(1+\frac{x/2}{Nx/2}\right)^{\frac{Nx}{2}\frac{[Nx/2]}{Nx/2}}}}=1-e^{-\frac{x}{2}}=F^{\chi_{2}^2}(x).
\]

In the next step, we prove the proposition for an arbitrary $n\in\N$. Let $\frac{2k_{N, j}}{N}~(j=1, \cdot\cdot\cdot, n)$ be independent random variables distributed according to $\chi_{2, N}^2$. Then $\frac{2K_{n, N}}{N}$ is distributed according to $\chi_{n, N}^2$ since the distribution of $K_{n, N}:=\displaystyle\sum_{j=1}^n k_{N, j}$ is the negative binomial distribution. Every $\frac{2k_{N, j}}{N}$ can be taken to be independent and converge almost surely to $X_j$ distributed according to $\chi_{2}^2$ due to Skorokhod's representation theorem. Therefore $\chi_{n, N}^2$ converges in probability to $\chi_{2n}^2$ since $\frac{2K_{n, N}}{N}$ converges almost surely to $\displaystyle\sum_{j=1}^n X_j$.

\end{proof}

By the above proposition, $N$-$\chi^2$ distribution includes usual $\chi^2$ distribution as the limit. Since the number parameter $N$ in a quantum Gaussian state $\rho_{\theta,N}$ can be regarded as the number of photon in the photonic system, the limit $N\to\infty$ means the classical limit and a classical situation appears in the limit.

\section{Conclusion}
We have treated several composite quantum hypothesis testing problem with disturbance parameters in the quantum Gaussian system.
For this purpose, 
we have derived optimal tests in four fundamental quantum hypothesis testing problems
$(\mathrm{H}\textrm{-}\mathrm{A})$, $(\mathrm{H}\textrm{-}\mathrm{t})$, $(\mathrm{H}\textrm{-}\mathrm{\chi^2})$,
and $(\mathrm{H}\textrm{-}\mathrm{F})$, which are given as 
Theorems \ref{thmH-A}, \ref{t-test}, \ref{chi2 test}, and \ref{F test}, respectively.
We have also established a general theorem reducing complicated problems to fundamental problems (Theorem \ref{thm3}).
In the above both steps, group symmetry plays important roles, in which quantum Hunt-Stein Theorem is applied.
Combining both steps, we have derived optimal tests 
(UMP tests, UMP min-max tests, UMP unbiased tests, or UMP unbiased min-max tests)
for respective hypothesis testing problems on quantum Gaussian states. 
Since the quantum Gaussian state in quantum system corresponds to the Gaussian distribution in classical system, 
our testing problems for quantum Gaussian states play the same role as the testing problems for Gaussian distributions in the classical hypothesis testing. 

One may think that the above strategy is different from the processes deriving $t$, $\chi^2$ and $F$ tests for classical Gaussian distribution families.
A test is called $t$, $\chi^2$ and $F$ test in the classical case when its rejection region is determined by a statistics obeying $t$, $\chi^2$ and $F$ distribution.
In this case, the statistics can be calculated from plural statistics obeying the Gaussian distribution.
This calculation can be divided into the two parts:
The first part is an orthogonal linear transformation and ignoring information-less part, but depends on the problem.
The second part is applying non-linear transformations, but does not depend on the problem.
That is, the second part is common among all $t$, $\chi^2$ and $F$ tests, respectively.
The former of the quantum setting corresponds to the second part of the classical case, and 
the latter of the quantum setting corresponds to the first part of the classical case.
Hence, our quantum $t$, $\chi^2$ and $F$ tests can be regarded as suitable quantum versions of $t$, $\chi^2$ and $F$ tests.

However, simple application of the above two processes did not yield our optimal solution in the case of quantum $\chi^2$ and $F$ tests.
In these cases, after applying both processes, we have employed known facts for testing in classical exponential families.
This kind of application of classical result is the final important step in our derivation.

From the above characterization, we can find that it is important to remove 
several disturbance parameters and simplify our problem by the following two methods for quantum hypothesis testing.
The first method is applying quantum Hunt-Stein theorem for quantum hypothesis testing with group symmetry.
The second method is reducing complicated testing problems to fundamental testing problems, 
which contains application of quantum Hunt-Stein Theorem.
Indeed, 
a meaningful quantum hypothesis testing problem is not simple but composite, and have disturbance parameters. 
Since these two methods are very general, 
we can expect to apply our methods to such a complicated meaningful quantum hypothesis testing problem.

Our obtained optimal tests in the testing problems on quantum Gaussian states are implemented as measuring by the number measurement after performing the mean shift operator or the concentrating operator. 
In an optical system, the mean shift operation is approximately realized by using the beam splitter and the local oscillator (\cite{W-M}, p323). Similarly, the concentrating operation is realized by using the beam splitter. Therefore, the optimal tests given in this paper will be realizable since the number measurement can be prepared in an optical system.

We have focused on the symmetry in quantum hypothesis testing as one of the main interest of this paper
as well as existing studies with composite hypotheses\cite{Hay1,HMT}. 
But, in classical hypothesis testing, many optimal tests are derived without considering symmetry of the hypotheses. 
Therefore, in quantum hypothesis testing, 
it will be another challenging problem to present the existence conditions and 
the construction methods of obtained optimal tests without symmetry.

\section*{Acknowledgment}
The authors would like to thank Professor Fumio Hiai and Dr. Masaki Owari  for teaching a proof of the weak compactness theorem in Appendix and the realization of the mean shift operation in an optical system, respectively.
WK acknowledges support from Grant-in-Aid for JSPS Fellows No. 233283. MH is partially supported by a MEXT Grant-in-Aid for Young Scientists (A) No. 20686026 and Grant-in-Aid for Scientific Research (A) No. 23246071. 
The Center for Quantum Technologies is funded by the Singapore Ministry of Education and the National Research Foundation as part of the Research Centres of Excellence programme.

\appendix

\section{Appendix}

\subsection{Proof of quantum Hunt-Stein theorem}

\vspace{0.5em}

~

\hspace{-1.5em}\textit{{Proof of Theorem \ref{cpt.H-S}}:}
Since $\beta(\rho;T)=1-\mathrm{Tr}(\rho T)$ by the definition, we only has to show the following.
\[
\sup_{\tilde{T} \in {\cal T}_{\alpha,V}}\inf_{\xi \in \Xi} \mathrm{Tr}(\tilde{T}\rho_{\theta,\xi})
=\sup_{T\in {\cal T}_{\alpha}} \inf_{\xi \in \Xi}\mathrm{Tr}(T\rho_{\theta,\xi}) ,
\]
for any $\theta \in\Theta_1$ where ${\cal T}_{\alpha,V}$ is the set of tests of level $\alpha$
that are invariant concerning the (projective) representation $V$.

Let $T$ be a arbitrary test with level $\alpha$. There exists the invariant probability measure $\nu$ on $G$ since $G$ is compact. We denote the averaging test of $T$ with respect to $\nu$ by $\mathcal{L}(T):=\int_{g\in G}g\cdot T d\nu(g)$ where  $g\cdot X:=V_g X V_g^{\ast}$. Then $\tilde{T}$ is an invariant test concerning the representation $V$ with level $\alpha$ and holds the following inequality. For an arbitrary $\rho'$ satisfying $h(\rho)=h(\rho')$,
\[
\begin{array}{c}
\mathrm{Tr}(\rho_{\theta,\xi} \mathcal{L}(T))
 = \int_{g\in G}\mathrm{Tr}((g^{-1}\cdot\rho_{\theta,\xi}) {T})d\nu(g)
 \ge \int_{g'\in G}\displaystyle\inf_{g\in G}\mathrm{Tr}(\rho_{\theta,{g}^{-1}\cdot\xi} {T})d\nu(g')\\
 = \displaystyle\inf_{g\in G}\mathrm{Tr}(\rho_{\theta,{g}^{-1}\cdot\xi} {T})
 \ge \displaystyle\inf_{\xi\in\Xi}\mathrm{Tr}(\rho_{\theta,\xi} {T}).
\end{array}
\]
It implies 
\[
\displaystyle\inf_{\xi\in\Xi}\mathrm{Tr}(\rho_{\theta,\xi} \mathcal{L}(T)) 
\ge
\displaystyle\inf_{\xi\in\Xi}\mathrm{Tr}(\rho_{\theta,\xi} {T}).
\]
Since $\mathcal{L}(T)$ is an invariant test,
\[
\sup_{\tilde{T}\in {\cal T}_{\alpha,V}}\inf_{\xi\in\Xi}\mathrm{Tr}(\rho_{\theta,\xi} \tilde{T})
\ge \sup_{T \in {\cal T}_{\alpha}}\inf_{\xi\in\Xi}\mathrm{Tr}(\rho_{\theta,\xi} \mathcal{L}(T))
\ge \sup_{T \in {\cal T}_{\alpha}}\inf_{\xi\in\Xi}\mathrm{Tr}(\rho_{\theta,\xi} {T}).
\]

On the other hand, let $\tilde{T}$ be an arbitrary invariant test concerning the representation $f$ with level $\alpha$. Since 
\[
\inf_{\xi\in\Xi}\mathrm{Tr}(\rho_{\theta,\xi}\tilde{T})
\le \sup_{T}\inf_{\xi\in\Xi}\mathrm{Tr}(\rho_{\theta,\xi} {T}),
\]
clearly holds, we get
\[
\sup_{\tilde{T}\in {\cal T}_{\alpha,V}}\inf_{\xi\in\Xi}\mathrm{Tr}(\rho_{\theta,\xi}\tilde{T})
\le \sup_{T \in {\cal T}_{\alpha}}\inf_{\xi\in\Xi}\mathrm{Tr}(\rho_{\theta,\xi} {T}).
\]

\endproof
\vspace{0.5em}

The key to prove Theorem \ref{cpt.H-S} is to be able to average a test $T$ by the invariant probability measure $\nu$ since $G$ is compact. But if $G$ is non-compact, there may not exist the invariant probability measure on $G$ and a test $T$ may not be able to be taken the average. Quantum Hunt-Stein theorem for a non-compact group insists that we can get the same consequence if we assume the existence of the asymptotic invariant probability measure $\{\nu_j\}_{j=1}^{\infty}$ on $G$ and a certain condition.

We prepare the following theorem to prove quantum Hunt-Stein theorem for a non-compact group.

\begin{thm}$(\textbf{Weak compactness theorem})$\Label{WCT}~
Let $\mathcal{H}$ be a Hilbert space of at most countable dimension. Then $\mathcal{T}:=\{T|0\le T\le I\}$ is weak $\ast$ sequentially compact. That is, for any sequence $\{T_n\}\subset\mathcal{T}$ there exists a partial sequence $\{T_{m_k}\}\subset\{T_n\}$ and an operator $T\in\mathcal{T}$ such that $\displaystyle\lim_{i\to\infty}\mathrm{Tr}XT_{n_j}=\mathrm{Tr} XT$ for any trace class operator $X$.

\end{thm}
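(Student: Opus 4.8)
The plan is to prove this by an explicit diagonalization argument, viewing weak-$*$ compactness through the identification of $\mathcal{B}(\mathcal{H})$ with the dual of the trace-class operators $\mathcal{T}_1(\mathcal{H})$ under the pairing $\langle X,T\rangle=\mathrm{Tr}(XT)$. Since $0\le T\le I$ forces $\|T\|\le 1$, every element of $\mathcal{T}$ lies in the unit ball of $\mathcal{B}(\mathcal{H})$; and because $\mathcal{H}$ has at most countable dimension, I would exploit separability of the predual rather than invoke Banach--Alaoglu abstractly. First I would fix an orthonormal basis $\{e_i\}$ of $\mathcal{H}$ and note that for a given sequence $\{T_n\}\subset\mathcal{T}$ all the matrix entries $\langle e_i|T_n|e_j\rangle$ are bounded in modulus by $1$. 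A Cantor diagonal extraction then yields a subsequence $\{T_{n_k}\}$ for which $t_{ij}:=\lim_{k}\langle e_i|T_{n_k}|e_j\rangle$ exists for every pair $(i,j)$.

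Next I would reconstruct the limit operator from these entries. For finitely supported vectors $x=\sum_i a_i e_i$ and $y=\sum_j b_j e_j$, the uniform bound $\|T_{n_k}\|\le 1$ makes $\langle x|T_{n_k}|y\rangle=\sum_{i,j}\overline{a_i}b_j\langle e_i|T_{n_k}|e_j\rangle$ converge to $B(x,y):=\sum_{i,j}\overline{a_i}b_j t_{ij}$, and the estimate $|\langle x|T_{n_k}|y\rangle|\le\|x\|\|y\|$ passes to the limit to give $|B(x,y)|\le\|x\|\|y\|$. Hence $B$ extends to a bounded sesquilinear form on $\mathcal{H}$, and the Riesz representation of such forms produces a unique operator $T$ with $\langle x|T|y\rangle=B(x,y)$ and $\|T\|\le 1$. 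Passing the inequalities $0\le\langle x|T_{n_k}|x\rangle\le\|x\|^2$ to the limit on finitely supported $x$ and using density shows $0\le T\le I$, so $T\in\mathcal{T}$.

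The crux, and the step I expect to be the main obstacle, is upgrading this entrywise (weak-operator) convergence to genuine weak-$*$ convergence, that is, proving $\mathrm{Tr}(XT_{n_k})\to\mathrm{Tr}(XT)$ for every trace-class $X$ and not merely for finite-rank $X$. For finite-rank $X$ the claim is immediate, since $\mathrm{Tr}(XT_{n_k})$ is then a finite linear combination of the convergent matrix entries. For general $X$ I would use that finite-rank operators are dense in $\mathcal{T}_1(\mathcal{H})$ in trace norm: given $\varepsilon>0$, choose a finite-rank $X_\varepsilon$ with $\|X-X_\varepsilon\|_1<\varepsilon$ and run an $\varepsilon/3$ argument, bounding $|\mathrm{Tr}((X-X_\varepsilon)T_{n_k})|\le\|X-X_\varepsilon\|_1\|T_{n_k}\|\le\varepsilon$ and the corresponding term for $T$ by the uniform bounds $\|T_{n_k}\|,\|T\|\le 1$, while $\mathrm{Tr}(X_\varepsilon(T_{n_k}-T))\to 0$ by the finite-rank case. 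The only delicate point is that it is precisely the uniform operator-norm bound on the subsequence that lets the trace-norm approximation interchange with the limit; this is what makes the diagonal argument, which a priori controls only finitely many entries at a time, suffice to pin down the behaviour against all of $\mathcal{T}_1(\mathcal{H})$.
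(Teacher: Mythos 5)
Your proof is correct, but it takes a genuinely different route from the paper's. The paper argues abstractly: it identifies $\mathcal{B}(\mathcal{H})$ with the dual of the trace class $\mathcal{C}_1(\mathcal{H})$, invokes the Banach--Alaoglu theorem to get weak-$\ast$ compactness of the closed unit ball, observes that $\mathcal{T}$ is a closed subset of that ball, and then uses the countable dimension of $\mathcal{H}$ to conclude that the weak operator (equivalently, on bounded sets, weak-$\ast$) topology on $\mathcal{T}$ is metrizable, so that compactness yields sequential compactness. Your argument --- Cantor diagonalization over matrix entries, reconstruction of the limit through a bounded sesquilinear form and Riesz representation, direct verification of $0\le T\le I$ by passing quadratic-form inequalities to the limit, and the $\varepsilon/3$ upgrade from entrywise to weak-$\ast$ convergence via trace-norm density --- is in effect a self-contained proof of sequential Banach--Alaoglu for a separable predual: it avoids both the abstract compactness theorem and the metrizability lemma, at the cost of more bookkeeping, and it makes explicit exactly where separability and the uniform bound $\|T_{n_k}\|\le 1$ enter. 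One small imprecision to repair: your claim that $\mathrm{Tr}(XT_{n_k})\to\mathrm{Tr}(XT)$ is ``immediate'' for every finite-rank $X$ is not quite right, since a rank-one operator $|y\rangle\langle x|$ with $x,y$ not finitely supported in your fixed basis is not a finite combination of the matrix units $|e_i\rangle\langle e_j|$ whose pairings you controlled; the fix is one line --- the truncations $P_mXP_m=\sum_{i,j\le m}\langle e_i|X|e_j\rangle\,|e_i\rangle\langle e_j|$, with $P_m$ the projection onto the span of the first $m$ basis vectors, satisfy $\|P_mXP_m-X\|_1\to 0$ for trace-class $X$, and your $\varepsilon/3$ argument runs verbatim with these in place of arbitrary finite-rank approximants.
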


Some propositions about topologies on spaces of operators are invoked in the following discussions. Let $\mathcal{H}$, $\langle,\rangle_{\mathcal{H}}$ and $\mathcal{B}(\mathcal{H})$ be a Hilbert space of at most countable dimension, an inner product on $\mathcal{H}$, and the space of bounded operators on $\mathcal{H}$ respectively.

A sequence $\{A_n\}_{n=1}^{\infty}\subset\mathcal{B}(\mathcal{H})$ is said to converge weakly to $A\in\mathcal{B}(\mathcal{H})$ if $\langle A_n x,y\rangle_{\mathcal{H}}$ converges to $\langle Ax,y\rangle_{\mathcal{H}}$ for any $x,y\in\mathcal{H}$. The topology defined by the above convergence is called weak operator topology. A sequence $\{A_n\}_{n=1}^{\infty}\subset\mathcal{B}(\mathcal{H})$ is said to converge to $A\in\mathcal{B}(\mathcal{H})$ in the weak $\ast$ topology if $\mathrm{Tr}BA_n$ converges to $\mathrm{Tr}BA$ for any $B\in\mathcal{C}_1(\mathcal{H}))$. The topology defined by the above convergence is called weak $\ast$ topology. In general weak $\ast$ topology is defined on the dual Banach space of a Banach space. It is clear that operator weak topology is equivalent to weak $\ast$ topology in a bounded closed set in $\mathcal{B}(\mathcal{H})$.

\vspace{0.5em}
\hspace{-1.5em}\textit{{Proof of Theorem \ref{WCT}}:}
The closed unit ball $D$ in $\mathcal{B}(\mathcal{H})=\mathcal{C}_1(\mathcal{H})^{\ast}$ is compact with respect to weak $\ast$ topology due to Banach-Alaoglu theorem. Because $\mathcal{T}$ is closed subset of $D$, $\mathcal{T}$ is compact in weak $\ast$ topology. Since $\mathcal{H}$ have at most countable dimension, a bounded closed set $\mathcal{T}$ in $\mathcal{B}(\mathcal{H})$ is metrizable with respect to weak operator topology in $\mathcal{B}(\mathcal{H})$. Therefore $\mathcal{T}$ is sequentially compact in weak $\ast$ topology because weak $\ast$ topology is equivalent to operator weak topology in a bounded closed subset $\mathcal{T}$ of $\mathcal{B}(\mathcal{H})$. 

\endproof
\vspace{0.5em}

Quantum Hunt-Stein theorem is proven for a complete family of quantum states with non-compact action by using the weak compactness theorem.

\vspace{0.5em}
\hspace{-1.5em}\textit{{Proof of Theorem \ref{H-S}}:}
Set $T_n:=\int_G (g\cdot T)d\nu_n(g)\in\mathcal{T}_{\alpha}$ for $T \in \mathcal{T}_{\alpha}$ where $g\cdot X:=V_g X V_g^{\ast}$. Due to the weak compactness theorem, there exists a sequence of $\{T_n\}$ and a $\tilde{T}\in\mathcal{T}_{\alpha}$ such that $T_{n_j}$ converges to $\mathcal{L}(T)$ in weak $\ast$ topology. Fix $m\in\N$ and set
\[
B_k(\rho_{\theta,\xi})=\left\{h\in G\Big|\frac{k-1}{m}\le \mathrm{Tr}(\rho_{\theta,\xi} (h\cdot{T}))\le \frac{k}{m}\right\}
\]
for a state $\rho_{\theta,\xi}\in\mathcal{S}$ and $k\in\{0,1,\cdot\cdot\cdot,m\}$.
\[
\begin{array}{c}
\displaystyle\sum_{k=0}^m\frac{k-1}{m}\nu_{n_j}(B_k(\rho_{\theta,\xi}))
 \le 
\displaystyle\sum_{k=0}^m\int_{B_k(\rho_{\theta,\xi})}\mathrm{Tr}(\rho_{\theta,\xi} (h\cdot{T}))d\nu_{n_j}(h)\\
 =
\mathrm{Tr}(\rho_{\theta,\xi} \mathcal{L}(T))
 \le
\displaystyle\sum_{k=0}^m\frac{k}{m}\nu_{n_j}(B_k(\rho_{\theta,\xi}))
 =
\displaystyle\sum_{k=0}^m\frac{k-1}{m}\nu_{n_j}(B_k(\rho_{\theta,\xi}))+\frac{1}{m}
\end{array}
\]
implies
\[
|\mathrm{Tr}(\rho_{\theta,\xi} \mathcal{L}(T))-\displaystyle\sum_{k=0}^m\frac{k-1}{m}\nu_{n_j}(B_k(\rho_{\theta,\xi}))|\le \frac{1}{m}.
\]
Similarly,
\[
|\mathrm{Tr}(\rho_{\theta,\xi}
(g\cdot\mathcal{L}(T)))-\displaystyle\sum_{k=0}^m\frac{k-1}{m}\nu_{n_j}(B_k(\rho_{\theta,\xi})g^{-1})|\le \frac{1}{m}
\]
is derived. Therefore
\[
\begin{array}{ll}
&|\mathrm{Tr}(\rho_{\theta,\xi} (g\cdot\mathcal{L}(T)))-\mathrm{Tr}(\rho_{\theta,\xi} \mathcal{L}(T))|\\
\le&|\mathrm{Tr}(\rho_{\theta,\xi}
(g\cdot\mathcal{L}(T)))-\displaystyle\sum_{k=0}^m\frac{k-1}{m}\nu_{n_j}(B_k(\rho_{\theta,\xi})g^{-1})|\\
&+|\displaystyle\sum_{k=0}^m\frac{k-1}{m}\nu_{n_j}(B_k(\rho_{\theta,\xi})g^{-1})-\displaystyle\sum_{k=0}^m\frac{k-1}{m}\nu_{n_j}(B_k(\rho_{\theta,\xi}))|\\
 &+|\displaystyle\sum_{k=0}^m\frac{k-1}{m}\nu_{n_j}(B_k(\rho_{\theta,\xi}))-\mathrm{Tr}(\rho_{\theta,\xi} \mathcal{L}(T))|\\
 \le&\frac{2}{m}+\displaystyle\sum_{k=0}^m\frac{k-1}{m}|\nu_{n_j}(B_k(\rho_{\theta,\xi})g^{-1})-\nu_{n_j}(B_k(\rho_{\theta,\xi}))|.
\end{array}
\]
We get
\[
|\mathrm{Tr}(\rho_{\theta,\xi} (g\cdot\mathcal{L}(T)))-\mathrm{Tr}(\rho_{\theta,\xi} \mathcal{L}(T))|\le\frac{2}{m}
\]
as $j\to\infty$. By the arbitrariness of $m\in\N$,
\[
\mathrm{Tr}(\rho_{\theta,\xi} (g\cdot\mathcal{L}(T)))=\mathrm{Tr}(\rho_{\theta,\xi} \mathcal{L}(T))
\]
holds. It follows that
\[
g\cdot\mathcal{L}(T)=\mathcal{L}(T)
\]
since the completeness of the family of quantum states $\mathcal{S}$. The above equation implies that $\mathcal{L}(T)$ is $G$-invariant.

Because $T_{n_j}$ converges to $\mathcal{L}(T)$ in the weak $\ast$ topology, we get
\[
\displaystyle\lim_{j\to\infty}\mathrm{Tr}(\rho_{\theta,\xi} {T_{n_j}})=\mathrm{Tr}(\rho_{\theta,\xi} \mathcal{L}(T))~~~(\rho_{\theta,\xi}\in\mathcal{S}).
\]
Hence
\[
\begin{array}{c}
\mathrm{Tr}(\rho_{\theta,\xi} {T_{n_j}})
 =\int_G \mathrm{Tr}(\rho_{\theta,\xi} (g\cdot T))d\nu_{n_j}(g)
 =\int_G \mathrm{Tr}(\rho_{\theta,g^{-1}\cdot\xi}T)d\nu_{n_j}(g),
\end{array}
\]
and the above equation implies
\[
\displaystyle\inf_{g\in G}\mathrm{Tr}(\rho_{\theta,g^{-1}\cdot\xi} {T})\le \mathrm{Tr}(\rho_{\theta,\xi} {T_{n_j}}).
\]
As $j\to\infty$, we get
\[
\inf_{\xi\in\Xi}\mathrm{Tr}(\rho_{\theta,\xi} {T})\le
\displaystyle\inf_{g\in G}\mathrm{Tr}(\rho_{\theta,g^{-1}\cdot\xi} {T})\le \mathrm{Tr}(\rho_{\theta,\xi} \mathcal{L}(T)).
\]
It implies
\[
\inf_{\xi\in\Xi}\mathrm{Tr}(\rho_{\theta,\xi} \mathcal{L}(T)) \ge \inf_{\xi\in\Xi}\mathrm{Tr}(\rho_{\theta,\xi} {T}).
\]
Therefore
\[
\sup_{\tilde{T}\in\mathcal{T}_{\alpha,V}}\inf_{\xi\in\Xi}\mathrm{Tr}(\rho_{\theta,\xi} {\tilde{T}})
\ge \sup_{T\in\mathcal{T}_{\alpha}}\inf_{\xi\in\Xi}\mathrm{Tr}(\rho_{\theta,\xi} {\mathcal{L}(T)})
\ge \sup_{T\in\mathcal{T}_{\alpha}}\inf_{\xi\in\Xi}\mathrm{Tr}(\rho_{\theta,\xi} {T}).
\]

On the other hand, since
\[
\inf_{\xi\in\Xi}\mathrm{Tr}(\rho_{\theta,\xi} {\tilde{T}})\le\sup_{T\in\mathcal{T}_{\alpha}}\inf_{\xi\in\Xi}\mathrm{Tr}(\rho_{\theta,\xi} {T})
\]
clearly holds, we get
\[
\displaystyle\sup_{\tilde{T}\in\mathcal{T}_{\alpha,V}}\inf_{\xi\in\Xi}\mathrm{Tr}(\rho_{\theta,\xi} {\tilde{T}})\le\sup_{T\in\mathcal{T}_{\alpha}}\inf_{\xi\in\Xi}\mathrm{Tr}(\rho_{\theta,\xi} {T}).
\] 
\endproof
\vspace{0.5em}

\subsection{The nonexistence of a UMP test }

~

\begin{prp}

For the hypothesis testing problem $(\mathrm{H}\textrm{-}1)$, there exist $N\in\R_{> 0}$, ${R_0}\in\R_{>0}$ and level $\alpha\in [0,1]$ such that there is no UMP test.

\end{prp}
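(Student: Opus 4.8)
The plan is to reduce the problem to a single mode and then to contradict the existence of a UMP test by playing a phase-symmetric test off against a phase-matched one. First I would apply Lemmas \ref{lem1} and \ref{lem2} together with the concentrating operator $U_n$, exactly as in Section 5.1, to see that a UMP test for $(\mathrm{H}\textrm{-}1)$ exists if and only if one exists for the single-mode problem $(\mathrm{H}\textrm{-}\mathrm{A})$ with $R_0$ replaced by $\sqrt n R_0$. It therefore suffices to produce $N,R_0,\alpha$ for which $(\mathrm{H}\textrm{-}\mathrm{A})$ admits no UMP test.

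Next, supposing a UMP test $T^*$ of level $\alpha$ for $(\mathrm{H}\textrm{-}\mathrm{A})$ existed, I would show its power is forced to coincide with that of the phase-symmetric test $T_{R_0,N}^\alpha$. Averaging $T^*$ over the phase group gives $\bar T:=\int_{S^1}S_{e^{it}}T^*S_{e^{it}}^*\,dt$, which is again of level $\alpha$ (the null of $(\mathrm{H}\textrm{-}\mathrm{A})$ is phase-invariant, so each conjugate is of level $\alpha$) and $S^1$-invariant, and whose power at $re^{is}$ equals the orbit average $\int_{S^1}\gamma_{T^*}(re^{i(s-t)})\,dt$, independent of $s$. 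Since $T^*$ is UMP, $\gamma_{T^*}(re^{is})\ge\gamma_{\bar T}(re^{is})$ for every $s$; a function everywhere at least its own average over the circle must equal that average, so $\gamma_{T^*}(re^{is})$ is independent of the phase $s$. A UMP test is a fortiori UMP min-max, so for each $r>R_0$ the common value $\gamma_{T^*}(r)$ equals the min-max optimum $\sup_{T\in\mathcal T_\alpha}\inf_t\gamma_T(re^{it})$; by Theorem \ref{thmH-A} this optimum is attained by the phase-invariant test $T_{R_0,N}^\alpha$, whence $\gamma_{T^*}(r)=\gamma_{T_{R_0,N}^\alpha}(r)=:g(r)$ for all $r>R_0$.

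To reach a contradiction it then suffices to exhibit, for some $N,R_0,\alpha$ and some $r_1>R_0$, a single level-$\alpha$ test $T'$ with $\gamma_{T'}(r_1)>g(r_1)$, since a UMP $T^*$ would force $\gamma_{T^*}(r_1)\ge\gamma_{T'}(r_1)>g(r_1)=\gamma_{T^*}(r_1)$. For $T'$ I would take the phase-matched directional heterodyne test: rejecting when the heterodyne outcome $\xi$ satisfies $\re\xi>c$, i.e. $T':=\int_{\re\xi>c}\frac1\pi|\xi)(\xi|\,d\xi$, which satisfies $0\le T'\le I$ because $M_H$ is a POVM. Under $\rho_{\theta,N}$ the outcome is Gaussian with mean $(\re\theta,\im\theta)$ and covariance $\tfrac{N+1}2I_2$, so $\re\xi\sim\mathcal N(\re\theta,\tfrac{N+1}2)$; over the null $|\theta|\le R_0$ the rejection probability is maximized at $\theta=R_0$, and choosing $c=R_0+\sqrt{(N+1)/2}\,z_\alpha$ with $\Phi(z_\alpha)=1-\alpha$ makes $T'$ exactly of level $\alpha$.

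The main obstacle is the final strict inequality $\gamma_{T'}(r_1)=1-\Phi\!\big(z_\alpha-\sqrt{2/(N+1)}\,(r_1-R_0)\big)>g(r_1)$. Heuristically it should hold robustly: the phase-symmetric test $T_{R_0,N}^\alpha$ is min-max optimal but must spread its level over all phases, whereas $T'$ concentrates it along a single direction and so should be strictly more powerful against the phase-matched alternative $\theta=r_1$. This is the quantum analogue of the classical failure of a UMP test for $|\mu|\le R_0$ versus $|\mu|>R_0$ with $\mu\in\R^2$, and I would verify it in a convenient regime — for instance small $R_0$ with moderately large $N$, where $(\mathrm{H}\textrm{-}\mathrm{A})$ is close to the two-dimensional classical mean problem and $g(r_1)$ is dominated by the directional test. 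I would emphasize that the min-max comparison of Section 9 does not obstruct this, since there the heterodyne competitor is the phase-invariant (radial) test rather than the phase-matched one used here.
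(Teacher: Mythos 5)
Your first half is essentially the paper's own argument: the paper likewise assumes a UMP test $T_{UMP}$ exists, averages it over the phase group $\{S_{e^{it}}\}$, and uses the UMP property twice (once to dominate the averaged test, once with the invariance of the averaged test) to conclude that the power of $T_{UMP}$ is phase-independent and that the UMP invariant test $T^{\alpha}_{R_0,N}$ of Theorem \ref{thmH-A} would itself have to be UMP. Your version of this step (the circle-average inequality plus the identification $\gamma_{T^*}(r)=g(r)$ via UMP $\Rightarrow$ UMP min-max) is correct, and the preliminary reduction of $(\mathrm{H}\textrm{-}1)$ to $(\mathrm{H}\textrm{-}\mathrm{A})$ through $U_n$ and Lemma \ref{lem2} is legitimate (the "only if" direction of the existence transfer follows from the map $\Lambda_{\rho_{\xi_2}}$ in the proof of Lemma \ref{lem2}, which converts any level-$\alpha$ test on the joint system into one on the single mode with the same power).

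The genuine gap is in your second half: the strict inequality $\gamma_{T'}(r_1)=1-\Phi\bigl(z_\alpha-\sqrt{2/(N+1)}\,(r_1-R_0)\bigr)>g(r_1)$ is the entire content of the nonexistence claim, and you explicitly defer it ("I would verify it in a convenient regime"). Without a concrete $(N,R_0,\alpha,r_1)$ for which it is established, no contradiction has been derived. Note also that the proposition demands $R_0>0$, so the clean $R_0=0$ heuristic (directional power grows linearly in $r$, invariant power only quadratically) does not apply directly: at $r=R_0>0$ both tests have power exactly $\alpha$, and you need a first-order slope comparison at the boundary. That comparison can in fact be made to work --- the power of any $S^1$-invariant test is a smooth function of $r^2$ by Lemma \ref{prob}, so its $r$-derivative at $r=R_0$ is $O(R_0)$, while the heterodyne test's slope is $\phi(z_\alpha)\sqrt{2/(N+1)}$, bounded away from zero --- but to make this rigorous you must control how the threshold $k_{R_0}$ and the randomization constant $\gamma_{R_0}$ in $(\ref{k})$--$(\ref{gamma})$ vary with $R_0$, none of which appears in your sketch. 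The paper avoids this analytic work entirely: it closes the argument with an exact finite computation, taking $N=1$, the invariant test $T_1=\sum_{k\ge 1}|k\rangle\langle k|$, the non-invariant competitor $T_2=\frac14(|0\rangle+|1\rangle)(\langle 0|+\langle 1|)+\sum_{k\ge 2}|k\rangle\langle k|$, and $R_0=x_1$ the smaller root of $x^2-\frac{32}{3}x+\frac23=0$, for which $\mathrm{Tr}(\rho_{r,1}T_1)-\mathrm{Tr}(\rho_{r,1}T_2)=\frac{3}{32}e^{-r^2/2}\bigl(r^2-\frac{32}{3}r+\frac23\bigr)<0$ on $(x_1,x_2)$. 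If you want to keep your heterodyne route, you must either carry out the slope estimate uniformly in $R_0$ or, more simply, replace $T'$ by an explicitly computable test as the paper does.
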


\begin{proof}
Assume that there exists a UMP test $T_{UMP}$ with level $\alpha$ for the hypothesis testing problem $H_0:r\le {R_0}~vs.~H_1:r> {R_0}$. Then $T_{\alpha,R_0}^{[1],1}$ in Theorem $\ref{mean.min-max}$ is proven to be a UMP test as follows. Set a $S^1$-invariant test concerning the representation ${V}^{[1],1}$ as
\[
\tilde{T}:=\frac{1}{2\pi}\int_0^{2\pi}e^{i\eta\hat{N}}T_{UMP} e^{-i\eta\hat{N}}d\eta.
\]

For $r\mathrm{e}^{it}\in\C$ satisfying $r>{R_0}$,
\[
\begin{array}{c}
\mathrm{Tr}(\rho_{r\mathrm{e}^{it}, N}\tilde{T})
 =\frac{1}{2\pi}\int_0^{2\pi}\mathrm{Tr}(\rho_{e^{-i\eta}r\mathrm{e}^{it}, N}T_{UMP})d\eta
 \ge\displaystyle\min_{\eta\in[0, 2\pi)}\mathrm{Tr}(\rho_{e^{-i\eta}r\mathrm{e}^{it}, N}T_{UMP})\\
 =\mathrm{Tr}(\rho_{e^{-i\eta_0}r\mathrm{e}^{it}, N}T_{UMP})
 \ge\mathrm{Tr}(\rho_{e^{-i\eta_0}r\mathrm{e}^{it}, N}\tilde{T})
 =\mathrm{Tr}(\rho_{r\mathrm{e}^{it}, N}\tilde{T})\\
\end{array}
\]
holds, that is, the above inequality is actually the equality. In particular, since
\[
\frac{1}{2\pi}\int_0^{2\pi}\mathrm{Tr}(\rho_{e^{-i\eta}r\mathrm{e}^{it}, N}T_{UMP})d\eta
 =\displaystyle\min_{\eta\in[0, 2\pi)}\mathrm{Tr}(\rho_{e^{-i\eta}r\mathrm{e}^{it}, N}T_{UMP}),
\]
we get
\[
\mathrm{Tr}(\rho_{e^{i\eta}r\mathrm{e}^{it}, N}T_{UMP})=\mathrm{Tr}(\rho_{r\mathrm{e}^{it}, N}T_{UMP})
\]
for any $\eta\in[0, 2\pi)$. Therefore
\[
\mathrm{Tr}(\rho_{r\mathrm{e}^{it}, N}\tilde{T})=\mathrm{Tr}(\rho_{r\mathrm{e}^{it}, N}T_{UMP}).
\]
Hence the $S^1$-invariant test $\tilde{T}$ concerning the representation ${V}^{[1],1}$ is a UMP test, and the test $T_{\alpha,R_0}^{[1],1}$ in Theorem $\ref{mean.min-max}$ is also a UMP test.

We set as
\[
T_1=\displaystyle\sum_{k=1}^{\infty}|k\rangle\langle k|, 
 ~~~T_2=\frac{1}{4}(|0\rangle+|1\rangle)(\langle 0|+\langle 1|)+\displaystyle\sum_{k=2}^{\infty}|k\rangle\langle k|.
\]
 Let $0<x_1<x_2$ be the solutions of the quadratic equation $x^2-\frac{32}{3}x+\frac{2}{3}=0$, and ${R_0}=x_1, \alpha=Tr(\rho_{x_1\mathrm{e}^{it}, 1}T_1)$. Then for the hypothesis testing problem 
\[
H_0:r\in [0,x_1] ~vs.~ H_1:r\in (x_1,\infty) ~\hbox{with}~ \{\rho_{r\mathrm{e}^{it}, 1}\}_{r\mathrm{e}^{it}\in\C},
\]
 $T_1$ is the UMP $S^1$-invariant test concerning the representation ${V}^{[1],1}$ with level $\alpha$. Since 
\[
\begin{array}{lll}
\mathrm{Tr}(\rho_{1, r}T_1)-\mathrm{Tr}(\rho_{1, r}T_2)
&=&\frac{3}{32}e^{-\frac{r^2}{2}}(r^2-\frac{32}{3}r+\frac{2}{3})
\end{array}
\]
for $r\in\R_{> 0}$, $T_2$ is a test with level $\alpha$ and has smaller type II error than $T_1$ in $r\in(x_1, x_2)$ because 
\[
\mathrm{Tr}(\rho_{r\mathrm{e}^{it},1}T_1)<\mathrm{Tr}(\rho_{r\mathrm{e}^{it},1}T_2).
\]
Since a UMP $S^1$-invariant test concerning the representation ${V}^{[1],1}$ is also a UMP test, the above inequality is a contradiction. It means that there exists no UMP test for $(\mathrm{H}\textrm{-}1)$ when $N=1, {R_0}=x_1$ and $\alpha=\mathrm{Tr}(\rho_{R_0\mathrm{e}^{it}, 1}T_1)$.

\end{proof}

\end{document}